\theoremstyle{plain}
\newtheorem{theorem}{Theorem}[section]
\newtheorem{lemma}[theorem]{Lemma}
\theoremstyle{definition}
\newtheorem{definition}[theorem]{Definition}
\newtheorem{assumption}[theorem]{Assumption}
\theoremstyle{remark}
\newtheorem{example}[theorem]{Example}
\newenvironment{smalleralign}[1][\small]
 {\par\nopagebreak\leavevmode\vspace*{-\baselineskip}%
  \skip0=\abovedisplayskip
  #1%
  \def\maketag@@@##1{\hbox{\m@th\normalfont\normalsize##1}}%
  \abovedisplayskip=\skip0
  \align}
 {\endalign\ignorespacesafterend}
\title{Mean Field Correlated Imitation Learning}
\author{Zhiyu Zhao}
\affiliation{
  \institution{Institute of Automation, CAS\\School of Artificial Intelligence, UCAS}
  \city{Beijing}
  \country{China}}
\email{zhaozhiyu22@ia.ac.cn}
\author{Qirui Mi}
\affiliation{
  \institution{Institute of Automation, CAS\\School of Artificial Intelligence, UCAS}
  \city{Beijing}
  \country{China}}
\email{miqirui2021@ia.ac.cn}
\author{Ning Yang}
\affiliation{
  \institution{Institute of Automation, CAS\\School of Artificial Intelligence, UCAS}
  \city{Beijing}
  \country{China}}
\email{ning.yang@ia.ac.cn}
\author{Xue Yan}
\affiliation{
  \institution{Institute of Automation, CAS\\School of Artificial Intelligence, UCAS}
  \city{Beijing}
  \country{China}}
\email{yanxue2021@ia.ac.cn}
\author{Haifeng Zhang}
\affiliation{
  \institution{Institute of Automation, CAS\\School of Artificial Intelligence, UCAS}
  \city{Beijing}
  \country{China}}
\email{haifeng.zhang@ia.ac.cn}
\author{Jun Wang}
\affiliation{
  \institution{University College London}
  \city{London}
  \country{United Kingdom}}
\email{jun.wang@cs.ucl.ac.uk}
\author{Yaodong Yang}
\affiliation{
  \institution{Peking University}
  \city{Beijing}
  \country{China}}
\email{yaodong.yang@pku.edu.cn}
\begin{abstract}
We investigate multi-agent Imitation Learning (IL) within the framework of Mean Field Games (MFGs). Existing MFG IL algorithms assume that demonstrations are sampled from Mean Field Nash Equilibria (MFNE), which limits their adaptability to real-world scenarios. For instance, in traffic networks, public routing recommendations introduce correlated signals, a complexity that MFNE and other existing correlated equilibrium concepts do not capture. To bridge this gap, we propose the Adaptive Mean Field Correlated Equilibrium (AMFCE), a novel equilibrium concept that incorporates correlated signals. We establish the existence of AMFCE under mild conditions and demonstrate that MFNE is a subclass of AMFCE. Building on AMFCE, we introduce Mean Field Correlated Imitation Learning (MFCIL), a more comprehensive MFG IL framework, and provide theoretical guarantees on the quality of the recovered policy. Experimental results, including a real-world traffic flow prediction problem and simulations using the TaxAI environment, demonstrate the superiority of MFCIL over state-of-the-art IL baselines, showcasing its potential to better understand large population behavior under correlated signals.
\end{abstract}
\keywords{Imitation Learning, Mean Field Games, Correlated Equilibrium}
\newcommand{\BibTeX}{\rm B\kern-.05em{\sc i\kern-.025em b}\kern-.08em\TeX}
\begin{document}

%%% The following commands remove the headers in your paper. For final 
%%% papers, these will be inserted during the pagination process.

\pagestyle{fancy}
\fancyhead{}

%%% The next command prints the information defined in the preamble.

\maketitle 

%%%%%%%%%%%%%%%%%%%%%%%%%%%%%%%%%%%%%%%%%%%%%%%%%%%%%%%%%%%%%%%%%%%%%%%%

\section{Introduction}
Imitation Learning (IL) is a powerful framework to imitate expert policies from demonstrations \cite{DBLP:journals/csur/HusseinGEJ17}. 
However, in scenarios involving a large population of agents, existing IL algorithms face limitations due to the exponential increase in interactions and dimensionality, limiting their applicability in real-world situations including traffic management \cite{bazzan2009opportunities}, ad auctions \cite{guo2019learning} and economic activities among human \cite{DBLP:conf/atal/MiXS0ZW24}. 
Mean field theory offers a practical alternative to offer an analytically feasible and practically efficient approach for analyzing multi-agent games in systems with homogeneous agents \cite{guo2019learning, DBLP:conf/icml/YangLLZZW18}. 
In Mean Field Game (MFG) settings, the states of the entire population can be effectively summarized into an empirical state distribution due to homogeneity, reducing the problem to a game between a representative agent and an empirical distribution.

The current literature on MFG IL assumes that expert demonstrations are sampled from the classical Mean Field Nash Equilibrium (MFNE) \cite{DBLP:conf/iclr/YangYTXZ18, DBLP:conf/atal/ChenZLH22}. However, this framework lacks the generality needed to accommodate various real-world situations where external correlated signals influence the behavior of the entire population. For instance, this occurs when the decisions of all drivers in a traffic network are influenced by public routing recommendations dependent on the weather. 

Therefore, a more general equilibrium concept is needed before advancing in MFG IL. Inspired by the concept of Correlated Equilibrium (CE) for stateless game \cite{aumann1974subjectivity}, there are recent developments on Mean Field Correlated Equilibrium (MFCE) with state dynamics \cite{campi2022correlated, DBLP:journals/corr/abs-2208-10138}. 
% {\color{brown} 
However, existing MFCE concepts \cite{campi2022correlated, DBLP:conf/atal/MullerREPPLMPT22} assume that the fixed correlated signal is sampled from a distribution at the start of the game, allowing agents to observe future information and make decisions accordingly throughout the game. This implies that agents have complete knowledge of future recommendations or signals from the outset. This assumption is impractical in real-world situations where agents must base their decisions on currently available information, without access to future signals. For example, in tax policy implementation, the government (acting as a mediator) adjusts the signal based on evolving economic conditions rather than pre-determined information. In such scenarios, decision-makers (e.g., citizens or businesses) must adaptively respond to current and past signals without knowledge of future information.

In summary, the lack of a general MFG equilibrium concept that accommodates realistic informational constraints, where future correlated signals are unobservable, has limited the applicability of existing models to dynamic, real-world scenarios, thereby impeding the practical application of MFG IL methods. To address the limitations of existing MFCE concepts and MFG IL methods, we introduce a novel equilibrium concept called the ``Adaptive Mean Field Correlated Equilibrium (AMFCE)''. This concept accounts for the inherent uncertainty in real-world environments by assuming that agents can only base their decisions on past and present information, without access to future signals.
Building upon the AMFCE concept, we introduce a new IL framework, namely the ``Mean Field Correlated Imitation Learning (MFCIL)''. This introduction is accompanied by a theoretical guarantee of the quality of the policy recovered by this framework.
The generality and flexibility of AMFCE allow MFCIL framework to predict and explain more real-world scenarios.

Our contributions are summarized as follows:

\begin{itemize}
    \item We propose the concept of AMFCE and establish its existence under mild conditions. Compared with previous MFCE concepts, AMFCE allows agents to operate under more realistic informational constraints, where future signals are unobservable, thereby providing a more practical framework for modeling real-world scenarios.
    We prove that MFNE is a subclass of AMFCE, implying the broader applicability of MFCIL than the existing MFG IL frameworks.
    We provide an example in \cref{subsec:compare} to demonstrate the generality and flexibility of AMFCE over other MFCE concepts.
    Furthermore, we prove that the AMFCE is the limit of CE in the $N-$ player game when the population size approaches infinity. This highlights the broader applicability of AMFCE and the corresponding MFCIL framework over existing MFCE and MFNE concepts, providing a more practical and robust foundation for modeling real-world scenarios with unobservable future signals.
    \item Based on the general AMFCE concept, we propose MFCIL, the first IL framework capable of recovering CE policy in MFG. The inclusion of AMFCE enhances the capabilities of MFCIL, enabling it to surpass MFG IL algorithms based on MFNE, since it can imitate expert policies in a boarder range of scenarios. Moreover, our framework is also suitable for recovering MFNE policy as it is a subclass of AMFCE.
    \item We demonstrate the effectiveness of our proposed framework both theoretically and empirically. 
    Theoretical analysis guarantees the quality of the recovered policy and extends limited existing theoretical results on MFNE to a more general MFG equilibrium. 
    Our framework is {\it the first practical MFG IL framework with a polynomial dependency on the horizon for the performance difference}, surpassing existing practical MFG IL algorithms. 
    Empirical evidence highlights the superiority of our framework over state-of-the-art MFG IL baselines across various tasks, including a real-world traffic flow prediction problem and simulations conducted in the TaxAI environment \cite{DBLP:conf/atal/MiXS0ZW24}.
\end{itemize}

\section{Related work}

\paragraph{Multi-agent Imitation Learning}
Previous research in Multi-agent Imitation Learning (MAIL) has extended single-agent IL algorithms to Markov games \cite{DBLP:conf/nips/SongRSE18, DBLP:conf/icml/YuSE19, jeon2020scalable}. 
However, these algorithms encounter scalability challenges due to the curse of dimensionality. To address the scalability challenge, Yang et al. proposed a multi-type mean field approximation that approximates Nash equilibrium in Markov games \cite{DBLP:conf/nips/YangVC020}. 
Nevertheless, this approach does not consider the MFG and MFNE, thus failing to account for the interdependence between mean field flow and policy.% \paragraph{MFG imitation learning}
Yang et al. introduced a method for inferring the MFG model through Inverse Reinforcement Learning (IRL), under the assumption that the equilibrium underlying the demonstrations is the Mean Field Social Optimum (MFSO). This condition is applicable solely to fully cooperative settings \cite{DBLP:conf/iclr/YangYTXZ18}.
Chen et al. extended this method to mixed cooperative-competitive settings by assuming that the demonstrations are sampled from MFNE and its variant \cite{DBLP:conf/atal/ChenZLH22, chen2022agent}. 
Ramponi et al. proposed the solution concept named Nash Imitation Gap (NIG) and provided upper bounds of NIG for several different settings \cite{ramponi2023on}, but they focused on experts achieving a Nash equilibrium. 
% We extend the 
% Their analysis shows that the performance difference between policy recovered by existing practical MFG IL methods and expert policy is exponential in the horizon.
% We introduce a analogous metric to extend the NIG into the context of correlated equilibrium, offering an upper bound for policy recovered by our proposed framework.
% }
% \vspace{-15pt}
\paragraph{Mean Field Equilibria Concepts}
While existing MFG IL algorithms have not incorporated CE, there have been a few, albeit limited, works that introduce CE into the MFG.
Campi and Fischer assume that a mediator recommends the same stochastic policy to the entire population, resulting in a limited equilibrium set identical to the classic MFNE \cite{campi2022correlated}. 
Additionally, it is often more practical for the mediator to recommend actions rather than stochastic policies to individuals.
Muller et al. assume that the mediator recommends a deterministic policy (sampled from a distribution named `population recommendation' over the deterministic policy space) to each individual \cite{DBLP:journals/corr/abs-2208-10138}.
% This formulation is also rather limited in terms of describing the behaviors of many real-world applications and enabling sufficient flexibility of the population behavior. 
Both MFCE concepts assume that a fixed correlated signal (recommended policy in Campi and Fischer, and population recommendation in Muller et al.) is realized at the beginning of the game, allowing agents to observe future signals or recommendations.
% making agents do not need to adaptively update their belief on the correlated signal. 
% However, this assumption is impractical as real-world situations such as routing recommendations in traffic management depend on time-varying factors like weather.
However, this assumption is impractical in real-world scenarios where decisions, such as economic behavior, depend on real-time conditions, with future information remaining inaccessible.
To address these limitations, we propose the AMFCE concept, which extends the existing MFCE framework by allowing agents to operate without access to future signals, making it more applicable to dynamic, real-world environments.
% To address the challenges posed by correlated signals, we propose the AMFCE concept. This extends the existing MFCE concept by enabling the mediator to recommend actions to each agent based on real-time variables. 

This enhanced flexibility caters to real-world scenarios where varying correlated signals are introduced by the mediator. We also provide a concrete example demonstrating the greater generality of our equilibrium concept over that proposed by Muller et al. \cite{DBLP:journals/corr/abs-2208-10138} in \cref{sec:comparison to muller}. We also discuss the difference between AMFCE and MFNE with common noise \cite{perrin2020fictitious, carmona2016mean} in \cref{sec:common_noise}.

\section{Preliminaries}

\subsection{Classic mean field Nash equilibrium}
% This subsection introduces the classic framework of MFG and the concept of MFNE. 
The classic MFG models a game between a representative agent and the state distribution of all the other agents.
Denote $\mathcal{P}(\mathcal{X})$ as the set of probability distributions over the set $\mathcal{X}$ and denote $\mathcal{T} = \{0,1,\cdots,T\}$ as a set of time indexes. $T$ is the time horizon. The state space and the action space are denoted as $\mathcal{S}$ and $\mathcal{A}$, respectively. 
The population state distribution of a homogeneous $N$-agent game at time $t$ is $\mu_t(s) \triangleq \lim_{N\rightarrow \infty} \frac{1}{N}\sum_{i=1}^N \mathds{1}{\{s_t^i=s\}}$, where $s_t^i$ is the state of agent $i$ at time $t$, and $\mathds{1}_{\{e\}}$ is an indicator function (with value $1$ if expression $e$ holds and $0$ otherwise). The mean field flow is defined as $\pmb{\mu} = \{\mu_t\}_{t\in\mathcal{T}}$. 
The transition kernel for the state dynamics is denoted as $P:\mathcal{S}\times \mathcal{A}\times \mathcal{P}(\mathcal{S}) \rightarrow \mathcal{P} (\mathcal{S})$. 
At time $t$, after the representative player chooses its action $a_t$ according to policy $\pi_t$, it will receive a deterministic reward $r(s_t,a_t,\mu_t)$, and its state will evolve according to the current state $s_t\in\mathcal{S}$ and transition kernel $P(\cdot|s_t,a_t,\mu_t)$.  % , and the state space $\mathcal{S}$ is finite. 

For a fixed mean field flow $\pmb{\mu}$, the objective of the representative agent is to solve the following decision-making problem over all admissible policies $\pmb{\pi}=\{\pi_t\}_{t\in\mathcal{T}}$: %for all $k \in \mathcal{T}$:
\begin{align}\label{eq:mfg}
\begin{array}{ll}
\text{maximize}_{\pmb{\pi}} & \left.V_k(s,\pmb{\pi},\pmb{\mu})\triangleq\mathbb{E}\left[\sum\limits_{t=k}^T  \gamma^t r(s_t, a_t, {\color{black}\mu_t})\right|s_k=s\right]\\
\text{subject to} & s_{t+1}\sim P(\cdot|s_t,a_t,{\color{black}\mu_t}),\quad a_t\sim \pi_t(s_t),
\end{array}
\end{align}
where  $\gamma \in (0, 1]$ is the discount factor.
% \zhiyu{This form is adopt from \cite{guo2019learning}.}
The MFNE \cite{guo2019learning, DBLP:journals/corr/abs-2205-12944} is defined as the following.
\begin{definition}[MFNE]\label{nash2_stat} %($\pi^\star$,$\{\mu_{\mu}^\star\}_{\mu\in\Delta^{|\mathcal{S}|}$)
In classic MFG (\cref{eq:mfg}), a policy-population profile ($\pmb{\pi}^\star$, $\pmb{\mu}^\star$) is called an MFNE (under initial state distribution $\mu_0$) if 
\begin{enumerate}
    \item For any policy $\pmb{\pi}$ and any initial state $s\sim \mu_0$, $V_0\left(s,\pmb{\pi}^\star,{\color{black}\pmb{\mu}^\star}\right)\geq V_0\left(s,\pmb{\pi},\pmb{\mu}^\star\right).$

    \item  { (Population side) The mean field flow $\pmb{\mu}^*$ satisfies 
\begin{equation}
    \mu^*_{t}(\cdot) = \sum_{s\in\mathcal{S},a\in \mathcal{A}} P(\cdot|s,a,\mu_{t-1}^*)\pi^*_{t-1}(a|s)\mu^*_{t-1}(s),
\end{equation} 
with initial condition $\mu^*_0=\mu_0$.}
%${\rm Law}(s_t)= {\mu_t^{\star}}$ for all $t\geq 0$, where $\{s_t,a_t\}_{t=0}^{T}$ is the dynamics under the policy  $\pmb{\pi}^\star$ starting from $s_0 \sim \mu_0^{\star} = \mu_0$, with $a_t\sim\pi_t^\star(s_t)$, $s_{t+1}\sim P(\cdot|s_t,a_t,{\color{black}\mu_t^\star})$.
\end{enumerate}
\end{definition}

%Here { ${\rm Law}(s_t)$} represents the distribution/law of $s_t$ {which can be derived by using Kolmogorov equation. Namely, ${\rm Law}(s_t):=\nu_t$ such that $\nu_t(\cdot) =\sum_{s\in\mathcal{S},a\in \mathcal{A}} P(\cdot|s,a,\mu_{t-1}^*)\pi^*_{t-1}(a)\nu_{t-1}(s)$}.
The single player side condition captures the optimality of $\pmb{\pi}^\star$ when the mean field flow $\pmb{\mu}$ is fixed. The population side condition ensures the ``consistency" of the solution by guaranteeing that the state distribution flow of the single player matches the mean field flow $\pmb{\mu}^{\star}$.  % \triangleq \{\mu_t^\star\}_{t=0}^{T}$.

\subsection{Imitation Learning}
Let $\mathcal{M} = (\mathcal{S}, \mathcal{A}, P, r, \mu_{0}, \gamma, T)$ represent a single-agent Markov decision process (MDP). In this notation, $\mathcal{S}$ and $\mathcal{A}$ denote the state and action spaces, respectively. The transition kernel for the state dynamics is denoted by $P: \mathcal{S} \times \mathcal{A} \rightarrow \mathcal{P}(\mathcal{S})$. The reward function is denoted as $r: \mathcal{S} \times \mathcal{A} \rightarrow \mathbb{R}$. The initial distribution of the initial state $s_0$ is denoted as $\mu_{0}$. The discount factor is represented by $\gamma \in (0, 1]$, and $T$ corresponds to the horizon.
The expected return of a policy $\pi$ is defined as $J(\pi) = \mathbb{E}\left[\sum_{t=0}^{T} \gamma^{t} r(s_{t}, a_{t})\right]$, where the expectation is taken with respect to $s_0 \sim \mu_{0}$, $a_t \sim \pi(\cdot|s_t)$ and $s_{t+1} \sim P(\cdot|s_t, a_t)$.

In the IL setting, a set of expert demonstrations sampled from expert policy $\pi^{E}$ is provided. The goal of IL is to recover the expert policy $\pi^{E}$ using the expert demonstration.

IRL is a subclass of IL and it solves the problem in two steps. It first finds a reward function $\tilde{r}=\max_{r}\big(\min_{\pi}-H(\pi)-J(\pi)\big)+J(\pi^{E})$ that rationalizes the expert policy $\pi^{E}$, where $H(\pi)\triangleq\mathbb{E}_\pi[-\log\pi(a|s)]$ is the causal entropy of the policy $\pi$ \cite{DBLP:conf/cdc/BloemB14}.
Then a recovered policy is learned from the reward function $\tilde{r}$ by a reinforcement learning method.

Generative Adversarial Imitation Learning (GAIL) \cite{DBLP:conf/nips/HoE16} treats IL as a mini-max game and is trained using a Generative Adversarial Network (GAN). GAIL introduces a discriminator $D_{\omega}$ to differentiate state-action pairs from $\pi^{E}$ and other policies. The recovered policy $\pi_{\theta}$, parameterized by $\theta$, plays the role of the generator. 
It aims at generating state-action pairs that are difficult for $D_{\omega}$ to differentiate. The objective function of GAIL is thus defined as
\begin{equation}
    \max_\theta \min _w \mathbb{E}_{(s, a) \sim \pi_\theta}\left[\log \left(D_{\omega}(s, a)\right)\right]+\mathbb{E}_{(s, a) \sim \pi^{E}}\left[\log \left(1-D_{\omega}(s, a)\right)\right],
\end{equation}
where $\mathbb{E}_{(s, a)\sim\pi_{\theta}}$ is expectation taken with respect to $s_{t+1}\sim P(\cdot|s_{t}, a_{t})$, $a_{t}\sim\pi_{\theta}(\cdot|s_{t})$, $s_{0}\sim\mu_{0}$ and $\mathbb{E}_{(s, a)\sim\pi^{E}}$ is expectation taken with respect to $s_{t+1}\sim P(\cdot|s_{t}, a_{t})$, $a_{t}\sim\pi^{E}(\cdot|s_{t})$, $s_{0}\sim\mu_{0}$.

\section{Problem formulation}\label{sec:amfce}
In this section, we introduce the AMFCE and compare AMFCE with existing MFCE concepts.
Then we establish the existence of AMFCE under mild conditions and demonstrate that the solution set of AMFCE is richer than the well-known MFNE. 
We also prove that AMFCE in the mean field game approximates the correlated equilibrium in the finite agent setting in \cref{sec:relation}.
\subsection{Adaptive Mean Field Correlated Equilibrium}
% To account for the correlations introduced by routing recommendation in the traffic network example mentioned in Section \ref{sec:intro}, we introduce a mediator who observes a correlated signal $z_t\in\mathcal{Z}$ at each time $t$. $z_t$ may represent some global conditions such as the weather on day $t$ for the traffic network example. 
Before the introduction of the AMFCE, we first introduce the concepts of correlation device \cite{DBLP:conf/atal/MullerREPPLMPT22} and behavioral policy.
\begin{definition}[Correlation Device]
    The per-step correlation device $\rho_t\in \mathcal{P}(\mathcal{Z})$ is a distribution over the finite correlated signal space $\mathcal{Z}$, from which the correlated signal $z_t$ is sampled at time $t$. We denote $\pmb{\rho}=\{\rho_t\}_{t=0}^{T}$ as correlation device over the entire horizon.
\end{definition}

\begin{definition}[Behavioral Policy]
    For each time $t$, the per-step behavioral policy $\pi_t: \mathcal{Z} \times \mathcal{S} \rightarrow \mathcal{P}(\mathcal{A})$ maps the state $s$ and correlated signal $z$ to a distribution over the action space $\mathcal{A}$. 
    We denote $\pmb{\pi} = \{\pi_t\}_{t=0}^T$ as the behavioral policy over the entire horizon. The term `policy' may be used to replace `behavioral policy' without confusion.
\end{definition}
 
At each time step $t$, a correlated signal $z_t$ is sampled from the per-step correlation device $\rho_t$. Subsequently, for each agent at state $s_t$, a mediator independently samples an action $a_t$ from the per-step behavioral policy $\pi_t(\cdot|s_t, z_t)$ as the recommended action for the agent. Importantly, this recommended action $a_t$ is {\it private}, accessible only to the respective agent.
% Mathematically, denote $\mathcal{I}_t =\{ \rho_t, a_t, \pi_t,s_{t}, z_{t-1}, \mu_{t-1}\}$ as the information available to the agent at the beginning of time $t$, and $\mathcal{I}_0 =\{\rho_0, a_0, \pi_0, s_{0}, \mu_0\}$. Note that the agent only observes the functional form of $\pi_t$ but {\it cannot observe} the correlated signal $z_t$ nor the recommended actions for other agents.  Based on the information $\mathcal{I}_t$, the agent will take an action $a_t'$ (which may be different from the recommended action), and then the agent at state $s_t$ will transit to the next state according to  distribution $P(\cdot|s_t, a_t', \mu_t)\in \mathcal{P}(\mathcal{S})$ given current $\mu_t$, which follows:
Mathematically, denote $\mathcal{I}_t =\{ \rho_t, a_t, \pi_t, s_{t}, \mu_{t}\}$ as the information available to the agent at the beginning of step $t$.\footnote{$\mathcal{I}_t$ serves as a criterion for evaluating whether a policy and correlation device constitute an AMFCE, similar to how the population distribution is used in typical MFNE concepts. The presence of $\mu_t$ does not imply agents have knowledge of the population distribution. Neither the policy $\pi(a|s, z)$ nor the correlation device $\rho(z)$ relies on precise population distribution information.} Note that the agent only observes the functional form of $\pi_t$ but {\it cannot observe} the correlated signal $z_t$ nor the recommended actions for other agents.  
Therefore, the agent has to {\it predict} the correlated signal $z_t$ based on the local information $\mathcal{I}_t$:
\begin{equation}\label{predict_z}
    \rho^{\rm pred}_t(z_t=z|\mathcal{I}_t)=\frac{\rho_t(z)\pi_t(a_t|s_t, z)}{\sum_{z'\in\mathcal{Z}}\rho_t(z')\pi_t(a_t|s_t, z')}.
\end{equation}
The agent can then update the prediction for the population state distribution of the next time step for each possible signal $z$ using the McKean-Vlasov equation:
\begin{equation}
    \begin{aligned}
        \mu^{\rm pred}_{t+1}(\cdot|\mathcal{I}_t, z) &= \sum_{a \in \mathcal{A}}\sum_{s\in \mathcal{S}} \mu_{t}(s)P(\cdot|s,a,\mu_{t})\pi_{t}(a|s,z)\triangleq \Phi(\mu_{t}, \pi_{t},z).
    \end{aligned}
\end{equation}
Given the population state distribution $\mu$, the agent will choose action $a$ to maximize the action value function $Q_{t}^{\pmb{\pi}}(s, a, \mu, z; \pmb{\pi}')$:
\begin{equation}
    Q_{t}^{\pmb{\pi}}(s, a, \mu, z; \pmb{\pi}')
    = r(s, a, \mu) + \gamma\mathbb{E}_{\pmb{\pi}, \pmb{\pi}', \pmb{\rho}}\bigg[\sum_{i=t+1}^{T}\gamma^{i-t-1}r(s_i, a_i, \mu_i)\bigg].
\end{equation}
% This implies that, given $\mu_{t-1}$ and $\pi_{t-1}$, $\mu_t$ is fully determined by $z_{t-1}$. 
% After receiving the recommended action $a_t$, 
% Based on the available information $\mathcal{I}_t$, 
%Denote $\Phi(\mu_{i-1}, \pi_{i-1}, z_{i-1})=\sum_{a\in\mathcal{A}}\sum_{s\in\mathcal{S}}\mu_{i-1}(s)P(\cdot|s, a, \mu_{i-1})\pi_{i-1}(a|s, z_{i-1})$. 
% {[why $i-1$ on the right hand side?]}
% The policy sequence is $\pmb{\pi} = \{\pi_t\}_{t=0}^T$.
% Here, the expectation is conditioned on $(s_t, a_t, \mu_t, z_t)=(s, a, \mu, z)$. 
The action value function is the expected return of an agent when the agent follows policy $\pmb{\pi}$ while the population adheres to policy $\pmb{\pi}'$ under the correlation device $\pmb{\rho}$, conditioned on $(s_t, a_t, \mu_t, z_t)=(s, a, \mu, z)$.
Unless otherwise stated, the expectation $\mathbb{E}_{\pmb{\pi}, \pmb{\pi}', \pmb{\rho}}$ is taken with respect to $z_t\sim\rho_t(\cdot)$, $s_t\sim P(\cdot|s_{t-1}$, $a_{t-1}, \mu_{t-1}$), $a_t\sim\pi_t(\cdot|s_t, z_t)$, $\mu_t=\Phi(\mu_{t-1}, \pi_{t-1}', z_{t-1})$.

% The agent at state $s_t$ will greedily take an action $a_t'$ which may be different from the recommended action and then transit to the next state according to transition kernel $P(\cdot|s_t, a_t', \mu_t)$ given current $\mu_t$.  %, which follows:
% \begin{align*}
%     \mu_t (\cdot) = \sum_{a \in \mathcal{A}}\sum_{s\in \mathcal{S}} \mu_{t-1}(s)P(\cdot|s,a,\mu_{t-1})\pi_{t-1}(a|s,z_{t-1}).
% \end{align*}

To introduce the concept of AMFCE, we define the set of swap function $$\mathcal{U}\triangleq\{u:\mathcal{A}\to\mathcal{A}\},$$ namely $u$ is a function that modifies an action $a$ to an action $u(a)$. 
% \begin{definition}(Swap Function)
%     The set of swap functions $\mathcal{U}\triangleq\{u:\mathcal{A}\to\mathcal{A}\}$. For any $u\in\mathcal{U}$, it is a function that modifies action $a$ to action $u(a)$.
% \end{definition}
% To introduce the concept of AMFCE, we define the set of swap function $\mathcal{U}\triangleq\{u:\mathcal{A}\to\mathcal{A}\}$ as the set of function $u$ that modifies an action $a$ to an action $u(a)$.
Let $\Delta_t(s, \mu, u; \pmb{\pi}, \pmb{\rho})=\mathbb{E}\big[Q_t^{\pmb{\pi}}(s, u(a), \mu, z; \pmb{\pi}) - Q_t^{\pmb{\pi}}(s, a, \mu, z; \pmb{\pi})\big]$ denote the difference in the action value function when the agent takes action $u(a)$ in response to a recommendation $a$, where $u\in\mathcal{U}$. The expectation is taken with respect to $z\sim\rho_t(\cdot)$, $a\sim\pi_t(\cdot|s, z)$.
\begin{definition}[AMFCE]\label{defMFCE}
    The profile $(\pmb{\pi}^{\star}, \pmb{\rho}^{\star})$, comprising the behavioral policy $\pmb{\pi}^{\star}=\{\pi_t^{\star}\}_{t=0}^{T}$ and the time-varying correlation device $\pmb{\rho}^{\star}=\{\rho_t^{\star}\}_{t=0}^{T}$, is an AMFCE if
    \begin{enumerate}
        \item (Single agent side) No agent has an incentive to unilaterally deviate from the recommended action after predicting the $z$ by \cref{predict_z}, i.e.
        $
            \Delta_t(s, \mu_t^{\star}, u; \pmb{\pi}^{\star}, \pmb{\rho}^{\star})\le0, \quad 
            \forall u\in \mathcal{U}, \forall s\in\mathcal{S}, \forall t\in\mathcal{T}.
        $
        % where $\mu_t^{\star}=\Phi(\mu_{t-1}^{\star}, \pi_{t-1}, z_{t-1})$,  given the correlated signals before time step $t$ $\{z_i\}_{i=0}^{t-1}$.
        % $z_{t-1}\sim\rho_{t-1}(\cdot)$.
        \item (Population side) 
        The mean field flow $\pmb{\mu}^*$ satisfies $\mu_{t}^{*}(\cdot)=\Phi(\mu_{t-1}^{\star}, \pi_{t-1}^{\star}, z_{t-1})$, given the correlated signals $\{z_{t}\}_{t=0}^{T}$ and initial condition $\mu_{0}^{*}=\mu_{0}$.
        % ${\rm Law}(s_t)= {\mu_t^{\star}}$ for all $t\geq 0$, where $\{s_t,a_t\}_{t=0}^{T}$ is the dynamics under  $\pmb{\pi}^\star$ and $\pmb{\rho}$ starting from $s_0 \sim \mu_0^{\star} = \mu_0$, with $z_t\sim\rho_t(\cdot)$, $a_t\sim\pi_t^{\star}(\cdot|s_t, z_t)$, $s_{t+1}\sim P(\cdot|s_t, a_t, \mu_t)$.
    \end{enumerate}
\end{definition}
% \subsection{Example for demonstration }\label{case}
\subsection{Difference between AMFCE and MFCE}\label{subsec:compare}
In MFCE, the correlated signal is typically assumed to be fully observable by the agents at the start of the game. The policy corresponding to this correlated signal sequence is then recommended to each agent. meaning that agents have access to the entire sequence of signals across the game’s time horizon. This implies that agents can foresee all future signals from the outset. Consequently, the agent can infer or observe the correlated signal $z$ at the start of the game without the need for adaptive updates to its belief. In contrast, AMFCE assumes that agents can only observe past signals, with no access to future signals. This assumption better reflects real-world scenarios where decision-makers must react to unfolding information rather than relying on complete foresight. In the AMFCE framework, correlated signals are realized at each time step. Following the sampling of the correlated signal $z_t$ at time $t$ from the time-varying correlation device $\rho_t$, the action $a_t$ is sampled from the policy $\pi_t(a_t|s_t, z_t)$ for each agent at state $s_t$, serving as a private recommendation. Agents can observe only the recommended action $a_t$ and cannot directly observe the correlated signal $z_t$. Since the correlated signal $z_t$ cannot be realized until time $t$, the agent must \textit{adaptively} update its belief in the correlated signal.
% The difference between AMFCE and MFCE is illustrated in \cref{fig:AMFCE-graph} using the graphical model. 
% In the MFCE framework, the correlated signal $z$ is sampled from a time-independent distribution at the start of the game. 
Below, we provide an example to clarify why AMFCE is more practical than existing MFCE concepts.

% \begin{figure}[t]
%     \centering
%     \begin{subfigure}[t]{0.49\linewidth}
%     \centering
%     \includegraphics[width=\linewidth]{}
%     % \caption{Graphical model of AMFCE.}
%     \end{subfigure}
%     \begin{subfigure}[t]{0.49\linewidth}
%     \centering
%     \includegraphics[width=\linewidth]{}
%     % \caption{Graphical model of MFCE.}
%     \end{subfigure}
%     \caption{The dependence relationships of variables in AMFCE and MFCE are shown respectively. The correlated signals in AMFCE are realized at each time step, while the correlated signal in MFCE is fixed over the horizon.}
%     \Description{The dependence relationships of variables in AMFCE and MFCE are shown respectively. The correlated signals in AMFCE are realized at each time step, while the correlated signal in MFCE is fixed over the horizon.}
%     \label{fig:AMFCE-graph}
% \end{figure}
\begin{table}[t]
    \caption{Comparison of Equilibrium Concepts}
    \label{tab:equilibrium_comparison}
    \centering
    \begin{tabular}{ccc}
      \hline
      \multirow{2}{*}{Equilibrium Concept} & Supports & No Access to \\
      & Correlated Signals & Future Information \\ 
      \hline
      MFNE & \ding{55} & \ding{55} \\ 
      MFCE & \ding{51} & \ding{55} \\
      AMFCE & \ding{51} & \ding{51} \\
      \hline
    \end{tabular}
\end{table}

\begin{example}\label{eg:finite}
    % \textcolor{brown}{
    A traffic network comprises three cities. Tourists located in city $C$ are expected to visit city $L$ or $R$ during a two-day vacation period. These tourists rely on an online mapping application that suggests either city $L$ or $R$ based on real-time weather information $z$. This scenario can be modeled as a MFG with a state space $\mathcal{S}=\{C, L, R\}$ and an action space $\mathcal{A}=\{L, R\}$. The initial population state distribution is given by $\mu_0(C)=1$, and the reward function is defined as $r(s, a, \mu)=\mathds{1}_{\{s=L\}}\mu(L)+\mathds{1}_{\{s=R\}}\mu(R)$.
    Due to the possibility of unexpected road closures, the environment transition kernel is non-deterministic.
    The environment transition kernel is shown in the \cref{tab:transition}.
    % \begin{align*}
    %     &P(s_{1} = R\mid s_{0} = C, a = R) = 1,
    %     &P(s_{1} = L\mid s_{0} = C, a = R) = 0,\\
    %     &P(s_{1} = R\mid s_{0} = C, a = L) = 0,
    %     &P(s_{1} = L\mid s_{0} = C, a = L) = 1,\\
    %     &P(s_{1} = R\mid s_{0} = L, a = R) = \frac{3}{4},
    %     &P(s_{1} = L\mid s_{0} = L, a = R) = \frac{1}{4},\\
    %     &P(s_{1} = L\mid s_{0} = L, a = L) = 1,
    %     &P(s_{1} = R\mid s_{0} = L, a = L) = 0,\\
    %     &P(s_{1} = L\mid s_{0} = R, a = L) = \frac{3}{4},
    %     &P(s_{1} = R\mid s_{0} = R, a = L) = \frac{1}{4},\\
    %     &P(s_{1} = R\mid s_{0} = R, a = R) = 1,
    %     &P(s_{1} = L\mid s_{0} = R, a = R) = 0.
    % \end{align*}  
    % }
\end{example}
% \begin{example}\label{eg:finite}
%     % \textcolor{brown}{
%     A traffic network comprises three cities. Tourists located in city $C$ are expected to visit city $L$ or $R$ during a two-day vacation period. These tourists rely on an online mapping application that suggests either city $L$ or $R$ based on real-time weather information $z$. This scenario can be modeled as a MFG with a state space $\mathcal{S}=\{C, L, R\}$ and an action space $\mathcal{A}=\{L, R\}$. The initial population state distribution is given by $\mu_0(C)=1$, and the reward function is defined as $r(s, a, \mu)=\mathds{1}_{\{s=L\}}\mu(L)+\mathds{1}_{\{s=R\}}\mu(R)$.
%     Due to the possibility of unexpected road closures, the environment transition kernel is non-deterministic. The transition kernel is shown in the \cref{tab:transition}.
    
    % \begin{align*}
    %     &P(s_{1} = R\mid s_{0} = C, a = R) = 1,
    %     &P(s_{1} = L\mid s_{0} = C, a = R) = 0,\\
    %     &P(s_{1} = R\mid s_{0} = C, a = L) = 0,
    %     &P(s_{1} = L\mid s_{0} = C, a = L) = 1,\\
    %     &P(s_{1} = R\mid s_{0} = L, a = R) = \frac{3}{4},
    %     &P(s_{1} = L\mid s_{0} = L, a = R) = \frac{1}{4},\\
    %     &P(s_{1} = L\mid s_{0} = L, a = L) = 1,
    %     &P(s_{1} = R\mid s_{0} = L, a = L) = 0,\\
    %     &P(s_{1} = L\mid s_{0} = R, a = L) = \frac{3}{4},
    %     &P(s_{1} = R\mid s_{0} = R, a = L) = \frac{1}{4},\\
    %     &P(s_{1} = R\mid s_{0} = R, a = R) = 1,
    %     &P(s_{1} = L\mid s_{0} = R, a = R) = 0.
    % \end{align*}  
    % }
% \end{example}
\begin{table}
    \caption{The transition probability $P(s_{t+1}|s_t, a_t)$ in the \cref{eg:finite}. $P(s_{t+1}=R|s_t, a)=1-P(s_{t+1}=L|s_t, a)$.}    
    \label{tab:transition}
    \begin{center}
    %     \begin{tabular}{ccccc}
    %         \hline
    %         \multirow{2}{*}{$s_t$} & \multirow{2}{*}{$a$} & \multirow{2}{*}{$s_{t+1}$} & \multicolumn{2}{c}{Transition Probability} \\ \cline{4-5}
    %                                &                      &                        & $L$                        & $R$                       \\ \hline
    %         $C$                      & $R$                    & $L$                      & 0                          & 1                         \\
    %         $C$                      & $L$                    & $L$                      & 1                          & 0                         \\
    %         $L$                      & $R$                    & $L$                      & $1/4$              & $3/4$             \\
    %         $L$                      & $L$                    & $L$                      & 0                          & 1                         \\
    %         $R$                      & $R$                    & $L$                      & 0                          & 1                         \\
    %         $R$                      & $L$                    & $L$                      & $1/4$              & $3/4$             \\ \hline
    %     \end{tabular}
        \begin{tabular}{llllllll}
        \hline
        \multicolumn{2}{l}{$s_t$}                 & $C$ & $C$ & $L$ & $L$ & $R$ & $R$ \\ \hline
        \multicolumn{2}{l}{$a_t$}                   & $L$ & $R$ & $L$ & $R$ & $L$ & $R$ \\ \hline
        \multicolumn{2}{l}{$P(s_{t+1}=L|s_t, a_t)$} & 1   & 0   & 1   & 1/4 & 3/4 & 0   \\ \hline
        \end{tabular}
    \end{center}
\end{table}
\begin{table}
    \caption{The AMFCE policy in the \cref{eg:finite}. $\pi(a=R|s, z)=1-\pi(a=L|s, z)$.}
    \label{tab:policy}
    \begin{center}
        \begin{tabular}{clcccccc}
            \hline
            \multicolumn{2}{c}{$s$}             & $C$   & $C$   & $L$ & $L$   & $R$   & $R$ \\ \hline
            \multicolumn{2}{c}{$t$}             & $0$   & $0$   & $1$ & $1$   & $1$   & $1$ \\ \hline
            \multicolumn{2}{c}{$z_t$}             & $0$   & $1$   & $0$ & $1$   & $0$   & $1$ \\ \hline
            \multicolumn{2}{c}{$\pi_t(a=L|s, z_t)$} & $2/3$ & $1/3$ & $1$ & $1/9$ & $8/9$ & $0$ \\ \hline
        \end{tabular}
    \end{center}
\end{table}
% \textcolor{brown}{
% The following recommendations are given by the online mapping application in an AMFCE.
The online mapping application recommends a city for each agent to visit in the following way.
At time $t\in\mathcal{T}=\{0, 1\}$, a correlated signal $z$ is sampled from the correlated signal space $\mathcal{Z}=\{0, 1\}$ with equal probabilities, i.e., $\rho_t(z=0)=\rho_t(z=1)=0.5$. The online mapping application recommends an action for each agent based on the observed value of $z$ and the behavioral policy $\pmb{\pi}$ defined in the \cref{tab:policy}.
It can be verified that tourists have no incentive to deviate from the recommendation, so an AMFCE is achieved. 
% \begin{align*}
% \pi(a=L|s=C, z=0)&=2/3,
% &\pi(a=R|s=C, z=0)&=1/3,\\
% \pi(a=L|s=C, z=1)&=1/3,
% &\pi(a=R|s=C, z=1)&=2/3,\\
% \pi(a=L|s=L, z=0)&=1,
% &\pi(a=R|s=L, z=0)&=0,\\
% \pi(a=L|s=L, z=1)&=1/9,
% &\pi(a=R|s=L, z=1)&=8/9,\\
% \pi(a=L|s=R, z=0)&=8/9,
% &\pi(a=R|s=R, z=0)&=1/9,\\
% \pi(a=L|s=R, z=1)&=0,
% &\pi(a=R|s=R, z=1)&=1.
% \end{align*}

% }

% \textcolor{brown}{
Under the MFCE setting, the correlated signal sequence is fully observable. If the sequence is \(\pmb{z} = \{0, 1\}\), the agent can achieve a higher return by selecting $a_0 = L$ and $a_1 = R$, regardless of the actions recommended. Consequently, this example cannot be adequately explained by existing MFCE concepts. Having observed the policy recommended for the population, the agent will take the best response of the recommended policy for achieving a higher return. Therefore, for a given correlated signal sequence, the policy set of MFCE is equivalent to the MFNE policy set, which limits its generality. As shown in \cref{corr:relation}, the MFNE policy set is a subset of the AMFCE policy set, demonstrating that the MFCE framework is more restrictive.

In contrast, AMFCE allows for a broader range of strategies by incorporating adaptive decision-making under the constraint of unobservable future signals. This flexibility enables AMFCE to better model dynamic environments and multi-agent interactions, making it applicable to a wider variety of real-world scenarios.
\subsection{Properties of AMFCE}
This subsection focuses on the properties of AMFCE, including the conditions to guarantee its existence and its relationship to classic MFNE.
To provide the existence of AMFCE solutions, we define the best response operator 
$$
    \mathrm{BR}(\pmb{\pi}; \pmb{\rho}) = \mathop{\arg\max}_{\pmb{\pi}'}\mathbb{E}_{\pmb{\pi}', \pmb{\pi}, \pmb{\rho}}\left[\sum_{t=0}^T\gamma^t r(s_t, a_t, \mu_t)\right].
$$
Then the existence of AMFCE is derived using Kakutani’s fixed point theorem \cite{kakutani1941generalization}  with the operator $\mathrm{BR}$.
% \zhiyu{Is this part neccessary?}
% Unless otherwise stated, the expectation $\mathbb{E}_{\pmb{\pi}, \pmb{\rho}}$ is taken with respect to $z_t\sim\rho_t(\cdot), s_t\sim P(\cdot|s_{t-1}, a_{t-1}, \mu_{t-1}), a_t\sim\pi_t(\cdot|s_t, z_t)$, $\mu_t=\Phi(\mu_{t-1}, \pi_{t-1}, z_{t-1})$.
We next provide a sufficient condition for the existence of AMFCE. 
\begin{restatable}{theorem}{fixed}\label{thm:fixed}
    If the reward functions $r(s,a,\mu)$ and transition kernel $P(s'|s, a, \mu)$ are bounded and continuous with respect to population state distribution $\mu$, there exists at least one AMFCE solution.
\end{restatable}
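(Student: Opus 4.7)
The plan is to apply Kakutani's fixed-point theorem to the best response correspondence $\mathrm{BR}(\cdot; \pmb{\rho})$ after fixing an arbitrary correlation device $\pmb{\rho}$. Since $\mathcal{S}$, $\mathcal{A}$, $\mathcal{Z}$, and $\mathcal{T}$ are all finite, the space $\Pi$ of behavioral policies $\pmb{\pi}=\{\pi_t:\mathcal{Z}\times\mathcal{S}\to\mathcal{P}(\mathcal{A})\}_{t\in\mathcal{T}}$ is a product of finite-dimensional simplices, hence compact, convex, and non-empty; the same holds for the space $\Xi$ of correlation devices. Because \cref{defMFCE} imposes no constraint on $\pmb{\rho}^{\star}$ beyond population consistency and agent incentive compatibility under it, it suffices to fix any $\pmb{\rho}\in\Xi$ and produce $\pmb{\pi}^{\star}\in \mathrm{BR}(\pmb{\pi}^{\star}; \pmb{\rho})$.

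Next I would establish the continuity needed for Kakutani. The induced mean field flow $\pmb{\mu}^{\pmb{\pi},\pmb{\rho}}$, generated recursively by $\mu_{t+1}=\Phi(\mu_t,\pi_t,z_t)$, depends continuously on $\pmb{\pi}$ thanks to the assumed continuity of $P(\cdot|s,a,\mu)$ in $\mu$. Consequently the agent's expected return
\[
J(\pmb{\pi}';\pmb{\pi},\pmb{\rho}) = \mathbb{E}_{\pmb{\pi}',\pmb{\pi},\pmb{\rho}}\!\left[\sum_{t=0}^{T}\gamma^{t} r(s_t,a_t,\mu_t^{\pmb{\pi},\pmb{\rho}})\right]
\]
is a polynomial in the finitely many parameters of $(\pmb{\pi}',\pmb{\pi})$ whose coefficients are bounded and continuous by the hypotheses on $r$ and $P$, hence jointly continuous. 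Berge's maximum theorem then implies that $\pmb{\pi}\mapsto \mathrm{BR}(\pmb{\pi};\pmb{\rho})$ is upper hemicontinuous with closed graph.

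The remaining hypotheses of Kakutani are non-emptiness and convex-valuedness of $\mathrm{BR}(\pmb{\pi};\pmb{\rho})$. Non-emptiness is immediate from maximizing a continuous function over the compact set $\Pi$. For convexity, observe that once $(\pmb{\pi},\pmb{\rho})$ is fixed, the mean field flow is frozen and the agent faces a standard finite-horizon MDP on the augmented state space $\mathcal{S}\times\mathcal{Z}$. Bellman's optimality principle characterizes optimal policies as those $\pmb{\pi}'$ with $\pi'_t(\cdot|s,z)$ supported on $\mathop{\arg\max}_{a}Q^{\star}_t(s,a,z)$ at every $(t,s,z)$; this set is a product of simplices, hence convex. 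Kakutani's theorem then yields $\pmb{\pi}^{\star}\in \mathrm{BR}(\pmb{\pi}^{\star};\pmb{\rho})$.

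Finally, I would verify that $(\pmb{\pi}^{\star},\pmb{\rho})$ satisfies \cref{defMFCE}. The population side is immediate because $\pmb{\mu}^{\star}$ is by construction the McKean--Vlasov flow under $(\pmb{\pi}^{\star},\pmb{\rho})$. For the single-agent side, any swap deviation $u\in\mathcal{U}$ applied at time $t$ and state $s$ can be realized by the alternative policy $\pmb{\pi}'\in\Pi$ with $\pi'_t(a'|s,z)=\sum_{a:u(a)=a'}\pi^{\star}_t(a|s,z)$ and $\pi'_\tau=\pi^{\star}_\tau$ for $\tau\ne t$; since $\pmb{\pi}^{\star}\in \mathrm{BR}(\pmb{\pi}^{\star};\pmb{\rho})$, we have $J(\pmb{\pi}^{\star};\pmb{\pi}^{\star},\pmb{\rho})\ge J(\pmb{\pi}';\pmb{\pi}^{\star},\pmb{\rho})$, and the one-shot deviation principle for finite-horizon MDPs converts this into $\Delta_t(s,\mu^{\star}_t,u;\pmb{\pi}^{\star},\pmb{\rho})\le 0$. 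I expect this final conversion to be the main obstacle, because the swap deviation condition in AMFCE is phrased through expectations conditioned on the private recommendation $a$ rather than on $z$; the argument goes through because swap deviations of the recommendation form a subclass of full $(s,z)$-policy deviations, so best-response optimality over the larger class automatically entails swap optimality.
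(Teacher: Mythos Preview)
Your proposal is correct and follows essentially the same route as the paper: fix an arbitrary $\pmb{\rho}$, apply Kakutani's fixed-point theorem to $\mathrm{BR}(\cdot;\pmb{\rho})$ on the compact convex product of simplices, and then check that any fixed point $\pmb{\pi}^{\star}$ satisfies the AMFCE single-agent condition because swap deviations are a subclass of full $(s,z)$-policy deviations. The only cosmetic differences are that you invoke Berge's maximum theorem where the paper establishes the closed-graph property by a direct $\epsilon$-continuity argument on $Q^{*}$, and the paper isolates the Bellman characterization of $\mathrm{BR}$ and its convexity as two standalone lemmas before invoking Kakutani; also note that your ``MDP on $\mathcal{S}\times\mathcal{Z}$'' should strictly speaking carry $\mu_t$ (equivalently the $z$-history) in the state, but this is immaterial since everything remains finite.
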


% \begin{proofsketch}
% (The full proof is deferred to Appendix \ref{prffixed}.) 
% We first prove that $\operatorname{BR}$  has a closed graph (Lemma \ref{closed_graph}), and $\mathrm{BR}(\pmb{\pi}; \pmb{\rho})$ is a convex set given $\pmb{\pi}$ and $\pmb{\rho}$.  (Lemma \ref{cov}). According to Kakutani's fixed point theorem, there exists $\boldsymbol{\pi}^* = \operatorname{BR}(\boldsymbol{\pi}^*; \pmb{\rho})$.
% Therefore, 
% $
%     \Delta_t(s_t, \mu_t, u; \pmb{\pi}^*)\le0\quad
%             \forall u\in \mathcal{U}, \forall s_t\in\mathcal{S}, \forall t\in\mathcal{T}
%             $
% and $\pmb{\mu}=\{\mu_t\}_{t=0}^{T}$ satisfies the population side condition of AMFCE.
% \end{proofsketch}

AMFCE is a more general equilibrium concept compared to MFNE. \cref{corr:relation} shows that MFNE is a subclass of AMFCE. 
\begin{restatable}{corollary}{relation}\label{corr:relation}
    % If $(\pmb{\pi}, \pmb{\mu})$ is an MFNE, then it leads to an AMFCE solution $(\pmb{\pi}, \pmb{\rho})$ with $|\mathcal{Z}|=1$ and $\rho_t(z)=1$ for all $t\in \mathcal{T}$ where $z\in \mathcal{Z}$ is the single element in the signal space. 
    Every MFNE can be transformed into an AMFCE.
\end{restatable}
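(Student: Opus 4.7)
The plan is to embed any given MFNE $(\pmb{\pi}^{\star},\pmb{\mu}^{\star})$ into the AMFCE framework via a trivial, one-point correlation device. Specifically, I would take $\mathcal{Z}=\{z_0\}$ to be a singleton, set $\rho_t^{\star}(z_0)=1$ for every $t\in\mathcal{T}$, and define the behavioural policy $\tilde{\pi}_t(a\mid s,z_0)\triangleq \pi_t^{\star}(a\mid s)$. The claim is that $(\tilde{\pmb{\pi}},\pmb{\rho}^{\star})$ is an AMFCE, and I would verify the two conditions of \cref{defMFCE} separately.

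For the population side, I would observe that because $\rho_t^{\star}$ puts unit mass on $z_0$, the recursion $\Phi(\mu_t,\tilde{\pi}_t,z_0)$ collapses to the classical mean field update in \cref{nash2_stat}. A direct induction on $t$ starting from $\mu_0$ then shows that the flow induced by $(\tilde{\pmb{\pi}},\pmb{\rho}^{\star})$ coincides with $\pmb{\mu}^{\star}$, so population consistency is automatic.

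For the single-agent side, I would first note that \cref{predict_z} collapses to $\rho_t^{\rm pred}(z_0\mid\mathcal{I}_t)=1$, and that the action-value $Q_t^{\tilde{\pmb{\pi}}}(s,a,\mu_t^{\star},z_0;\tilde{\pmb{\pi}})$ reduces to the classical MFG action-value $Q_t^{\pmb{\pi}^{\star}}(s,a,\mu_t^{\star})$ under the fixed population flow $\pmb{\mu}^{\star}$. I would then invoke Bellman's principle of optimality in the fixed-$\pmb{\mu}^{\star}$ MDP: since $\pmb{\pi}^{\star}$ is MFNE-optimal from $\mu_0$, it is optimal at every reachable state-time pair, so every action in the support of $\pi_t^{\star}(\cdot\mid s)$ attains $\max_{a'}Q_t^{\pmb{\pi}^{\star}}(s,a',\mu_t^{\star})$. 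Consequently, for any swap function $u\in\mathcal{U}$ and any $a$ with $\pi_t^{\star}(a\mid s)>0$, we have $Q_t^{\pmb{\pi}^{\star}}(s,u(a),\mu_t^{\star})\le Q_t^{\pmb{\pi}^{\star}}(s,a,\mu_t^{\star})$; averaging over $a\sim\tilde{\pi}_t(\cdot\mid s,z_0)$ then gives $\Delta_t(s,\mu_t^{\star},u;\tilde{\pmb{\pi}},\pmb{\rho}^{\star})\le 0$, which is exactly the single-agent condition.

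The main obstacle is a minor scoping issue: the MFNE definition quantifies optimality only over initial states $s\sim\mu_0$, whereas the AMFCE condition quantifies over every $s\in\mathcal{S}$. I would resolve this by the standard dynamic programming argument, namely that the Bellman optimality equation for the fixed-$\pmb{\mu}^{\star}$ MDP automatically extends optimality to every reachable pair. For any off-support state at which $\pi_t^{\star}$ might be sub-optimal, I would redefine $\tilde{\pi}_t$ there as a greedy action with respect to $Q_t^{\pmb{\pi}^{\star}}$; this modification affects neither the induced mean field flow $\pmb{\mu}^{\star}$ nor the representative agent's return starting from $\mu_0$, so the constructed $(\tilde{\pmb{\pi}},\pmb{\rho}^{\star})$ remains a valid AMFCE transformation of the original MFNE.
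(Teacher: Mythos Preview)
Your proposal is correct and takes essentially the same approach as the paper: both embed the MFNE via a Dirac (singleton) correlation device $\rho_t=\delta_{z_0}$, observe that the population and single-agent conditions then collapse to the classical MFNE conditions, and conclude. Your version is in fact more careful than the paper's, which simply cites the characterization that $\pi_t^{\star}(a\mid s)>0\Rightarrow a\in\arg\max_{a'}Q_t^{*}(s,a',\mu,z;\pmb{\pi})$ without addressing the ``all $s\in\mathcal{S}$ versus $s\sim\mu_0$'' scoping issue you explicitly handle.
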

The proof is deferred to \cref{prfrelation}. \cref{corr:relation} implies that any IL algorithm designed to recover AMFCE policies can also recover MFNE policies.

\section{Imitation learning for AMFCE}\label{sec:algo}
In this section, we propose a novel IL framework for recovering AMFCE from expert demonstrations.
In the setting of IL, the reward signal is inaccessible.
To construct a suitable reward function rationalizing the expert policy, we define an AMFCE inverse reinforcement learning (AMFCE-IRL) operator to design a reward function from expert demonstrations. 
We denote the AMFCE under the designed reward function $r$ and correlation device $\pmb{\rho}$ as $\mathrm{AMFCE}(r, \pmb{\rho})$. 
The condition of AMFCE, as defined in \cref{defMFCE}, implies that agents cannot improve the policy $\pmb{\pi}$ through 1-step temporal difference learning. We proceed to derive equivalent constraints for multi-step temporal difference learning, outlined in \cref{tstep}.
Utilizing the Lagrangian reformulation of these equivalent multi-step constraints, we propose the IL framework for recovering AMFCE. 
We introduce the concept of the Correlated Imitation Gap (CIP) for deriving the multi-step constraints.
\begin{definition}[CIP]
    For a given action sequence $a_{0:T}$, the policy $\pmb{\pi}$ and correlation device $\pmb{\rho}$, the CIP is defined as 
    $
        \mathcal{R}(a_{0:T}, \pmb{\pi}, \pmb{\rho}) \triangleq \mathbb{E}\Big[\sum_{t=0}^{T} \gamma^t r(s_t, a_t, \mu_{t})\Big|a_{0:T}\Big] - J(\pmb{\pi}, \pmb{\pi}, \pmb{\rho}),    
    $
    where the expectation is taken with respect to $z_t\sim\rho_t(\cdot)$, $s_t\sim P(\cdot|s_{t-1}, a_{t-1}, \mu_{t-1})$.
    Here, $J(\pmb{\pi}, \pmb{\pi}', \pmb{\rho})=\mathbb{E}_{\pmb{\pi}, \pmb{\pi}', \pmb{\rho}}\left[\sum_{t=0}^T\gamma^t r(s_t, a_t, \mu_t)\right]$ represents the expected return of the agent when it follows policy $\pmb{\pi}$ while the population adheres to policy $\pmb{\pi}'$ under the correlation device $\pmb{\rho}$.
\end{definition}
The CIP is defined as the gap of expected return between the agent taking action sequence $a_{0:T}$ and the policy $\pmb{\pi}$.
Then we can get a criterion for AMFCE based on CIP. 
\begin{restatable}{proposition}{tstep}\label{tstep}
    $(\pmb{\pi}, \pmb{\rho})$ is an AMFCE solution if and only if $$\mathcal{R}(a_{0:T}, \pmb{\pi}, \pmb{\rho})\leq 0,$$ $\forall a_{t}\in\mathcal{A}$, $0\le t\le T$. 
\end{restatable}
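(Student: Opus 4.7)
The plan is to derive a performance-difference telescoping identity that rewrites $\mathcal{R}$ as a discounted sum of the one-step deviations $\Delta_t$ featured in \cref{defMFCE}, and then read off both directions of the equivalence from it. Concretely, defining the value $V_t^{\pmb\pi}(s,\mu)=\mathbb{E}_{z\sim\rho_t,\,a\sim\pi_t(\cdot|s,z)}[Q_t^{\pmb\pi}(s,a,\mu,z;\pmb\pi)]$ and the constant swap $u_{a'}(a)\equiv a'$ for $a'\in\mathcal{A}$, I would establish
\[
 \mathcal{R}(a_{0:T},\pmb\pi,\pmb\rho) \;=\; \sum_{t=0}^{T} \gamma^{t}\,\mathbb{E}\!\left[\Delta_t\bigl(s_t,\mu_t,u_{a_t};\pmb\pi,\pmb\rho\bigr)\right],
\]
where the outer expectation is taken over the trajectory $(s_t,\mu_t)$ induced by the agent committing to $a_{0:T}$ while the population follows $\pmb\pi$ under $\pmb\rho$. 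The identity is obtained by a PDL-style telescoping that at each time step inserts $\pm V_t^{\pmb\pi}(s_t,\mu_t)$ and applies the Bellman recursion for $Q_t^{\pmb\pi}(\cdot;\pmb\pi)$, grouping terms via the law of total expectation.

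For the ($\Rightarrow$) direction the conclusion is immediate: if $(\pmb\pi,\pmb\rho)$ is an AMFCE, then $\Delta_t(s,\mu_t^\star,u;\pmb\pi,\pmb\rho)\le 0$ pointwise in $(s,t,u)$, so every summand and hence $\mathcal{R}(a_{0:T},\pmb\pi,\pmb\rho)$ itself is nonpositive for every action sequence.

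For the ($\Leftarrow$) direction I would proceed by backward induction on $t$. Assuming the pointwise AMFCE condition has been verified at times strictly after $t^*$, I would fix $(s^*,a^*)$ and choose action sequences whose prefix $a_{0:t^*-1}$ shapes the state law at time $t^*$ to concentrate on $s^*$ and whose suffix $a_{t^*+1:T}$ zeroes the already-controlled later-time advantages. The CIP inequality then collapses to the single term $\Delta_{t^*}(s^*,\mu_{t^*}^\star,u_{a^*};\pmb\pi,\pmb\rho)\le 0$. To lift from constant swaps to general $u\in\mathcal{U}$, I would exploit linearity of $\Delta_t$ in $u$: any swap decomposes along $\{u_{a'}\}_{a'\in\mathcal{A}}$ weighted by the recommendation law induced by $\pmb\pi$ and $\pmb\rho$, reducing the general-swap inequality to the family of constant-swap constraints already derived.

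The hard part will be this extraction step in the ($\Leftarrow$) direction: CIP is an aggregated trajectory-level expectation, whereas AMFCE requires nonpositivity at each $(s,t,u)$ individually. Bridging the two hinges on exploiting the full freedom to choose action-sequence prefixes to reach any equilibrium-relevant $(s^*,\mu_{t^*}^\star)$, combined with careful accounting of which terms in the telescoped identity are neutralised by the inductive hypothesis so that only the target $\Delta_{t^*}$ remains.
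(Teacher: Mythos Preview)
Your telescoping identity and the ($\Rightarrow$) direction are essentially the paper's argument: the paper unrolls the conditional return backward one step at a time via the Bellman equation, invoking the AMFCE inequality at each step, which is the same PDL mechanism you package as an identity.

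The ($\Leftarrow$) direction, however, has two genuine gaps.

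\emph{First, constant swaps are not enough.} Your identity only features the constant swaps $u_{a_t}$, and your proposed lift ``any swap decomposes along $\{u_{a'}\}$ weighted by the recommendation law'' is false. For a general $u$,
\[
\Delta_t(s,\mu,u)=\mathbb{E}_{z\sim\rho_t,\,a\sim\pi_t(\cdot|s,z)}\bigl[Q_t^{\pmb\pi}(s,u(a),\mu,z;\pmb\pi)-Q_t^{\pmb\pi}(s,a,\mu,z;\pmb\pi)\bigr],
\]
and the recommendation $a$ carries information about $z$: the expectation couples $z$ and $u(a)$ through $\pi_t(\cdot|s,z)$. By contrast $\Delta_t(s,\mu,u_{a'})=\mathbb{E}_{z}[Q_t^{\pmb\pi}(s,a',\mu,z;\pmb\pi)]-V_t^{\pmb\pi}(s,\mu)$ uses only the \emph{marginal} of $z$. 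No fixed convex weights $w_{a'}$ can reproduce the coupled expectation; this is precisely the swap-regret versus external-regret distinction, so nonpositivity of all constant-swap deviations does not yield the AMFCE swap condition.

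\emph{Second, the isolation step has the wrong sign.} In your identity, the inductive hypothesis gives the suffix terms $\sum_{t>t^*}\gamma^t\mathbb{E}[\Delta_t]\le 0$, but to infer $\mathbb{E}[\Delta_{t^*}]\le 0$ from $\mathcal{R}\le 0$ you would need the suffix contribution to be $\ge 0$. You cannot ``zero'' the suffix with a single deterministic $a_{t^*+1:T}$ either (that would require actions with exactly zero advantage). Averaging the suffix over $\pmb\pi$ does zero it, but the same averaging on the prefix spreads the time-$t^*$ state over the on-policy occupancy rather than concentrating it on $s^*$, so you obtain only an occupancy-weighted inequality, not the required pointwise one.

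The paper's ($\Leftarrow$) route avoids both issues and is much shorter: if AMFCE fails at some $(t,s,u)$, consider the deviation \emph{policy} that follows $\pmb\pi$ except at $(t,s)$, where it plays $u(a)$ upon recommendation $a$. This policy achieves strictly higher return than $J(\pmb\pi,\pmb\pi,\pmb\rho)$; its excess return is (by the same averaging that underlies \cref{thm:dual}) a convex combination of the $\mathcal{R}(a_{0:T},\pmb\pi,\pmb\rho)$ over action sequences, so pigeonhole yields some sequence with $\mathcal{R}>0$. The key point is that working with a recommendation-dependent deviation policy, rather than a single action sequence, is what lets general swaps $u$ enter the argument.
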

The proof is deferred to \cref{prftstep}. Intuitively, \cref{tstep} shows the multi-step constraints for AMFCE.
Therefore, the process of finding AMFCE can be defined as an optimization problem with finite constraints measured by the CIP.
We propose a Lagrangian reformulation to find AMFCE.
% \begin{align*}
% \label{eq:lagrangian}
    $$
    L(\pmb{\pi}, \pmb{\rho}, \lambda, r) \triangleq\sum_{\tau_k \in \mathcal{D}_E}\lambda(\tau_k)\mathcal{R}(a_{0:T}, \pmb{\pi}, \pmb{\rho}),
    $$
% \end{align*}
where $\mathcal{D}_E$ is a set of action-signal sequences $\tau_k = \{(a_t, z_t)\}_{t=0}^T$.
We show that the Lagrangian form captures the difference of expected returns between two policies by selecting $\lambda$. %  as follows.
\begin{restatable}{theorem}{dual}\label{thm:dual}
    For policy $\pmb{\pi}^*$ and correlation device $\pmb{\rho}$, let $\lambda_{\pmb{\pi}^{*}}(\tau_k)=\prod_{t=0}^{T}\rho_t(z_t)\pi_t^*(a_t|s_t, z_t)$ be the probability of generating the sequence $\tau_k$ using policy $\pmb{\pi}^*$ and correlation device $\pmb{\rho}$. Then we have 
    % \begin{align*}
    $
        L(\pmb{\pi}, \pmb{\rho}, \lambda_{\pmb{\pi}^{*}},r)=J(\pmb{\pi}^*, \pmb{\pi}, \pmb{\rho})-J(\pmb{\pi}, \pmb{\pi}, \pmb{\rho}).
    $
    % \end{align*}
    % where the expectation is taken with respect to $z_t\sim\rho_t(\cdot)$, $s_t\sim P(\cdot|s_{t-1}, a_{t-1}, \mu_{t-1})$, $a_t\sim\pi_t^{*}(\cdot|s_t, z_t)$, $\mu_t=\Phi(\mu_{t-1}, \pi_{t-1}, z_{t-1})$. 
\end{restatable}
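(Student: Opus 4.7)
The plan is to expand $L(\pmb{\pi}, \pmb{\rho}, \lambda_{\pmb{\pi}^*}, r)$ directly using the definition of the CIP and then identify the two resulting sums with $J(\pmb{\pi}^*, \pmb{\pi}, \pmb{\rho})$ and $-J(\pmb{\pi}, \pmb{\pi}, \pmb{\rho})$ respectively. Conceptually, the trick is that $\lambda_{\pmb{\pi}^*}$ is precisely the probability weight under which an outer sum over $\tau_k$ realizes expectation under $\pmb{\pi}^*$, so the Lagrangian is nothing more than a rewriting of that expectation using a tower argument.

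First I would substitute $\mathcal{R}(a_{0:T}, \pmb{\pi}, \pmb{\rho}) = \mathbb{E}\bigl[\sum_{t=0}^{T}\gamma^{t} r(s_{t},a_{t},\mu_{t})\,\big|\,a_{0:T}\bigr] - J(\pmb{\pi},\pmb{\pi},\pmb{\rho})$ into the definition of $L$, splitting it as
\[
L(\pmb{\pi},\pmb{\rho},\lambda_{\pmb{\pi}^{*}},r) = \underbrace{\sum_{\tau_{k}}\lambda_{\pmb{\pi}^{*}}(\tau_{k})\,\mathbb{E}\Bigl[\textstyle\sum_{t}\gamma^{t}r(s_{t},a_{t},\mu_{t})\,\big|\,a_{0:T}\Bigr]}_{\text{(I)}} \;-\; J(\pmb{\pi},\pmb{\pi},\pmb{\rho})\underbrace{\sum_{\tau_{k}}\lambda_{\pmb{\pi}^{*}}(\tau_{k})}_{\text{(II)}}.
\]
For term (II), since $\lambda_{\pmb{\pi}^{*}}(\tau_{k}) = \prod_{t}\rho_{t}(z_{t})\pi_{t}^{*}(a_{t}\mid s_{t},z_{t})$ is the trajectory probability under $(\pmb{\pi}^{*},\pmb{\rho})$, summing over all $\tau_{k}$ (together with the implicit state realizations produced by the transition kernel driven by the population policy $\pmb{\pi}$) gives $1$, so this factor contributes $-J(\pmb{\pi},\pmb{\pi},\pmb{\rho})$.

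For term (I), I would invoke the tower property of expectation: the inner expectation conditions on $a_{0:T}$ (and integrates out states and signals via the dynamics induced by $\pmb{\pi}$ and $\pmb{\rho}$), while the outer $\lambda_{\pmb{\pi}^{*}}$-weighted sum marginalizes out the action sequence according to the distribution that $\pmb{\pi}^{*}$ would induce at each step. Composing these two layers recovers precisely the unconditional expectation in which actions are drawn from $\pmb{\pi}^{*}$, population dynamics evolve under $\pmb{\pi}$, and signals are drawn from $\pmb{\rho}$, which by definition equals $J(\pmb{\pi}^{*},\pmb{\pi},\pmb{\rho})$. Combining (I) and (II) yields the claimed identity.

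The main obstacle I anticipate is bookkeeping rather than anything deep: the definition of $\lambda_{\pmb{\pi}^{*}}$ in the statement is written as a product over $(z_{t},a_{t})$ factors only, but the states $s_{t}$ appearing inside $\pi_{t}^{*}(a_{t}\mid s_{t},z_{t})$ and the mean field $\mu_{t}$ are generated by the same transition kernel and correlation device that appear inside the conditional expectation defining $\mathcal{R}$. To make the tower step rigorous I would either enlarge $\tau_{k}$ to include $s_{0:T}$ (and append the corresponding $\mu_{0}\prod P(s_{t+1}\mid s_{t},a_{t},\mu_{t})$ factors to $\lambda_{\pmb{\pi}^{*}}$), or I would argue that the state/signal randomness is shared consistently between the inner expectation and the weighting $\lambda_{\pmb{\pi}^{*}}$, so that the combined measure coincides with the joint law defining $J(\pmb{\pi}^{*},\pmb{\pi},\pmb{\rho})$. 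Once this measure-theoretic bookkeeping is pinned down, the two-line algebra above closes the proof.
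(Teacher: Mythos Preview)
Your proposal is correct and follows essentially the same approach as the paper: split the Lagrangian using the definition of the CIP, recognize the $\lambda_{\pmb{\pi}^*}$-weighted sum over $\tau_k$ as an expectation under $\pmb{\pi}^*$, and apply the tower property to collapse term (I) into $J(\pmb{\pi}^*,\pmb{\pi},\pmb{\rho})$, while term (II) collapses to $-J(\pmb{\pi},\pmb{\pi},\pmb{\rho})$ by normalization. The paper's proof is essentially the two-line version of your argument and does not address the state-bookkeeping issue you flag; your instinct to pin down how the $s_t$'s entering $\pi_t^*(a_t\mid s_t,z_t)$ are coupled to the inner conditional expectation is a point of rigor that the paper leaves implicit.
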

The proof of \cref{thm:dual} is deferred to \cref{prfdual}.
Motivated by \cref{thm:dual}, we introduce the AMFCE-IRL operator $\mathrm{AMFCE-IRL}_\psi$ with a reward regularizer $\psi$.
The AMFCE-IRL operator rationalizes the expert policy $\pmb{\pi}^E$ by maximizing the gap in expected return between the expert policy $\pmb{\pi}^E$ and an alternative policy $\pmb{\pi}$.
%Theorem \ref{dual} motivates us to define the AMFCE-IRL  with regularizer $\psi$.
\begin{equation}\label{mfirl}
    \mathrm{AMFCE-IRL}_\psi(\boldsymbol{\pi}^E, \pmb{\rho}^E) = \arg\max_{r} \Big(-\psi(r) - \max_{\pmb{\pi}} L(\pmb{\pi}^E, \pmb{\rho}^E, \lambda_{\pmb{\pi}^{*}}, r)\Big),
\end{equation}
where $(\pmb{\pi}^E, \pmb{\rho}^E)$ is the AMFCE from which expert demonstrations are sampled. The regularizer for the reward function is chosen as the adversarial reward function regularizer to avoid overfitting \cite{DBLP:conf/nips/HoE16}.

\begin{equation}
    \begin{aligned}
        \psi_{GA}(r) \triangleq\begin{cases}\mathbb{E}_{\pmb{\pi}, \pmb{\pi}^E, \pmb{\rho}^E}[\sum_{t=0}^T\gamma^{t}g(r(s_t, a_t, \mu_t))] & \text { if } r>0 \\ +\infty & \text { otherwise }\end{cases}
    \end{aligned}
\end{equation}
    % \label{regularizer}
Here, $g(x)= \begin{cases}-x-\log \left(1-e^{x}\right) & \text { if } x<0 \\ +\infty & \text { otherwise }\end{cases}$.

We recover the AMFCE policy $\mathrm{AMFCE}(\tilde{r}, \pmb{\rho}^E)$ by \cref{eq:combine}, where $\tilde{r}=\mathrm{AMFCE-IRL}(\boldsymbol{\pi}^E, \pmb{\rho}^E)$.
\begin{equation}\label{eq:combine}
    \begin{aligned}
        \mathrm{AMFCE}\circ\operatorname{AMFCE-IRL}_\psi(\pmb{\pi}^E, \pmb{\rho}^E)
        =&
        \mathop{\arg\min}_{\pmb{\pi}} 
        \max_{r} J(\pmb{\pi}^E, \pmb{\pi}^E, \pmb{\rho}^E)\\
        &- J(\pmb{\pi}, \pmb{\pi}^E, \pmb{\rho}^E) - \psi_{GA}(r)
    \end{aligned},
\end{equation}

\begin{restatable}{proposition}{GAIL}
    % The policy $\pmb{\pi}$ learned on the reward function recovered by AMFCE-IRL can be characterized as follows:
    % where the expectation is taken with respect to $z_t\sim\rho_t^{E}(\cdot)$, $s_t\sim P(\cdot|s_{t-1}, a_{t-1}, \mu_{t-1})$, $a_t\sim\pi_t(\cdot|s_t, z_t)$, $\mu_t=\Phi(\mu_{t-1}, \pi_{t-1}^E, z_{t-1})$.

    % {\color{blue} This is equivalent to recovering AMFCE policy by}
    The objective in \cref{eq:combine} can be reformulated as the following practical objective function: 
    % The objective to recover AMFCE policy is defined as:
    \begin{equation}\label{eq:objective}
        \begin{aligned}
            \min_{\pmb{\pi}}&\max_{\omega}\mathbb{E}_{\pmb{\pi}, \pmb{\pi}^E, \pmb{\rho}^E}\bigg[\sum_{t=0}^{T}\gamma^{t}\log D_\omega(s_t, a_t, \mu_t)\bigg]\\
            +&\mathbb{E}_{\pmb{\pi}^E, \pmb{\pi}^E, \pmb{\rho}^E}\bigg[\sum_{t=0}^{T}\gamma^{t}\log\big(1 - D_\omega(s_t, a_t, \mu_t)\big)\bigg],
        \end{aligned}
    \end{equation}
    where $D_\omega$ represents the discriminator network parameterized with $\omega$, taking $(s_t, a_t, \mu_t)$ as input and producing a real number in the range $(0, 1]$ as output.
    \label{GAIL}
\end{restatable}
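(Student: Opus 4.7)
The plan is to adapt the classical GAIL derivation of Ho and Ermon \cite{DBLP:conf/nips/HoE16} to the correlated mean-field setting. First I would rewrite the return difference inside \cref{eq:combine} as a pairing between the reward and the discounted occupancy measures on $(s,a,\mu)$-triples:
\begin{align*}
J(\pmb{\pi}^E,\pmb{\pi}^E,\pmb{\rho}^E)-J(\pmb{\pi},\pmb{\pi}^E,\pmb{\rho}^E)=\langle\rho^E-\rho^\pi,\,r\rangle,
\end{align*}
where $\rho^E$ and $\rho^\pi$ are the $\gamma^t$-discounted $(s,a,\mu)$-occupancies induced when the single learner follows $\pmb{\pi}^E$ or $\pmb{\pi}$ respectively, against the population-and-device pair $(\pmb{\pi}^E,\pmb{\rho}^E)$; similarly $\psi_{GA}(r)=\langle\rho^\pi,\,g(r)\rangle$. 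The inner maximization over $r$ in \cref{eq:combine} is then exactly the convex conjugate $\psi_{GA}^{*}(\rho^E-\rho^\pi)$, which by the pointwise separability of $\psi_{GA}$ across $(s,a,\mu)$ reduces to the scalar problem
\begin{align*}
\max_{r<0}\bigl[(q-p)r-p\,g(r)\bigr]=\max_{r<0}\bigl[q\,r+p\log(1-e^{r})\bigr],
\end{align*}
with $p=\rho^\pi(s,a,\mu)$, $q=\rho^E(s,a,\mu)$, after substituting $g(r)=-r-\log(1-e^{r})$.

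Next I would perform the change of variables $D=1-e^{r}\in(0,1)$, under which $r=\log(1-D)$ and $\log(1-e^{r})=\log D$. The pointwise maximum then rewrites as the binary cross-entropy
\begin{align*}
\max_{D\in(0,1)}\bigl[p\log D+q\log(1-D)\bigr],
\end{align*}
which is exactly the inner objective that a GAN discriminator $D_\omega(s,a,\mu)$ maximizes when $p$ and $q$ play the roles of the learner's and expert's densities on $(s,a,\mu)$. Parameterizing $D$ by the network $D_\omega$, re-expressing the resulting sums as $\gamma^{t}$-discounted expectations under trajectories drawn from $(\pmb{\pi},\pmb{\pi}^E,\pmb{\rho}^E)$ and $(\pmb{\pi}^E,\pmb{\pi}^E,\pmb{\rho}^E)$, and reinstating the outer $\min_{\pmb{\pi}}$, yields the advertised objective \cref{eq:objective}.

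The main obstacle I expect is keeping the bookkeeping of the correlated expectations clean. Unlike the standard GAIL argument, here both the learner's state $s_t$ and the population mean-field $\mu_t$ are governed by the shared correlation device $\pmb{\rho}^E$ and the population policy $\pmb{\pi}^E$, so one must verify that when the learner switches from $\pmb{\pi}^E$ to $\pmb{\pi}$ the population trajectory $\pmb{\mu}$ and the signal sequence $\{z_t\}$ are unchanged. This is what places $\rho^E$ and $\rho^\pi$ on a common $(s,a,\mu)$-simplex so that the pointwise conjugate computation is legitimate, and it is also what justifies the min--max interchange needed before invoking the Ho--Ermon conjugate lemma. Once this separability and the interchange are in place, the remainder of the proof is the pointwise calculation carried out above, followed by absorbing the $\gamma^t$ discounts and the expectations into the expert and learner trajectory expectations to match the form of \cref{eq:objective}.
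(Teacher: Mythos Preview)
Your proposal is correct and follows essentially the same idea as the paper: reduce the inner $\max_r$ in \cref{eq:combine} to a GAN objective via the change of variables between the reward and the discriminator. The paper's proof is more terse---it simply substitutes $r(s,a,\mu)=\log D_\omega(s,a,\mu)$ directly into $J(\pmb{\pi}^E,\pmb{\pi}^E,\pmb{\rho}^E)-J(\pmb{\pi},\pmb{\pi}^E,\pmb{\rho}^E)-\psi_{GA}(r)$ and simplifies algebraically, without explicitly passing through occupancy measures or the convex conjugate. Your route through $\psi_{GA}^*$ and pointwise separability is the original Ho--Ermon argument spelled out in full, and your substitution $D=1-e^{r}$ (rather than the paper's $D=e^{r}$) lands directly on the label assignment in \cref{eq:objective}; the two parametrizations differ only by the symmetry $D\leftrightarrow 1-D$. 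One small point: the ``min--max interchange'' you flag as an obstacle is not actually needed for this proposition, since \cref{eq:combine} already presents the problem as $\min_{\pmb{\pi}}\max_r$, so you only have to evaluate the inner maximum for fixed $\pmb{\pi}$.
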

The proof is deferred to \cref{prfGAIL}. This proposition shows that the AMFCE policy can be recovered by the GAN.
Note that simply using \cref{eq:objective} to solve AMFCE cannot recover $\pmb{\rho}^{E}$, so we derive $\pmb{\rho}$ using a gradient descent method in the \cref{gradofrho} with proof in \cref{prfgradofrho}.
\begin{restatable}{proposition}{gradofrho}\label{gradofrho}
    If the correlation device $\rho_t^{\phi}$ is parameterized with $\phi$, the gradient to optimize $\phi$ given state $s$ is 
    % TODO: \begin{align}\label{rhograd}
    %     \mathbb{E}_{z\sim\rho_t^\phi(\cdot)}\bigg[\nabla_\phi \log \rho_t^\phi(z) \Big(-\alpha\log \rho_t^\phi(z) + \alpha H(\pi_t(a|s, z)) \nonumber\\
    %     + \mathbb{E}_{a\sim\pi_t(\cdot|s, z)}Q_t^{\pmb{\pi}}(s, a, \mu, z; \pmb{\pi})\Big)\bigg].
    %     % + \sum_{a\in\mathcal{A}}\pi_t(a\vert s, z)Q_t^{\pmb{\pi}}(s, a, \mu, z; \pmb{\pi})\bigg)\Bigg]
    % \end{align}
    $$
        \mathbb{E}_{z\sim\rho_t^\phi(\cdot)}\bigg[\nabla_\phi \log \rho_t^\phi(z) \mathbb{E}_{a\sim\pi_t(\cdot|s, z)}Q_t^{\pmb{\pi}}(s, a, \mu, z; \pmb{\pi})\bigg].
        % + \sum_{a\in\mathcal{A}}\pi_t(a\vert s, z)Q_t^{\pmb{\pi}}(s, a, \mu, z; \pmb{\pi})\bigg)\Bigg]
    $$
    % $\tilde{Q}$ is defined as the expectation under $\pi$ of discounted sum of reward and entropy.
    % \begin{align*}
    %     Q_t^{\pmb{\pi}}(s, a, \mu, z; \pmb{\pi}) \triangleq Q_t^{\pmb{\pi}}(s, a, \mu, z; \pmb{\pi}) + H(\pi_t)
    % \end{align*}
\end{restatable}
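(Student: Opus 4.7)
My plan is to treat this as a straightforward application of the score-function (REINFORCE) identity to the objective that $\pmb{\rho}$ is being trained against. Concretely, for a fixed state $s$, fixed time $t$, fixed population distribution $\mu$, and fixed behavioral policy $\pmb{\pi}$, the quantity we wish to increase by tuning $\phi$ is the expected action value obtained when the signal is drawn from the current correlation device and the recommended action is then drawn from $\pi_t(\cdot\mid s,z)$, namely
\begin{equation*}
J(\phi)\;\triangleq\;\sum_{z\in\mathcal{Z}}\rho_t^{\phi}(z)\,\sum_{a\in\mathcal{A}}\pi_t(a\mid s,z)\,Q_t^{\pmb{\pi}}(s,a,\mu,z;\pmb{\pi}).
\end{equation*}
The proposition is then simply the statement $\nabla_\phi J(\phi)$ equals the expression given.

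The derivation I would carry out has three short steps. First, I would note that only the factor $\rho_t^\phi(z)$ depends on $\phi$: the inner sum $f(z)\triangleq\sum_{a}\pi_t(a\mid s,z)Q_t^{\pmb{\pi}}(s,a,\mu,z;\pmb{\pi})=\mathbb{E}_{a\sim\pi_t(\cdot\mid s,z)}[Q_t^{\pmb{\pi}}(s,a,\mu,z;\pmb{\pi})]$ is independent of $\phi$, because the behavioral policy $\pmb{\pi}$ and the $Q$-function (under the fixed population policy $\pmb{\pi}$) are held fixed during the $\phi$-update. Second, since $\mathcal{Z}$ is finite I can differentiate the finite sum term-by-term and use the log-derivative identity $\nabla_\phi\rho_t^\phi(z)=\rho_t^\phi(z)\nabla_\phi\log\rho_t^\phi(z)$, valid wherever $\rho_t^\phi(z)>0$ (and the zero-mass signals contribute nothing). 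Third, folding the factor $\rho_t^\phi(z)$ back into an expectation yields
\begin{equation*}
\nabla_\phi J(\phi)\;=\;\sum_{z\in\mathcal{Z}}\rho_t^\phi(z)\,\nabla_\phi\log\rho_t^\phi(z)\,f(z)\;=\;\mathbb{E}_{z\sim\rho_t^\phi(\cdot)}\!\left[\nabla_\phi\log\rho_t^\phi(z)\,\mathbb{E}_{a\sim\pi_t(\cdot\mid s,z)}Q_t^{\pmb{\pi}}(s,a,\mu,z;\pmb{\pi})\right],
\end{equation*}
which is the claimed formula.

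I do not expect any serious technical obstacle, since $\mathcal{Z}$ is finite (so interchange of $\nabla_\phi$ and $\sum_z$ is automatic) and no continuity or dominated-convergence argument is required. The only subtlety worth flagging is the implicit assumption that $\mu$ and $\pmb{\pi}$ are treated as not depending on $\phi$ in this gradient step; this corresponds to the standard alternating-optimization viewpoint, in which $\pmb{\rho}$ is updated with the other components held fixed (analogously to an actor-only policy-gradient update where the critic is not differentiated through). With that convention the identification of $f(z)$ with the inner expectation of $Q_t^{\pmb{\pi}}$ is immediate, and the proof collapses to the one-line REINFORCE computation above.
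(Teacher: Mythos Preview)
Your proposal is correct and matches the paper's own proof essentially line for line: the paper likewise differentiates $\sum_{z}\rho_t^\phi(z)\sum_a\pi_t(a\mid s,z)Q_t^{\pmb{\pi}}(s,a,\mu,z;\pmb{\pi})$ term-by-term, applies $\nabla_\phi\rho_t^\phi(z)=\rho_t^\phi(z)\nabla_\phi\log\rho_t^\phi(z)$, and rewrites the result as the stated expectation. The only cosmetic difference is that the paper labels the differentiated quantity $\nabla_\phi\Delta_t(s,\mu,u;\pmb{\pi},\pmb{\rho})$ rather than your $J(\phi)$, but the computation and assumptions (holding $\pmb{\pi}$, $\mu$, and $Q_t^{\pmb{\pi}}$ fixed in $\phi$) are identical.
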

\begin{table*}[t!]
    \caption{Results for numerical tasks. The performative difference between the recovered policy and the ground truth policy is measured by log loss under different correlated signals. The number in the bracket is the standard deviation over 3 independent runs.}
    \label{tab:new}
    \centering
%     \resizebox{\textwidth}{!}{\begin{tabular}{l|cccccc}
% \hline
% {\color[HTML]{333333} \textbf{Tasks}}             & {\color[HTML]{333333} \begin{tabular}[c]{@{}c@{}}MFCIL\\ (Our method)\end{tabular}} & {\color[HTML]{333333} MFIRL} & {\color[HTML]{333333} MFAIRL} & {\color[HTML]{333333} BC}             & Multinomial & MaxEnt ICE \\ \hline
% {\color[HTML]{333333} Squeeze with T=\{0, 1, 2\}} & {\color[HTML]{333333} \textbf{0.339}}                                               & {\color[HTML]{333333} 3.410} & {\color[HTML]{333333} 4.042}  & {\color[HTML]{333333} 7.102}          & 0.887       & -          \\ \hline
% {\color[HTML]{333333} Squeeze with T=\{0, 1\}}    & {\color[HTML]{333333} \textbf{0.643}}                                               & {\color[HTML]{333333} 2.919} & {\color[HTML]{333333} 7.070}  & {\color[HTML]{333333} 4.982}          & 1.969       & 0.797      \\ \hline
% {\color[HTML]{333333} RPS}                        & {\color[HTML]{333333} \textbf{1.083}}                                               & {\color[HTML]{333333} 7.127} & {\color[HTML]{333333} 3.221}  & {\color[HTML]{333333} 3.226}          & 5.850       & 1.537      \\ \hline
% {\color[HTML]{333333} Flock}                      & {\color[HTML]{333333} \textbf{0.025}}                                               & {\color[HTML]{333333} 7.154} & {\color[HTML]{333333} 11.266} & {\color[HTML]{333333} 5.862} & 1.181       & -          \\ \hline
% \end{tabular}}
    \resizebox{\textwidth}{!}{\begin{tabular}{cccccccc}
        \hline
        Task                                                                                     & \begin{tabular}[c]{@{}c@{}}Correlated \\ Signal\end{tabular}               & \begin{tabular}[c]{@{}c@{}}MFCIL \\ (Our Method)\end{tabular} & MFIRL          & MFAIRL         & \begin{tabular}[c]{@{}c@{}}Logistic \\ Regression\end{tabular}    & Multinomial            & MaxEnt ICE    \\ \midrule
        \multirow{4}{*}{\begin{tabular}[c]{@{}c@{}}Squeeze with \\ $T=\{0, 1, 2\}$\end{tabular}} & $z=0$ & \textbf{0.643 (0.000)}                                        & 1.450 (2.857)  & 4.064 (0.879)  & 4.484 (0.054)          & 0.686 (0.002)          & -             \\
                                                                                                 & $z=1$ & 0.647 (0.003)                                                 & 3.245 (1.650)  & 4.144 (0.629)  & \textbf{0.000 (0.000)} & 2.577 (0.149)          & -             \\
                                                                                                 & $z=2$ & \textbf{0.020 (0.001)}                                        & 1.072 (2.229)  & 6.934 (4.447)  & 7.091 (0.107)          & 0.282 (0.087)          & -             \\
                                                                                                 & $z=3$ & 0.045 (0.005)                                                 & 7.871 (4.368)  & 1.027 (1.279)  & 10.638 (0.163)         & \textbf{0.001 (0.001)} & -             \\ \hline
        \multirow{2}{*}{\begin{tabular}[c]{@{}c@{}}Squeeze with \\ $T=\{0, 1\}$\end{tabular}}    & $z=0$   & \textbf{0.648 (0.002)}                                        & 3.828 (1.582)  & 4.067 (0.088)  & 1.985 (0.165)          & 0.991 (0.102)          & 0.946 (0.073) \\
                                                                                                 & $z=1$   & \textbf{0.638 (0.001)}                                        & 2.009 (1.191)  & 10.074 (0.174) & 2.139 (0.169)          & 2.947 (0.359)          & 0.648 (0.011) \\ \hline
        RPS                                                                                      & $z=0$   & \textbf{1.083 (0.000)}                                        & 7.127 (0.753)  & 3.221 (1.330)  & 4.805 (0.131)          & 5.850 (0.306)          & 1.537 (0.019) \\ \hline
        \multirow{4}{*}{Flock}                                                                   & $z=0$   & 0.002 (0.000)                                                 & 5.591 (0.869)  & 12.430 (2.759) & \textbf{0.000 (0.000)} & 1.383 (0.004)          & -             \\
                                                                                                 & $z=1$   & \textbf{0.016 (0.003)}                                        & 11.687 (1.158) & 13.042 (1.533) & 7.887 (0.031)          & 1.127 (0.007)          & -             \\
                                                                                                 & $z=2$   & \textbf{0.045 (0.009)}                                        & 7.500 (3.955)  & 10.065 (5.074) & 18.339 (0.010)         & 0.951 (0.009)          & -             \\
                                                                                                 & $z=3$   & \textbf{0.026 (0.003)}                                        & 3.847 (3.967)  & 9.312 (4.711)  & 35.253 (0.037)         & 1.264 (0.011)          & -             \\ \hline 
        \end{tabular}}
    \end{table*}

The population state distribution $\mu_t$ influences both the input of $D_\omega$ and transition kernel in \cref{eq:objective}. However, the population state distribution $\mu_t$ in expert demonstrations is often inaccessible. 
% In AMFCE, the mean field flow $\{\mu_t\}_{t=0}^T$ is deterministic given a fixed correlated signal sequence $\{z_{t}\}_{t=0}^{T}$ and the initial population state distribution $\mu_0$. 
We characterize $\mu_t$ using the signature of $\pmb{z}_{0:t}$ from rough path theory \cite{DBLP:conf/iclr/KidgerL21}, denoted as $\hat{\mu}_t=\mathrm{Sig}(\pmb{z}_{0:t})$, bypassing the circular reasoning problem \cite{ramponi2023on}. Please refer to \cref{sec: def-sig} for details.
\begin{algorithm}[htpb]
    \begin{algorithmic}
        \REQUIRE Expert demonstration set sampled from $(\pmb{\pi}, \pmb{\rho})$: $\mathcal{D}_E=\{s_0, z_0, a_0, s_1, z_1, a_1, \dots s_{T}, z_{T}, a_{T}\}$, initial population state distribution $\mu_0$.
        \FOR{each iteration}
        \STATE Obtain trajectories from $(\pmb{\pi}, \pmb{\rho})$ by the process: $s_0 \sim \mu_0$,  $a_t \sim \pi^{\theta}(\cdot \vert s_t, z_t)$, $s_{t+1} \sim P(\cdot\mid s_t, \mu_t)$, $z_t\sim\rho_t^\phi(\cdot)$;
        % \STATE {Approximate $\mu_{t}$ with the signature $\hat{\mu}_{t}=\mathrm{Sig}(\pmb{z}_{0:T})$;}
        \FOR{$i$ in $\{0, 1, 2, \dots\}$}
        \STATE Update $\omega$ based on the surrogate objective function \cref{eq:surrogate}.
            % \begin{align*}
            %     &\mathbb{E}_{\pmb{\pi}, \pmb{\rho}^E}\big[\sum_{t=0}^{T}\gamma^{t}\log D_\omega(s_t, a_t, \hat{\mu}_t)\big]\\
            %     &+\mathbb{E}_{\pmb{\pi}^E, \pmb{\rho}^E}\bigg[\sum_{t=0}^{T}\gamma^{t}\log\big(1- D_\omega(s_t, a_t, \hat{\mu}_t)\big)\bigg]
            %     % - \beta\mathbb{E}_{\pmb{\pi}^E, \pmb{\rho}^E}\big[\sum_{t=0}^{T}\gamma^{t}||\nabla D_\omega(s_t, a_t, \mu_t)||^2\big]
            %     % \mathbb{E}_{\chi}\big[\log D_\omega\big] + \mathbb{E}_{\chi_E}\Big[\log\big(1 - D_\omega\big)\Big] - \beta\mathbb{E}_{\chi_E}\big[||\nabla D_\omega||^2\big]
            % \end{align*}
            \ENDFOR
            \FOR{$t$ in $\{0, 1, 2, \dots\}$}
            \STATE Update $\theta$ by Actor-Critic algorithm with small step size based on the surrogate objective function \cref{eq:surrogate}.
            % \begin{align*}
            %   \mathbb{E}\Big[\nabla_\theta \rho_t^\phi(z_t)\pi_{t}^{\theta}(a_t\vert s_t, z_t)Q_t^{\pmb{\pi}^\theta}(s_t, a_t,\hat{\mu}_t, z_t; \pmb{\pi})\Big],
            % \end{align*}
            % where the expectation is taken with respect to $s_0 \sim \mu_0$,  $a_t \sim \pi^{\theta}(\cdot \vert s_t, z_t)$, $s_{t+1} \sim P(\cdot\mid s_t, \mu_t)$, $z_t\sim\rho_t^\phi(\cdot)$;
            \STATE Update $\phi$ according to \cref{gradofrho};
            \ENDFOR
            \ENDFOR
            
\STATE \textbf{Return} Policy $\pmb{\pi}^{\theta}$, correlation device $\pmb{\rho}^{\phi}$.
    \end{algorithmic}
    \caption{Mean field correlated imitation learning (MFCIL)}
    \label{algo}
\end{algorithm}
We approximately optimize the following surrogate objective function of \cref{eq:objective}.
\begin{equation}\label{eq:surrogate}        
    \begin{aligned}    
        \min_{\pmb{\pi}}&\max_{\omega}\mathbb{E}_{\pmb{\pi}, \pmb{\pi}, \pmb{\rho}^E}\bigg[\sum_{t=0}^{T}\gamma^{t}\log D_\omega(s_t, a_t, \hat{\mu}_t)\bigg]\\
        &+\mathbb{E}_{\pmb{\pi}^E, \pmb{\pi}^E, \pmb{\rho}^E}\bigg[\sum_{t=0}^{T}\gamma^{t}\log\big(1 - D_\omega(s_t, a_t, \hat{\mu}_t)\big)\bigg],
    \end{aligned}
\end{equation}
Combine the above analysis, we propose a new framework, MFCIL, to recover the AMFCE policy and the correlation device from expert demonstrations. The algorithm is shown in \cref{algo}.
Although this framework is designed for recovering AMFCE, it can also be applied to recover MFNE by setting the correlation device as Dirac distribution.
In the \cref{thm: bound}, we provide a theoretical guarantee for the quality of the policy recovered by MFCIL.
\begin{assumption}\label{assp:bound}
    The transition kernel $P(\cdot|s, a, \mu)$ and the reward function $r(s, a, \mu)$ are Lipschitz continuous with respect to population state distribution $\mu$ and have corresponding Lipschitz constants $L_P$ and $L_R$, respectively.
    The reward function is bounded by $r_{\max}$.
    % The reward function $\|r(s, a, \mu^E)\|_1\le r_{\max}$, where $\mu_t^E=\Phi(\mu_{t-1}^E, \pi_{t-1}^E, z_{t-1})$.
    The expert policy $\pmb{\pi}^E$ and recovered policy $\pmb{\pi}$ satisfy
    \begin{equation}
        \begin{aligned}
            \max_{\omega}\mathbb{E}_{\pmb{\pi}, \pmb{\pi}, \pmb{\rho}^E}&\bigg[\sum_{t=0}^{T}\gamma^{t}\log D_\omega(s_t, a_t, \hat{\mu}_t)\bigg]\\
            &+\mathbb{E}_{\pmb{\pi}^E, \pmb{\pi}^E, \pmb{\rho}^E}\bigg[\sum_{t=0}^{T}\gamma^{t}\log\big(1 - D_\omega(s_t, a_t, \hat{\mu}_t)\big)\bigg]\le \epsilon,
        \end{aligned}
    \end{equation}
    which can be achieved by MFCIL.
\end{assumption}
\begin{restatable}{theorem}{bound}
    Under \cref{assp:bound}, for a given action sequence $a_{0:T}$, the CIP of recovered policy $\pmb{\pi}$ is bounded by $
        \mathcal{R}(a_{0:T}, \pmb{\pi}, \pmb{\rho}^E)\le 2\left(2L_R+r_{\max}+\gamma TL_Pr_{\max}\right)\sqrt{2\epsilon T}.
    $.
    \label{thm: bound}
\end{restatable}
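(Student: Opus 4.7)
Since $(\pmb{\pi}^E, \pmb{\rho}^E)$ is an AMFCE, \cref{tstep} gives $\mathcal{R}(a_{0:T}, \pmb{\pi}^E, \pmb{\rho}^E) \le 0$ for every action sequence $a_{0:T}$, so it suffices to bound the nonnegative difference $\mathcal{R}(a_{0:T}, \pmb{\pi}, \pmb{\rho}^E) - \mathcal{R}(a_{0:T}, \pmb{\pi}^E, \pmb{\rho}^E)$. Expanding the CIP definition, this splits into two pieces: (i) a trajectory-reward gap $\mathbb{E}[\sum_t \gamma^t r(s_t, a_t, \mu_t^{\pmb{\pi}}) \mid a_{0:T}] - \mathbb{E}[\sum_t \gamma^t r(s_t, a_t, \mu_t^{\pmb{\pi}^E}) \mid a_{0:T}]$, where the conditional state distributions along the fixed $a_{0:T}$ evolve under the two different mean field flows; and (ii) the self-play return gap $J(\pmb{\pi}^E, \pmb{\pi}^E, \pmb{\rho}^E) - J(\pmb{\pi}, \pmb{\pi}, \pmb{\rho}^E)$.

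\textbf{From the GAIL-type assumption to a TV bound.} The first step is to apply the classical GAN saddle-point analysis to \cref{assp:bound}: at the inner optimizer $D_\omega^\star = d_t^{\pmb{\pi}}/(d_t^{\pmb{\pi}} + d_t^{\pmb{\pi}^E})$, the objective equals $\sum_t \gamma^t(-\log 4 + 2 D_{JS}(d_t^{\pmb{\pi}}, d_t^{\pmb{\pi}^E}))$, where $d_t^{\pmb{\pi}}$ denotes the occupancy measure of $(s, a, \hat{\mu})$ under $(\pmb{\pi}, \pmb{\pi}, \pmb{\rho}^E)$. Combining this with Pinsker's inequality (which relates $D_{JS}$ to squared total variation) and Cauchy--Schwarz over time yields $\sum_{t=0}^T \gamma^t \|d_t^{\pmb{\pi}} - d_t^{\pmb{\pi}^E}\|_{TV} \le \sqrt{2\epsilon T}$; since $\mu_t^{\pmb{\pi}}$ is the state marginal of $d_t^{\pmb{\pi}}$, the same bound transfers to $\sum_t \gamma^t \|\mu_t^{\pmb{\pi}} - \mu_t^{\pmb{\pi}^E}\|_{TV}$.

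\textbf{Bounding (ii) and (i).} For (ii), a two-step triangle inequality swaps the sampling occupancy $d_t$ and the reward's $\mu$-argument separately; Lipschitz continuity of $r$ with constant $L_R$ controls the $\mu$-swap while $r_{\max}$-boundedness controls the $d_t$-swap, yielding $|J(\pmb{\pi}^E, \pmb{\pi}^E, \pmb{\rho}^E) - J(\pmb{\pi}, \pmb{\pi}, \pmb{\rho}^E)| \le (L_R + r_{\max})\sqrt{2\epsilon T}$. For (i), let $q_t^{\pmb{\pi}}$ denote the conditional state distribution at time $t$ given $a_{0:T}$ under the flow $\pmb{\mu}^{\pmb{\pi}}$. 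The recursion $q_{t+1}(s') = \sum_s P(s'\mid s, a_t, \mu_t) q_t(s)$ together with the $L_P$-Lipschitz continuity of $P$ in $\mu$ yields $\|q_{t+1}^{\pmb{\pi}} - q_{t+1}^{\pmb{\pi}^E}\|_{TV} \le L_P \|\mu_t^{\pmb{\pi}} - \mu_t^{\pmb{\pi}^E}\|_{TV} + \|q_t^{\pmb{\pi}} - q_t^{\pmb{\pi}^E}\|_{TV}$, which unrolls with $q_0^{\pmb{\pi}} = q_0^{\pmb{\pi}^E} = \mu_0$ into $\|q_t^{\pmb{\pi}} - q_t^{\pmb{\pi}^E}\|_{TV} \le L_P \sum_{k<t} \|\mu_k^{\pmb{\pi}} - \mu_k^{\pmb{\pi}^E}\|_{TV}$. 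Summing against $\gamma^t$ injects a factor of $\gamma T L_P$, so Lipschitz continuity of $r$ and $r_{\max}$-boundedness give $(L_R + \gamma T L_P r_{\max})\sqrt{2\epsilon T}$ for (i). Adding (i) and (ii) produces the stated bound, with the overall factor of $2$ absorbing the symmetric direction of the triangle inequalities.

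\textbf{Main obstacle.} The key technical delicacy is avoiding an $L_P^T$ blow-up when unrolling the $q_t$-recursion. This is handled by the \emph{additive} split above: at each step only the \emph{new} $\mu$-perturbation picks up an $L_P$ factor, while the accumulated past discrepancy is carried through the transition kernel with unit operator norm (because $\sum_s P(s'\mid s, a, \cdot) q_t(s)$ is a probability mixture in $q_t$). This keeps the $L_P$-dependence linear in $T$. The other subtle point is calibrating the exact constant in the Pinsker/JS-to-TV conversion, which is what produces the $\sqrt{2\epsilon T}$ factor in the statement; after the conversion, the remainder of the argument consists only of triangle inequalities and the two Lipschitz estimates.
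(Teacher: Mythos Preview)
Your proposal is correct and mirrors the paper's proof almost exactly: the same reduction via $\mathcal{R}(a_{0:T},\pmb{\pi}^E,\pmb{\rho}^E)\le 0$, the same two-term split into the conditioned trajectory-reward gap and the self-play return gap, the same GAN-optimum $\to$ JS $\to$ Pinsker/Jensen chain yielding $\sum_t\gamma^t\|\eta_t^\pi-\eta_t^E\|\le 2\sqrt{2\epsilon T}$, and the same $(L_R,r_{\max},L_P)$ bookkeeping. The only cosmetic difference is that for piece (i) the paper packages the recursion as an induction on the conditional expected return (their Lemma~A.7) rather than tracking the conditional state law $q_t$ explicitly; your additive unrolling of $\|q_{t+1}^{\pmb\pi}-q_{t+1}^{\pmb\pi^E}\|$ is precisely the same mechanism that keeps the $L_P$-dependence linear in $T$ in the paper's argument.
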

The proof is deferred to \cref{prf: bound}.
As the value of $\epsilon$ decreases, the policy $\pmb{\pi}$ recovered by MFCIL approaches the AMFCE policy more closely.
If $\epsilon=0$, the recovered policy $\pmb{\pi}$ is an exact AMFCE policy.
We also provide the imitation gap between the recovered policy in \cref{corr: imitation-gap}.
\begin{restatable}{corollary}{imitationgap}
    The imitation gap \cite{ramponi2023on} between the recovered policy $\pmb{\pi}$ is bounded by
    $
        \max_{\hat{\pmb{\pi}}}J(\hat{\pmb{\pi}}, \pmb{\pi}, \pmb{\rho}^E)-J(\pmb{\pi}, \pmb{\pi}, \pmb{\rho}^E)\le2(3L_R+\gamma TL_Pr_{\max}+r_{\max})\sqrt{2\epsilon T}
    $
    \label{corr: imitation-gap}
\end{restatable}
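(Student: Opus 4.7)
The plan is to upgrade the per-action-sequence CIP bound of \cref{thm: bound} into a bound that holds against arbitrary policy deviations. I would first fix an arbitrary deviation $\hat{\pmb{\pi}}$ and observe that, because the deviator is a single infinitesimal agent, the mean-field flow $\{\mu_t\}$ is still generated by $\pmb{\pi}$ and $\pmb{\rho}^E$ via $\mu_t = \Phi(\mu_{t-1}, \pi_{t-1}, z_{t-1})$ and is therefore independent of $\hat{\pmb{\pi}}$. Consequently, the only source of discrepancy in $J(\hat{\pmb{\pi}}, \pmb{\pi}, \pmb{\rho}^E) - J(\pmb{\pi}, \pmb{\pi}, \pmb{\rho}^E)$ is the deviator's own $(s_t, a_t)$ trajectory, which is a single-agent quantity.

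Next, I would decompose the deviating value as a mixture over action sequences,
\[
J(\hat{\pmb{\pi}}, \pmb{\pi}, \pmb{\rho}^E) = \mathbb{E}_{a_{0:T}\sim P_{\hat{\pmb{\pi}}}}\bigl[V^{\mathrm{open}}(a_{0:T})\bigr] + \Delta,
\]
where $V^{\mathrm{open}}(a_{0:T}) = \mathbb{E}\bigl[\sum_{t=0}^T \gamma^t r(s_t, a_t, \mu_t) \mid a_{0:T}\bigr]$ is the same open-loop quantity appearing in the CIP, and $\Delta$ absorbs the mismatch between the closed-loop conditional state law under $\hat{\pmb{\pi}}$ and the environment-only conditional obtained when $a_{0:T}$ is played open-loop. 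Combining the $L_R$-Lipschitzness of $r$ from \cref{assp:bound} with the same Pinsker-type control of $\epsilon$ that underpins \cref{thm: bound} yields $|\Delta| \le 2L_R\sqrt{2\epsilon T}$, which accounts precisely for the single extra $L_R$ appearing in the corollary's constant versus the theorem's. Applying \cref{thm: bound} uniformly in $a_{0:T}$ gives $V^{\mathrm{open}}(a_{0:T}) \le J(\pmb{\pi}, \pmb{\pi}, \pmb{\rho}^E) + 2(2L_R + r_{\max} + \gamma T L_P r_{\max})\sqrt{2\epsilon T}$; averaging over $P_{\hat{\pmb{\pi}}}$, adding the $\Delta$-correction, and taking $\sup_{\hat{\pmb{\pi}}}$ produces the announced bound $2(3L_R + \gamma T L_P r_{\max} + r_{\max})\sqrt{2\epsilon T}$.

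The main obstacle will be tightly controlling $\Delta$. Because a state-dependent $\hat{\pmb{\pi}}$ leaks information about $s_t$ through its action choices, $P_{\hat{\pmb{\pi}}}(s_{0:T}\mid a_{0:T})$ does not coincide with the environment-only conditional, so a naive pathwise estimate can introduce spurious horizon factors. The clean route exploits that the mean-field argument of $r$ is pathwise identical across both comparisons, so only the Lipschitz dependence on the deviator's own state enters; this lets the closed-loop correction be absorbed into a single $2L_R\sqrt{2\epsilon T}$ term rather than compounding over $T$ or the state space.
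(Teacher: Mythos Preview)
Your route is genuinely different from the paper's, and the step you flag as ``the main obstacle'' is where it breaks.

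The paper does \emph{not} average the CIP bound of \cref{thm: bound} over action sequences. Instead it fixes the maximizing deviator $\pmb{\pi}'$ and writes a four–term telescope that transports the comparison into the \emph{expert's} game:
\[
J(\pmb{\pi}',\pmb{\pi},\pmb{\rho}^E)-J(\pmb{\pi},\pmb{\pi},\pmb{\rho}^E)
\le \underbrace{\|\cdot\|}_{\text{swap }\mu_t\to\mu_t'}
+\underbrace{\|\cdot\|}_{\text{swap pop.\ }\pmb{\pi}\to\pmb{\pi}^E}
+\underbrace{J(\pmb{\pi}',\pmb{\pi}^E,\pmb{\rho}^E)-J(\pmb{\pi}^E,\pmb{\pi}^E,\pmb{\rho}^E)}_{\le 0\text{ since }\pmb{\pi}^E\text{ is AMFCE}}
+\underbrace{J(\pmb{\pi}^E,\pmb{\pi}^E,\pmb{\rho}^E)-J(\pmb{\pi},\pmb{\pi},\pmb{\rho}^E)}_{\text{from proof of \cref{thm: bound}}},
\]
where $\mu_t'=\Phi(\mu_{t-1}',\pi_{t-1}^E,z_{t-1})$. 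The three nonzero pieces cost $2L_R\sqrt{2\epsilon T}$, $2(L_R+\gamma TL_Pr_{\max})\sqrt{2\epsilon T}$ (via \cref{lemma: simulation}), and $2(L_R+r_{\max})\sqrt{2\epsilon T}$, giving exactly the stated constant. Every transport swaps only $\pmb{\pi}\leftrightarrow\pmb{\pi}^E$ in the mean-field slot, so every cost is legitimately controlled by the Pinsker bound on $\|\eta^E-\eta^\pi\|_1$ coming from $\epsilon$.

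The gap in your argument is the claimed bound $|\Delta|\le 2L_R\sqrt{2\epsilon T}$. The quantity $\epsilon$ in \cref{assp:bound} controls only the discrepancy between the occupancy measures $\eta^{\pmb{\pi}}$ and $\eta^{\pmb{\pi}^E}$; it carries no information whatsoever about an arbitrary deviator $\hat{\pmb{\pi}}$. Your $\Delta$ is the closed-loop versus open-loop mismatch induced by the state-dependence of $\hat{\pmb{\pi}}$, and this mismatch can be of order one regardless of how small $\epsilon$ is (take $\epsilon=0$, i.e.\ $\pmb{\pi}=\pmb{\pi}^E$: $\Delta$ is still governed by $\hat{\pmb{\pi}}$). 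Nor does the $L_R$-Lipschitzness help: $L_R$ is the Lipschitz constant of $r$ in $\mu$, not in the deviator's own state $s$, so the sentence ``only the Lipschitz dependence on the deviator's own state enters'' invokes a hypothesis that \cref{assp:bound} does not provide. If you want to rescue the averaging idea, the honest route is to appeal to \cref{thm:dual}, which (at the paper's level of rigor) asserts $\Delta=0$ identically; that would in fact give the slightly smaller constant $2L_R$ in place of $3L_R$, but it is not the argument the paper runs, and it relies on the delicate identification in \cref{thm:dual} rather than on a Pinsker estimate.
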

The proof is deferred to \cref{prf: imitation-gap}.
The imitation gap in \cref{corr: imitation-gap} exhibits a polynomial dependency on the horizon.
The analysis of Ramponi et al. \cite{ramponi2023on} implies that the imitation gap between the recovered policy and the AMFCE policy has an exponential dependency on the horizon for existing practical MFG IL methods.
Therefore, our proposed MFCIL framework represents an improvement over existing practical MFG IL methods.
\section{Experiments}\label{sec:experiment}
\begin{table}[t]
    \caption{Results of predicted traffic flow for Traffic Network. The metric is log loss. The number in the bracket is the standard deviation over 3 independent runs.}
    \label{tab:traffic_network_prediction_table}
    \center
% \begin{tabular}{c|c|c|c}
% \hline
%             & \tt MFCIL              & \tt MFIRL      & \multicolumn{1}{l}{\tt MFAIRL} \\ \hline
% Lewisham    & \textbf{0.742 (0.011)} & 12.346 (0.294) & 8.893 (2.302)                  \\ \hline
% Hammersmith & \textbf{0.897 (0.002)} & 9.853 (2.892)  & 6.485 (1.940)                  \\ \hline
% Ealing      & \textbf{1.091 (0.001)} & 11.625 (0.435) & 11.609 (1.202)                 \\ \hline
% Redbridge   & \textbf{0.052 (0.011)} & 11.720 (0.633) & 4.537 (4.544)                  \\ \hline
% Enfield     & \textbf{0.394 (0.003)} & 11.750 (0.603) & 9.871 (4.052)                  \\ \hline
% Big Ben     & \textbf{1.599 (0.000)} & 7.482 (1.539)  & 12.477 (1.005)                 \\ \hline
% \end{tabular}
    % \begin{tabular}{c|cccccc}
    % \hline
    %                     & Lewisham               & Hammersmith            & Ealing                 & Redbridge              & Enfield                & Big Ben                \\ \hline
    % MFCIL  (Our Method) & \textbf{0.742 (0.011)} & \textbf{0.897 (0.002)} & \textbf{1.091 (0.001)} & \textbf{0.052 (0.011)} & \textbf{0.394 (0.003)} & \textbf{1.599 (0.000)} \\
    % MFIRL               & 12.346 (0.294)         & 9.853 (2.892)          & 11.625 (0.435)         & 11.720 (0.633)         & 11.750 (0.603)         & 7.482 (1.539)          \\ \hline
    % \end{tabular}
    \resizebox{0.45\textwidth}{!}{\begin{tabular}{cccc}
        \hline
                              & Lewisham               & Hammersmith            & Ealing                 \\ \hline
        MFCIL  (Our Method)   & \textbf{0.742 (0.011)} & \textbf{0.897 (0.002)} & \textbf{1.091 (0.001)} \\ \hline
        MFIRL                 & 12.346 (0.294)         & 9.853 (2.892)          & 11.625 (0.435)         \\ \hline
        MFAIRL                & 8.893 (2.302)          & 6.485 (1.940)          & 11.609 (1.202)         \\ \hline
        \multicolumn{1}{l}{} & Redbridge              & Enfield                & Big Ben                \\ \hline
        MFCIL  (Our Method)   & \textbf{0.052 (0.011)} & \textbf{0.394 (0.003)} & \textbf{1.599 (0.000)} \\ \hline
        MFIRL                 & 11.720 (0.633)         & 11.750 (0.603)         & 7.482 (1.539)          \\ \hline
        MFAIRL                & 4.537 (4.544)          & 9.871 (4.052)          & 12.477 (1.005)         \\ \hline
    \end{tabular}}
\end{table}
\begin{table}[t]
\caption{Results of TaxAI Environment.}
\label{tab:taxai}
\resizebox{0.45\textwidth}{!}{\begin{tabular}{cccc}
\hline
                                                                                       & MFCIL                   & MFIRL          & MFAIRL         \\ \hline
\begin{tabular}[c]{@{}c@{}}Wasserstein Distance \\ with the Expert Policy\end{tabular} & \textbf{26.681} & 33.076 & 49.532 \\ \hline
Household Reward                                                                       & \textbf{29243.837}      & 75.776         & 116.803        \\ \hline
\multicolumn{1}{l}{Government Reward}                                                  & \textbf{3355.917}       & -678.563       & -417.833 \\ \hline     
\end{tabular}}
\end{table}
\begin{figure}[htbp!]
    \centering
    \begin{subfigure}{0.45\linewidth}
        \centering
        \includegraphics[width=\linewidth]{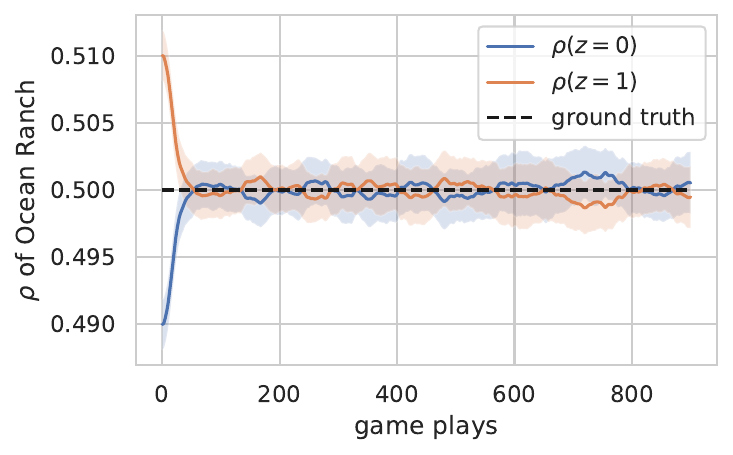}
        \caption{Recovered $\rho$ for Squeeze}
    \end{subfigure}
    \begin{subfigure}{0.45\linewidth}
        \centering
        \includegraphics[width=\linewidth]{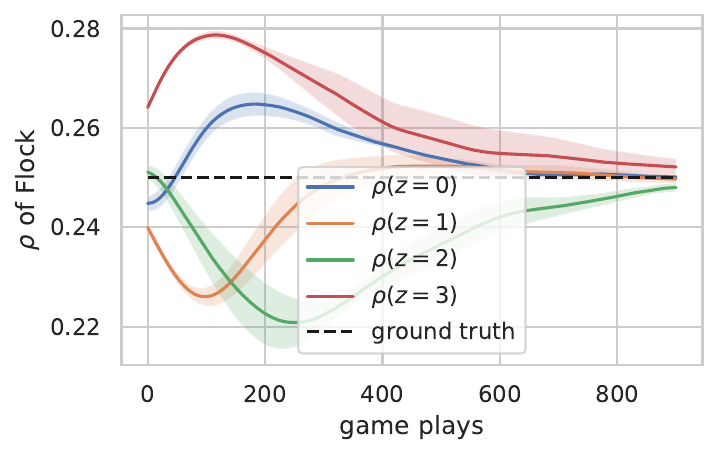}
        \caption{Recovered $\rho$ for Flock}
    \end{subfigure}
    \caption{The distribution of correlation device $\rho$ recovered by MFCIL. The solid line shows the mean and the shaded area represents the standard deviation over 3 independent runs. The dash line shows the ground truth of $\rho$.}
    \Description{The distribution of correlation device $\rho$ recovered by MFCIL. The solid line shows the mean and the shaded area represents the standard deviation over 3 independent runs. The dash line shows the ground truth of $\rho$.}\label{fig:rho_fig}
\end{figure}
\subsection{Tasks}
We evaluate MFCIL on widely used MFG tasks: Sequential Squeeze (Squeeze for short), Rock-Paper-Scissors (RPS), Flock, and a real-world traffic flow prediction task.
The first three experiments are numerical experiments. For numerical experiments, the expert policies are solved analytically.
The traffic flow prediction task is to predict the traffic flow in a complex traffic network based on the real-world data.
Given the large-scale and high-complexity nature of this task, we compare the scalability of MFCIL against MFIRL and MFAIRL in this experiment.
More details about the tasks are deferred to \cref{Detail}. 

\paragraph{Squeeze}: Sequential Squeeze is a game with multi-steps. The purpose of implementing this game is to verify the ability to recover expert policy through demonstrations sampled from a multi-step game. 

\paragraph{RPS}:
This task RPS is a traditional MFG task \cite{chen2021agent, cui2021approximately, chen2022agent}. 
The demonstrations are sampled from MFNE.
We use RPS to verify that the algorithm proposed can recover MFNE, which also supports the result in \cref{corr:relation}. 

\paragraph{Flock}:
The task Flock is based on the movement of fish. 
This task aims to evaluate the performance of algorithms in a MFG that does not satisfy the monotonicity condition \cite{DBLP:conf/ijcai/PerrinLPGEP21}.

\paragraph{Traffic Flow Prediction}:
In the Traffic Flow Prediction task, we use the traffic data of London from Uber Movement. The environment dynamic is deterministic. Our goal is to predict traffic flow in a real-world traffic network consisting of six locations: Lewisham, Hammersmith, Ealing, Redbridge, Enfield, and Big Ben. We collected the individual traveling data among these six locations from Uber Movement as expert demonstrations. The traveling data includes origin, destination, and date. The data has already been anonymized. The results are shown in \cref{tab:traffic_network_prediction_table}.

\paragraph{TaxAI}
We utilize the TaxAI environment, a comprehensive multi-agent reinforcement learning simulator that encompasses a wide range of economic activities, including those of governments, households, technological sectors, and financial intermediaries. We leveraged expert demonstrations derived from the statistical data of the 2022 Survey of Consumer Finances\footnote{https://www.federalreserve.gov/econres/scfindex.htm}.
\subsection{Baselines}
We compare our proposed MFCIL framework with state-of-the-art MFG IL algorithms, MFIRL \cite{DBLP:conf/atal/ChenZLH22}, and MFAIRL \cite{chen2022agent}. 
Since MFIRL and MFAIRL do not take the correlated signal into consideration, we treat the signature of correlated signals as an extension of the state for their algorithms, enabling a fair comparison among all methods. 
It is essential to note that our proposed method is the first IL framework to recover both the policy and the correlation device from data. 
However, as MFIRL and MFAIRL can only recover the policy, we assess the quality of the learned policies for all methods. 
Our focus lies in the difference between the recovered policy and the expert policy, as shown in \cref{tab:new} and \cref{tab:traffic_network_prediction_table}, to evaluate the quality of the policy learned by each method.
We also compare MFCIL with MaxEnt ICE, smoothed multinomial distribution over the joint actions, and logistic regression \cite{DBLP:journals/corr/WaughZB13}.
As MaxEnt ICE is designed to recover correlated equilibrium in the matrix game, we only compare MFCIL with MaxEnt ICE on tasks RPS and Sequential Squeeze with $\mathcal{T}=\{0, 1\}$.
We use the log loss, $\mathbb{E}_{a\sim\pi^E(\cdot|s, z)}[-\log(\pi(a|s, z))]$, to measure the difference between the recovered policy $\pi$ and the expert policy $\pi^E$ in all numerical tasks and the Traffic Flow Prediction Task.
The action space in TaxAI simulator is continuous. We employ the Wasserstein distance to measure the discrepancy between the recovered policy $\pi$ and the expert policy $\pi^E$. Additionally, we present both the government reward and household outcomes. The results are displayed in Table \ref{tab:taxai}. 
\subsection{Results and Analysis}
The results for numerical tasks are presented in \cref{tab:new}. Overall, MFCIL consistently outperforms other methods. While supervised learning methods, such as logistic regression and smoothed multinomial distribution, may occasionally surpass MFCIL in certain metrics, they generally suffer from higher log loss compared to MFCIL. MFIRL and MFAIRL exhibit larger deviations and higher log loss than MFCIL in both \cref{tab:new}, \cref{tab:traffic_network_prediction_table} and \cref{tab:taxai}. These results underscore the inability of MFIRL and MFAIRL to recover AMFCE and handle time-varying correlated signals effectively.
MFCIL consistently outperforms MFIRL and MFAIRL in the Squeeze and Flock tasks because it is the first IL framework capable of recovering AMFCE. This capability allows MFCIL to handle more general scenarios where MFNE-based frameworks may struggle. In the RPS task, MFCIL surpasses other algorithms for two key reasons. Theoretically, MFCIL achieves a significantly lower bound on the error between the occupancy measure of the recovered policy and the expert policy compared to traditional MFIL methods. Since MFNE is a subclass of AMFCE, MFCIL naturally outperforms others in this task. Technically, MFCIL leverages the correlated signal sequence to directly characterize the population distribution, bypassing the variance introduced by estimating the population distribution from samples, resulting in lower deviation.
% These findings also support \cref{corr: imitation-gap}, showing a lower bound on the error between the occupancy measure of the recovered policy and the expert policy. Additionally, MFCIL leverages the correlated signal sequence to directly characterize the population distribution, avoiding the additional estimation error inherent in traditional methods, enhancing accuracy. While MFIRL and MFAIRL estimate it from samples, suffering from instability (larger deviations).  
% Despite considering correlated signals as an extension of the state, MFIRL and MFAIRL yield biased rewards because the ground truth reward is independent of the correlated signal. Furthermore, MFCIL introduces a regularizer $\psi$ for the reward function to mitigate overfitting, surpassing MFIRL and MFAIRL in the RPS task when expert demonstrations are sampled from MFNE. 
MaxEnt ICE performs poorly due to its limited reward function class, assuming a linear reward structure. \cref{fig:rho_fig} illustrates that MFCIL can recover the correlation device with rapid convergence speed. 
% We also compare our algorithms with a traditional multi-agent imitation learning method, behavioral cloning (BC) \cite{DBLP:conf/nips/SongRSE18}. The results of this comparison are presented in \cref{tab:bc}, demonstrating the performance of our method relative to BC. The results indicate that our method consistently outperforms BC across all evaluated tasks, suggesting it is more effective in handling the complexities in mean field games.
\section{Conclusion}
In this paper, we investigated the problem of IL for MFGs with time-varying correlated signals. We proposed a novel equilibrium concept, AMFCE, which is better suited for real-world scenarios where the behavior of the entire population is influenced by time-varying correlated signals. Based on this equilibrium concept, we introduced a novel IL framework, MFCIL, to recover the AMFCE policy and correlation device from demonstrations.
Theoretically, we proved that the performance difference and imitation gap between the recovered policy and the expert policy are bounded by a polynomial function of the horizon, which is a significant improvement over existing practical MFG IL results. Empirically, we evaluated our method on several tasks, including one from the real world. 
Our experimental results showed that our method outperforms state-of-the-art MFG IL algorithms. These results highlight the potential of our method to predict and explain large population behavior under correlated signals.

%%%%%%%%%%%%%%%%%%%%%%%%%%%%%%%%%%%%%%%%%%%%%%%%%%%%%%%%%%%%%%%%%%%%%%%%

%%% The next two lines define, first, the bibliography style to be 
%%% applied, and, second, the bibliography file to be used.

\bibliographystyle{ACM-Reference-Format} 
\bibliography{sample}
\appendix
% \section{Algorithm}

\section{Proof}

\subsection{The relationship between AMFCE and CE}\label{sec:relation}
In this subsection, we prove the relationship between AMFCE and CE.
AMFCE in the mean field game approximates the correlated equilibrium in the finite agent setting.
We first consider the policy swap function $u: \Pi\rightarrow\Pi$, mapping a policy $\pmb{\pi}\in\Pi$ into another policy $u(\pmb{\pi})\in\Pi$.

Beginning with the definition of the AMFCE,
$$\begin{aligned}
\mathbb{E}_{z_T \sim \rho_T, a_T \sim \pi_T(\cdot|s_T, z_T)}\left[r(s_T, a_T', \mu_T) - r(s_T, a_T, \mu_T)\right] \leq 0, \quad \forall a_T' \in \mathcal{A}
\end{aligned}$$
We can deduce that:
$$\begin{aligned}
    \mathbb{E}_{z_T\sim{\rho}_T, a_T\sim\pi_T(\cdot|s_T, z_T), a\sim u(\pi)(\cdot|s_T, z_T)}\left[r(s_T, a, \mu_T)-r(s_T, a_T, \mu_T)\right]\leq 0
\end{aligned}$$
Hence, for $t=T$, the inequality $Q_t^{u(\pmb{\pi})}(s_t, a_t, \mu_t, z; \pmb{\pi})\leq Q_t^{\pmb{\pi}}(s_t, a_t, \mu_t, z; \pmb{\pi})$ holds for $a_t\in\mathcal{A}$.

Assuming this inequality holds for $t=k$, we can derive from the Bellman Equation:

$$
\begin{aligned}
Q_{k-1}^{u(\pmb{\pi})}(s, a, \mu, z; \pmb{\pi})\leq r(s, a, \mu)+\gamma \mathbb{E}\left[Q_k^{u(\pmb{\pi})}(s', a', \Phi(\mu, \pi_{k-1}, z); \pmb{\pi})\right]\\\leq r(s, a, \mu)+\gamma \mathbb{E}\left[Q_k^{\pmb{\pi}}(s', a', \Phi(\mu, \pi_{k-1}, z); \pmb{\pi})\right]=Q_{k-1}^{\pmb{\pi}}(s, a, \mu, z; \pmb{\pi})
\end{aligned}
$$
By induction, we establish:

$$
Q_1^{u(\pmb{\pi})}(s, a, \mu, z; \pmb{\pi})\leq Q_1^{\pmb{\pi}}(s, a, \mu, z; \pmb{\pi}), \forall a\in\mathcal{A}
$$
Therefore, 
$$\mathbb{E}_{u(\pmb{\pi}), \pmb{\pi}, \pmb{\rho}}\left[\sum_{t=1}^{T}\gamma^tr(s_t, a_t, \mu_t)\right]-\mathbb{E}_{\pmb{\pi}, \pmb{\pi}, \pmb{\rho}}\left[\sum_{t=1}^{T}\gamma^tr(s_t, a_t, \mu_t)\right]\leq 0$$

For any mean field game $\mathcal{G}$, we can associate a stochastic game $\mathcal{G}^N$ with $N$ exchangeable players. $\mathcal{G}^N$ shares the same state space, action space, and initial state as $\mathcal{G}$. The behavior of one player in $\mathcal{G}^N$ depends solely on the population state distribution $\bar{\mu}$.

$$\bar{\mu}(s)=\frac1N\sum_{n=1}^N\mathbb{I}_{\{s^i=s\}}$$

The reward function $r^N(s, a, \bar{\mu})$ and transition kernel $P^N(\cdot|s, a, \bar{\mu})$ of $\mathcal{G}^N$ are identical to $\mathcal{G}$.

From the Theorem 3.3.2 \cite{tembine2009mean}, we have that the 
\begin{align*}
    \lim_{N\to\infty}\mathbb{E}_{u(\pmb{\pi}^i), \pmb{\pi}^{-i}, \pmb{\rho}}\left[\sum_{t=1}^{T}\gamma^tr^N(s_t^i, a_t^i, \mu_t)\right]-E_{\pmb{\pi}^i, \pmb{\pi}^{-i}, \pmb{\rho}}\left[\sum_{t=1}^{T}\gamma^tr^N(s_t, a_t^i, \mu_t)\right]\\=E_{u(\pmb{\pi}), \pmb{\pi}, \pmb{\rho}}\left[\sum_{t=1}^{T}\gamma^tr(s_t, a_t, \mu_t)\right]-E_{\pmb{\pi}, \pmb{\pi}, \pmb{\rho}}\left[\sum_{t=1}^{T}\gamma^tr(s_t, a_t, \mu_t)\right]\leq 0
\end{align*}

Therefore, AMFCE in the mean field game $\mathcal{G}$ approximates the correlated equilibrium in the finite agent setting.
\subsection{Proof of Bellman equation}\label{sec:prfbellman}
In this subsection, we prove the Bellman equation of $Q^{\pmb{\pi}}$ and the optimal action value function $Q^*$.
\begin{align}
    Q_{t}^{\pmb{\pi}}(s, a, \mu, z; \pmb{\pi}')
    =r(s, a, \mu)+\gamma\mathbb{E}\bigg[Q_{t+1}^{\pmb{\pi}}(s', a', \Phi(\mu, \pi'_t, z), z'; \pmb{\pi}')\bigg],\nonumber
\end{align}
Here, the expectation is taken with respect to $z'\sim\rho_{t+1}(\cdot)$, $s'\sim P(\cdot|s, a, \mu)$, and $a'\sim\pi_{t+1}(\cdot|s, z')$. This is conditioned on $(s_t, a_t, \mu_t, z_t)=(s, a, \mu, z)$.

\begin{proof}
    \begin{align}\label{eq:Qf1}
    &Q_{t}^{\pmb{\pi}}(s, a, \mu, z; \pmb{\pi}')\\=&r(s, a, \mu)\\
    &+\gamma\mathbb{E}_{\pmb{\pi}, \pmb{\pi}', \pmb{\rho}}\left[\sum_{i=t+1}^{T}\gamma^{i-t-1}r(s_i, a_i, \mu_i)\middle|(s_t, a_t,\mu_t, z_t)=(s, a, \mu, z)\right]\nonumber\\
    =&r(s, a, \mu)+\gamma\mathbb{E}_{\pmb{\pi}, \pmb{\pi}', \pmb{\rho}}\Big[r(s_{t+1}, a_{t+1}, \Phi(\mu, \pi'_t, z))\nonumber\\
    &+\gamma\sum_{i=t+2}^{T}\gamma^{i-t-2}r(s_i, a_i, \mu_i)\Big|(s_t, a_t,\mu_t, z_t)=(s, a, \mu, z)\Big],
\end{align}
% where $\mathbb{E}_{\pmb{\pi}'}[\sum_{i=k}^{T}\gamma^{i-k}r(s_i, a_i, \mu_i)]$ is the expectation taken with respect to $z_{i}\sim\rho_{i}(\cdot)$, $a_{i}\sim\pi_{i}(\cdot|s_{i}, z_{i})$, $s_{i+1}\sim P(\cdot|s_i, a_i, \mu_i)$, $\mu_i(\cdot)=\sum_{a\in\mathcal{A}}\sum_{s\in\mathcal{S}}\mu_{i-1}(s)P(\cdot|s, a, \mu_{i-1})\pi'_{i-1}(a|s, z_{i-1})$, $\forall i\in\{t+1, t+2, \cdots, T\}$.
From the definition of action value function $Q_t^{\pmb{\pi}}$, we have
\begin{align}\label{eq:Qf2}
    &\mathbb{E}_{\pmb{\pi}, \pmb{\pi}', \pmb{\rho}}\Big[r\big(s', a', \Phi(\mu, \pi'_t, z)\big)+\gamma\sum_{i=t+2}^{T}\gamma^{i-t-2}r(s_i, a_i, \mu_i)\Big]\nonumber\\
    =&\mathbb{E}\bigg[r\big(s', a', \Phi(\mu, \pi'_t, z)\big)\nonumber+\gamma\mathbb{E}_{\pmb{\pi}, \pmb{\pi}', \pmb{\rho}}\big[\sum_{i=t+2}^{T}\gamma^{i-t-2}r(s_i, a_i, \mu_i)\big]\bigg]\nonumber\\
    =&\mathbb{E}\bigg[Q_{t+1}^{\pmb{\pi}}\big(s', a', \Phi(\mu, \pi'_t, z), z'; \pmb{\pi}'\big)\bigg],
\end{align}
where the outer expectation is taken with respect to $z'\sim\rho_{t+1}(\cdot), s'\sim P(\cdot|s, a, \mu), a'\sim\pi(\cdot|s, z)$. The outer expectation is the conditional expectation given $(s_t, a_t, \mu_t, z_t)=(s, a, \mu, z)$ and the inner expectation is the conditional expectation given $(s_{t+1}, a_{t+1}, \mu_{t+1}, z_{t+1})=(s', a', \Phi(\mu, \pi'_t, z), z')$. We omit the conditions $(s_t, a_t, \mu_t, z_t)=(s, a, \mu, z)$ and $(s_{t+1}, a_{t+1}, \mu_{t+1}, z_{t+1})=(s', a', \Phi(\mu, \pi'_t, z), z')$ for brevity.
Combine \cref{eq:Qf1} and \cref{eq:Qf2}, we get the Bellman equation.
\begin{align*}
    &Q_{t}^{\pmb{\pi}}(s, a, \mu, z; \pmb{\pi}')=r(s, a, \mu)+\gamma\mathbb{E}\bigg[Q_{t+1}^{\pmb{\pi}}\big(s', a', \Phi(\mu, \pi'_t, z), z'; \pmb{\pi}'\big)\bigg],
\end{align*}
where expectation is taken with respect to $z'\sim\rho_{t+1}(\cdot), s'\sim P(\cdot|s, a, \mu), a'\sim\pi_t(\cdot|s, z)$.
\end{proof}
Similarly, we define the optimal action value function $Q_t^*(s, a, \mu, z; \pmb{\pi}')$ as the action value function associated with the optimal individual policy $\pmb{\pi}^*$ when population adheres to policy $\pmb{\pi}'$. It is easy to show that $Q^*$ satisfies the following Bellman equation:
\begin{align}
    Q_t^*(s, a, \mu, z; \pmb{\pi}')=r(s, a, \mu)+\gamma\max_{a'\in\mathcal{A}}\mathbb{E}\bigg[Q_{t+1}^*(s', a', \Phi(\mu, \pi'_t, z), z'; \pmb{\pi}')\bigg],
\end{align}
where the expectation is taken with respect to $z'\sim\rho_{t+1}(\cdot), s'\sim P(\cdot\mid s, a, \mu_t)$. This is conditioned on $(s_t, a_t, \mu_t, z_t)=(s, a, \mu, z)$.

It is worth noting that if the policy of population $\pmb{\pi'}$ is fixed, $Q_T^*(s, a, \mu, z; \pmb{\pi}')\geq Q_T^{\pmb{\pi}}(s, a, \mu, z; \pmb{\pi}')$ for any $\pmb{\pi}$. Then by induction, it holds that $Q_t^*(s, a, \mu, z; \pmb{\pi}')\geq Q_t^{\pmb{\pi}}(s, a, \mu, z; \pmb{\pi}')$ for all $t\in \mathcal{T}$.

\subsection{Proof of \cref{thm:fixed}}
\begin{lemma}\label{prop}
    Policy $\pmb{\pi}'$ is the best response of $\pmb{\pi}$ given $\pmb{\rho}$ if and only if 
    $\sum_{z\in\mathcal{Z}}\rho_t(z)\pi'_t(a|s, z) > 0$ is a sufficient condition of $a\in\mathop{\arg\max}_{a' \in \mathcal A} \mathbb{E}_{z\sim\rho^{\rm pred}_t(\cdot|\mathcal{I}_t)}Q^*(s, a', \mu, z; \pmb{\pi})$, $\forall t\in\mathcal{T}$. 
\end{lemma}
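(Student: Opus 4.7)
The plan is to prove both directions of this iff statement via dynamic programming, building on the Bellman equations for $Q^{\pmb{\pi}'}$ and $Q^{*}$ established in \cref{sec:prfbellman}. The pivot is the Bellman optimality principle: any best-response policy must, at every time step, place positive probability only on actions that maximize the expected continuation value with respect to the relevant posterior over the signal.

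For the ($\Leftarrow$) direction, I would proceed by backward induction on $t$. The base case $t = T$ is a one-step calculation: the argmax hypothesis on $\pi'_T$ directly identifies $V^{\pmb{\pi}'}_T(s, \mu) = V^{*}_T(s, \mu)$ for every $(s, \mu)$. For the inductive step from $t+1$ to $t$, the Bellman equation yields $Q^{\pmb{\pi}'}_t(s, a, \mu, z; \pmb{\pi}) = Q^{*}_t(s, a, \mu, z; \pmb{\pi})$ once the inductive hypothesis $V^{\pmb{\pi}'}_{t+1} = V^{*}_{t+1}$ is substituted; combined with the argmax condition on $\pi'_t$, this yields $V^{\pmb{\pi}'}_t = V^{*}_t$. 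Propagating this to $t = 0$ gives $J(\pmb{\pi}', \pmb{\pi}, \pmb{\rho}) = \max_{\pmb{\pi}''} J(\pmb{\pi}'', \pmb{\pi}, \pmb{\rho})$, so $\pmb{\pi}'$ is a best response.

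For the ($\Rightarrow$) direction, I would argue by contraposition. Suppose that some tuple $(t, s, a)$ violates the argmax condition while carrying positive marginal mass $\sum_z \rho_t(z)\pi'_t(a|s, z) > 0$. Pick any genuine maximizer $a^{\star} \in \arg\max_{a'}\mathbb{E}_{z\sim\rho^{\rm pred}_t(\cdot|\mathcal{I}_t)}[Q^{*}(s, a', \mu, z; \pmb{\pi})]$ and construct $\tilde{\pmb{\pi}}$ that agrees with $\pmb{\pi}'$ everywhere except at $(t, s)$, where it reallocates the marginal mass on $a$ to $a^{\star}$. Using the Bellman equation, the difference $J(\tilde{\pmb{\pi}}, \pmb{\pi}, \pmb{\rho}) - J(\pmb{\pi}', \pmb{\pi}, \pmb{\rho})$ decomposes into a strictly positive term localized at $(t, s)$, contradicting the best-response assumption on $\pmb{\pi}'$.

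The main obstacle will be the careful bookkeeping that connects the per-signal optimality condition (the ``standard'' Bellman characterization, $\pi'_t(a|s, z) > 0 \Rightarrow a \in \arg\max_{a'} Q^{*}(s, a', \mu, z; \pmb{\pi})$) with the integrated condition in the lemma that uses the marginal $\sum_z \rho_t(z)\pi'_t(a|s, z)$ and the posterior $\rho^{\rm pred}_t$. The key algebraic identity is the Bayes rewriting
\[
\sum_{z} \rho_t(z)\pi'_t(a|s, z)\,Q^{*}(s, a, \mu, z; \pmb{\pi}) = \Bigl(\textstyle\sum_{z}\rho_t(z)\pi'_t(a|s, z)\Bigr)\cdot\mathbb{E}_{z\sim\rho^{\rm pred}_t}[Q^{*}(s, a, \mu, z; \pmb{\pi})],
\]
which re-expresses the one-step expected reward in a ``marginal mass $\times$ conditional value'' form. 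This identity makes the two characterizations equivalent whenever the marginal on $a$ is positive, and care is required because $\rho^{\rm pred}_t$ in \cref{predict_z} is formed via the population policy $\pi_t$, which is the natural posterior in the AMFCE context where the agent treats its private signal as a recommendation drawn from $\pi_t$.
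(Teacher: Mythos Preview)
Your proposal is correct and follows essentially the same strategy as the paper: both directions rest on the Bellman optimality principle, with the ($\Leftarrow$) direction handled by backward induction showing $V^{\pmb{\pi}'}_t = V^{*}_t$, and the ($\Rightarrow$) direction by contradiction. The only minor difference is in the ($\Rightarrow$) argument: you construct a \emph{local} modification $\tilde{\pmb{\pi}}$ that reallocates mass at the offending $(t,s,a)$ to a maximizer $a^\star$, whereas the paper simply invokes the existence of a \emph{globally} optimal policy $\pmb{\pi}^{*}$ satisfying the argmax condition everywhere and observes that its return strictly exceeds that of $\pmb{\pi}'$. Both are standard devices; your local-improvement version is slightly more explicit and avoids having to assert the existence of $\pmb{\pi}^{*}$ separately. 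The Bayes rewriting identity you single out is precisely the bridge between the marginal condition $\sum_z\rho_t(z)\pi'_t(a|s,z)>0$ and the posterior-weighted $Q^{*}$, and the caveat you raise about which policy enters $\rho_t^{\rm pred}$ is a genuine notational wrinkle that the paper's proof also leaves implicit.
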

% where $\rho^{\rm pred}_t$ is defined in \ref{predict_z} and $Q^*$ is defined in\eqref{Q_opt}. 
\begin{proof}\label{prfprop}
    We denote 
    \begin{align*}
        \mathcal{Q}_t^{\pmb{\pi}}(s, a, \mu, \mathcal{I}_t; \pmb{\pi})=\mathbb{E}_{z\sim\rho^{\rm pred}_t(\cdot|\mathcal{I}_t)}Q_t^{\pmb{\pi}}(s, a, \mu, z; \pmb{\pi})
    \end{align*} 
    % where $\boldsymbol{\mu}$ is consistent with $\boldsymbol{\pi}$ and $\pmb{\rho}$.
    and $\mathcal{Q}_t^{*}(s, a, \mu, \mathcal{I}_t; \pmb{\pi})=\mathbb{E}_{z\sim\rho^{\rm pred}_t(\cdot|\mathcal{I}_t)}Q_t^{*}(s, a, \mu, z; \pmb{\pi})$.

    If the policy $\pmb{\pi}'\in\mathrm{BR}(\pmb{\pi}; \pmb{\rho})$, representing the best response of policy $\pmb{\pi}$ given correlation device $\pmb{\rho}$, and the condition $$\sum_{z\in\mathcal{Z}}\rho_{t}(z)\pi't(a|s, z) > 0$$ is not sufficient for $a\in\mathop{\arg\max}{a' \in \mathcal A} \mathcal{Q}t^{*}(s, a, \mu, \mathcal{I}t; \pmb{\pi})$, then there exists a time step $t\in\mathcal{T}$ such that $\sum{z\in\mathcal{Z}}\rho{t}(z)\pi't(a|s, z) > 0$, while $a \not\in \mathop{\arg\max}{a' \in \mathcal A} \mathcal{Q}_t^{*}(s, a', \mu, \mathcal{I}_t; \pmb{\pi})$.
    
    If $\pmb{\pi}$ and $\pmb{\rho}$ are fixed, the mean field flow is also fixed. Finding the best response of $\pmb{\pi}$ is equivalent to solving an MDP.
    Then the expected return is $J(\pmb{\pi}', \pmb{\pi}, \pmb{\rho})=\mathbb{E}\left[\mathcal{Q}_0^{\pmb{\pi}'}(s_0, a_0, \mu_0, \mathcal{I}_0; \pmb{\pi})\right]$, where the expectation is taken with respect to  $z\sim\rho_0(\cdot)$, $s_0\sim\mu_0$, $a_0\sim\pi'_0(\cdot|s_0, z_0)$. 
    % The value function is the expected return of given
    
    We assume that there exists $\pmb{\pi}^*$ such that $\sum_{z\in\mathcal{Z}}\rho_{t}(z)\pi_t^*(a|s, z) > 0$ is sufficient condition of $a \in \mathop{\arg\max}_{a' \in \mathcal A} \mathcal{Q}_t^{*}(s, a, \mu, \mathcal{I}_t; \pmb{\pi})$. The expected return of $\pmb{\pi}^{*}$ is higher than the expected return of $\pmb{\pi}'$ as suboptimal action is impossible to be sampled in the MDP under the population policy $\pmb{\pi}$, which conflicts with the assumption.
    
    If there exists $\pmb{\pi}'$ such that for all $a \in \mathop{\arg\max}_{a' \in \mathcal A} \mathcal{Q}_t^{*}(s, a, \mu, \mathcal{I}_t; \pmb{\pi})$, we have $\sum_{z\in\mathcal{Z}}\rho_{t}(z)\pi'_t(a|s, z) > 0$ is true. 
    % $\sum_{z\in\mathcal{Z}}\rho_{t}(z)\pi_t(a|s, z) > 0$, 
    Using the induction, we have $\mathbb{E}\left[\mathcal{Q}_0^{\pmb{\pi}'}(s_0, a_0, \mu_0, \mathcal{I}_0; \pmb{\pi})\right]=\max_{\pmb{\tilde{\pi}}}\mathbb{E}\left[\mathcal{Q}_0^{\pmb{\tilde{\pi}}}(s_0, a_0, \mu_0, \mathcal{I}_0; \pmb{\pi})\right]$, where the first expectation is taken with respect to  $z\sim\rho_0(\cdot)$, $s_0\sim\mu_0$, $a_0\sim\pi'_0(\cdot|s_0, z_0)$ and the second expectation is taken with respect to  $z\sim\rho_0(\cdot)$, $s_0\sim\mu_0$, $a_0\sim\tilde{\pi}_0(\cdot|s_0, z_0)$. So the $\pmb{\pi}'$ is the best response of $\pmb{\pi}$ given correlation device $\pmb{\rho}$.
\end{proof}
\begin{lemma}\label{closed_graph}
    $\operatorname{BR}(\boldsymbol{\pi}; \pmb{\rho})$ has a closed graph.
\end{lemma}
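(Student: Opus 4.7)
The plan is to verify the closed-graph property of $\mathrm{BR}(\cdot;\pmb{\rho})$ by a direct sequential argument that, at its heart, amounts to a Berge-type Maximum Theorem applied to the joint expected return
$$
J(\pmb{\pi}', \pmb{\pi}, \pmb{\rho}) = \mathbb{E}_{\pmb{\pi}', \pmb{\pi}, \pmb{\rho}}\!\left[\sum_{t=0}^T \gamma^t r(s_t, a_t, \mu_t)\right].
$$
Concretely, I would fix $\pmb{\rho}$ and take any sequence $(\pmb{\pi}_n, \pmb{\pi}'_n)$ in the graph of $\mathrm{BR}(\cdot;\pmb{\rho})$ with $\pmb{\pi}_n \to \pmb{\pi}$ and $\pmb{\pi}'_n \to \pmb{\pi}'$ in the finite-horizon product of simplices on $\mathcal{A}$. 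The goal is to show $\pmb{\pi}' \in \mathrm{BR}(\pmb{\pi};\pmb{\rho})$. By the definition of $\mathrm{BR}$, we have $J(\pmb{\pi}'_n, \pmb{\pi}_n, \pmb{\rho}) \geq J(\tilde{\pmb{\pi}}, \pmb{\pi}_n, \pmb{\rho})$ for every admissible $\tilde{\pmb{\pi}}$. If I can pass to the limit in both sides, the conclusion is immediate, so everything reduces to continuity of $J$ in its first two arguments.

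The continuity of $J$ is established in two stages. First, I would show by induction on $t \in \mathcal{T}$ that, for any realization of the correlated signal sequence $z_{0:t}$, the mean field flow $\mu_t$ produced by the recursion $\mu_{t+1} = \Phi(\mu_t, \pi_t, z_t)$ is continuous in $\pmb{\pi}$. The base case $\mu_0 = \mu_0$ is trivial; the inductive step follows because $\Phi$ is a finite sum of products of $P(\cdot|s, a, \mu_t)$, $\pi_t(a|s, z_t)$, and $\mu_t(s)$, and $P$ is continuous in $\mu$ by the hypothesis of \cref{thm:fixed}. Second, I would expand $J$ explicitly as a finite sum over trajectories weighted by the product of $\rho_t(z_t)$, $\pi'_t(a_t|s_t, z_t)$, $\pi_t(a_t|s_t, z_t)$, and $P(s_{t+1}|s_t, a_t, \mu_t)$, multiplied by $\gamma^t r(s_t, a_t, \mu_t)$. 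Since each $\mu_t$ is continuous in $\pmb{\pi}$ by the first stage, and $r$ and $P$ are continuous in $\mu$ and bounded, each summand is continuous in $(\pmb{\pi}', \pmb{\pi})$, and boundedness lets me exchange the limit with the finite sum.

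Having shown joint continuity of $J$, the closed-graph property follows by a one-line limit argument: from $J(\pmb{\pi}'_n, \pmb{\pi}_n, \pmb{\rho}) \geq J(\tilde{\pmb{\pi}}, \pmb{\pi}_n, \pmb{\rho})$, pass to the limit using continuity of $J$ to get $J(\pmb{\pi}', \pmb{\pi}, \pmb{\rho}) \geq J(\tilde{\pmb{\pi}}, \pmb{\pi}, \pmb{\rho})$ for every $\tilde{\pmb{\pi}}$, i.e., $\pmb{\pi}' \in \mathrm{BR}(\pmb{\pi};\pmb{\rho})$. Compactness of the policy space (a finite product of probability simplices) ensures the limits of the sequences are well-defined admissible policies and that the supremum defining $\mathrm{BR}$ is attained.

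The main technical obstacle is the propagation of continuity through the mean field recursion, because every $\mu_t$ depends on the entire history $(\pi_0, \ldots, \pi_{t-1})$ and in turn enters both the current reward and all subsequent transitions. Careful bookkeeping over the finite horizon $\mathcal{T}$ is required, but no analytic machinery beyond the continuity-and-boundedness assumptions already imposed in \cref{thm:fixed} is needed. The rest of the argument is essentially an instance of Berge's Maximum Theorem and is standard.
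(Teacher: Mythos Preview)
Your approach is correct and takes a genuinely different route from the paper's own proof. The paper does not work directly with the expected return $J$; instead it first invokes the preceding characterization (\cref{prop}) of $\mathrm{BR}(\pmb{\pi};\pmb{\rho})$ in terms of the optimal action-value function $\mathcal{Q}^*_t$, and then argues by contradiction: assuming $\pmb{\pi}_n\to\pmb{\pi}$, $\pmb{\pi}'_n\to\pmb{\pi}'$, $\pmb{\pi}_n\in\mathrm{BR}(\pmb{\pi}'_n;\pmb{\rho})$ but $\pmb{\pi}\notin\mathrm{BR}(\pmb{\pi}';\pmb{\rho})$, it locates an action $a$ and time $t$ that violate the $\arg\max$ condition at the limit, and uses continuity of $\mathcal{Q}^*_t$ in the population policy together with an $\epsilon$-gap argument to obtain a contradiction. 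Your Berge-style argument is more self-contained: it bypasses the action-value characterization entirely and works at the level of $J$, so \cref{prop} is not needed. The paper's route, on the other hand, reuses the same $\mathcal{Q}^*$ machinery that drives the convexity lemma (\cref{cov}), so in context it gives a more unified treatment of the two lemmas feeding into Kakutani. One minor slip in your write-up: in the trajectory expansion of $J(\pmb{\pi}',\pmb{\pi},\pmb{\rho})$ the factor $\pi_t(a_t\mid s_t,z_t)$ should not appear as a multiplicative weight on the representative agent's trajectory; the population policy $\pmb{\pi}$ enters only through the mean-field recursion for $\mu_t$. This does not affect the soundness of your continuity argument.
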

\begin{proof}
    % It is obvious that $\mathcal{Q}(\boldsymbol{\pi}, \rho, t, s, a)$ is continuous for $(\boldsymbol{\pi}, \rho)$ and $\boldsymbol{\mu}$.
    We assume that $\lim_{n\to\infty}\pmb{\pi}_n = \pmb{\pi}$, $\lim_{n\to\infty}\pmb{\pi}'_n = \pmb{\pi}'$,
    $\pmb{\pi}_n\in \operatorname{BR}(\pmb{\pi}'_n; \pmb{\rho})$, but $\pmb{\pi}\not\in\operatorname{BR}(\pmb{\pi}'; \pmb{\rho})$.
    % but  $\lim_{n\to\infty}\operatorname{BR}(\boldsymbol{\pi}_n, \rho_n)\neq \operatorname{BR}(\boldsymbol{\pi}, \rho)$.
    Consequently, there exists $a\in\mathcal{A}$ that $\sum_{z\in\mathcal{Z}}\rho_{t}(z)\pi_{n, t}(a|s, z) > 0, a\in\mathop{\arg\max}_{a'}\mathcal{Q}_t^{*}(s, a', \mu, \mathcal{I}_t; \pmb{\pi}'_n)$, while $a\not\in\mathop{\arg\max}_{a'}\mathcal{Q}_t^{*}(s, a', \mu, \mathcal{I}_t; \pmb{\pi}')$.
    % $\mathcal{I}_t=\{\rho_t, a, \pi'_t, s, z_{t-1}, \mu_{t-1}\}$ and $\mu=\Phi(\mu_{t-1}, \pi'_{t-1}, z_{t-1})$.
    
    Let $a^\star = \mathop{\arg\max}_{a'}\mathcal{Q}_t^{*}(s, a', \mu, \mathcal{I}_t; \pmb{\pi}_n')$. We denote $\epsilon$ as the gap of action value function.
    \begin{align*}
        \mathcal{Q}_t^{*}(s, a^{\star}, \mu, \mathcal{I}_t; \pmb{\pi}_n') - \mathcal{Q}_t^{*}(s, a, \mu, \mathcal{I}_t; \pmb{\pi}_n') = \epsilon > 0
    \end{align*}
    From the continuity of $\mathcal{Q}_t^{*}(s, a, \mu, \mathcal{I}_t; \pmb{\pi}')=\mathbb{E}_{z\sim\rho_t(\cdot)}Q_{t}^{*}(s, a, \mu, z; \pmb{\pi}')$, there exists $N \in \mathbb{N}$ such that $\vert\mathcal{Q}_t^{*}(s, a, \mu, \mathcal{I}_t; \pmb{\pi}')-\mathcal{Q}_t^{*}(s, a, \mu, \mathcal{I}_t; \pmb{\pi}'_{n})\vert<\frac{\epsilon}{2}$, $\forall n > N, a'\in\mathcal{A}$.

    Then we can induce that
    \begin{align*}
        &\mathcal{Q}_t^{*}(s, a^{\star}, \mu, \mathcal{I}_t; \pmb{\pi}')-\mathcal{Q}_t^{*}(s, a, \mu, \mathcal{I}_t; \pmb{\pi}') \\
        =& \mathcal{Q}_t^{*}(s, a^{\star}, \mu, \mathcal{I}_t; \pmb{\pi}')+\mathcal{Q}_t^{*}(s, a^{\star}, \mu, \mathcal{I}_t; \pmb{\pi}_n')\\
        &-\mathcal{Q}_t^{*}(s, a^{\star}, \mu, \mathcal{I}_t; \pmb{\pi}_n')+\mathcal{Q}_t^{*}(s, a, \mu, \mathcal{I}_t; \pmb{\pi}_n')\\
        &-\mathcal{Q}_t^{*}(s, a, \mu, \mathcal{I}_t; \pmb{\pi}_n')-\mathcal{Q}_t^{*}(s, a, \mu, \mathcal{I}_t; \pmb{\pi}')\\
        \ge&\mathcal{Q}_t^{*}(s, a^{\star}, \mu, \mathcal{I}_t; \pmb{\pi}_n') - \mathcal{Q}_t^{*}(s, a, \mu, \mathcal{I}_t; \pmb{\pi}_n') \\
        &- \vert\mathcal{Q}_t^{*}(s, a^{\star}, \mu, \mathcal{I}_t; \pmb{\pi}')-\mathcal{Q}_t^{*}(s, a^{\star}, \mu, \mathcal{I}_t; \pmb{\pi}_n')\vert\\
        &- \vert\mathcal{Q}_t^{*}(s, a, \mu, \mathcal{I}_t; \pmb{\pi}')-\mathcal{Q}_t^{*}(s, a, \mu, \mathcal{I}_t; \pmb{\pi}_n')\vert\\
        >& \epsilon - \frac{\epsilon}{2} - \frac{\epsilon}{2} = 0,
    \end{align*}
    contradicting with $a\in\mathop{\arg\max}_{a'}\mathcal{Q}_t^{*}(s, a', \mu, \mathcal{I}_t; \pmb{\pi}')$.

    Therefore, $\operatorname{BR}(\pmb{\pi}; \pmb{\rho})$ has a closed graph.
\end{proof}
\begin{lemma}\label{cov}
    % Define the convex combination of policy set $\pi_1$ and $\pi_2$ as $\pi=\lambda\pi_1+(1-\lambda)\pi_2, \lambda\in [0, 1]$. Then 
    $\operatorname{BR}(\pmb{\pi}; \pmb{\rho})$ is a convex set given $\pmb{\pi}$.
\end{lemma}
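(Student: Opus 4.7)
The plan is to reduce convexity of the best-response set to the action-optimality characterization already established in Lemma~\ref{prop}. A crucial preliminary observation is that in the definition of $\operatorname{BR}(\pmb{\pi};\pmb{\rho})$, the mean field flow $\pmb{\mu}$ is determined solely by the population policy $\pmb{\pi}$ and the correlation device $\pmb{\rho}$, and does not depend on the individual agent's policy $\pmb{\pi}'$. Hence the optimal action-value function $\mathcal{Q}_t^{*}(s,a,\mu,\mathcal{I}_t;\pmb{\pi})$ and the set $\arg\max_{a'\in\mathcal A}\mathcal{Q}_t^{*}(s,a',\mu,\mathcal{I}_t;\pmb{\pi})$ are unchanged when we vary the agent's policy over best responses.

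Fix any $\pmb{\pi}_1,\pmb{\pi}_2\in\operatorname{BR}(\pmb{\pi};\pmb{\rho})$ and any $\lambda\in[0,1]$, and set $\pmb{\pi}_\lambda=\lambda\pmb{\pi}_1+(1-\lambda)\pmb{\pi}_2$. I would first use the ``only if'' direction of Lemma~\ref{prop} applied to $\pmb{\pi}_1$ and $\pmb{\pi}_2$: for every $t\in\mathcal T$, $s\in\mathcal S$, and every $a\in\mathcal A$ with $\sum_{z\in\mathcal Z}\rho_t(z)\pi_{i,t}(a\mid s,z)>0$ ($i\in\{1,2\}$), one has $a\in\arg\max_{a'\in\mathcal A}\mathcal{Q}_t^{*}(s,a',\mu,\mathcal{I}_t;\pmb{\pi})$.

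Next, I would verify the support condition for $\pmb{\pi}_\lambda$. By linearity, for any $t,s,a$,
\begin{equation*}
\sum_{z\in\mathcal Z}\rho_t(z)\pi_{\lambda,t}(a\mid s,z)=\lambda\!\!\sum_{z\in\mathcal Z}\!\rho_t(z)\pi_{1,t}(a\mid s,z)+(1-\lambda)\!\!\sum_{z\in\mathcal Z}\!\rho_t(z)\pi_{2,t}(a\mid s,z).
\end{equation*}
Since both summands are nonnegative, if the left-hand side is strictly positive, then at least one of the two right-hand terms is strictly positive. By the previous step, the corresponding action $a$ is optimal, i.e.\ $a\in\arg\max_{a'\in\mathcal A}\mathcal{Q}_t^{*}(s,a',\mu,\mathcal{I}_t;\pmb{\pi})$. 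Thus $\pmb{\pi}_\lambda$ satisfies the sufficient condition in Lemma~\ref{prop}, and applying the ``if'' direction yields $\pmb{\pi}_\lambda\in\operatorname{BR}(\pmb{\pi};\pmb{\rho})$, proving convexity.

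I do not anticipate a serious obstacle here: once the decoupling between the mean field flow and the agent's own policy is invoked, the argument is essentially a one-line support analysis combined with both directions of Lemma~\ref{prop}. The only point to be careful about is to state the support condition in terms of the $\pmb{\rho}$-weighted average $\sum_{z}\rho_t(z)\pi_t(a\mid s,z)$ rather than $\pi_t(a\mid s,z)$ itself, since it is this weighted quantity that governs which actions can be sampled under $(\pmb{\pi}_\lambda,\pmb{\rho})$ and hence which ones Lemma~\ref{prop} constrains to be optimal.
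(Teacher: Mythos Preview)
Your proposal is correct and follows essentially the same approach as the paper: apply both directions of Lemma~\ref{prop}, and observe that the $\pmb{\rho}$-weighted support of a convex combination is contained in the union of the $\pmb{\rho}$-weighted supports of $\pmb{\pi}_1$ and $\pmb{\pi}_2$, so any action with positive weight under $\pmb{\pi}_\lambda$ is already certified optimal. Your write-up is slightly more explicit about the linearity/nonnegativity step and the decoupling of the mean field flow from the agent's own policy, but the argument is the same.
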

\begin{proof}
    We assume that $\pmb{\pi}_1\in \operatorname{BR}(\boldsymbol{\pi}'; \pmb{\rho})$, $\pmb{\pi}_2\in \operatorname{BR}(\boldsymbol{\pi}'; \pmb{\rho})$. 
    From \cref{prop}, we have that if $\sum_{z\in\mathcal{Z}}\rho_t(z)\pi_{i, t}(a\mid s, z)>0$, then $a\in \mathop{\arg\max}_{a'\in \mathcal{A}}\mathcal{Q}^{*}(s, a', \mu, I_t; \pmb{\pi}')$, $\forall t \in \mathcal{T}$, $\forall i\in \{1, 2\}$.
    Then the convex combination $\pmb{\pi}=\lambda\pmb{\pi}_1+(1-\lambda)\pmb{\pi}_2, \lambda\in [0, 1]$ also satisfies the requirements of \cref{prop}. Therefore $\pmb{\pi}\in \operatorname{BR}(\pmb{\pi}'; \pmb{\rho})$. $\operatorname{BR}(\pmb{\pi}; \pmb{\rho})$ is a convex set given $\pmb{\pi}$.
\end{proof}
\fixed*
\begin{proof}\label{prffixed}
    As $\pi_t \in \mathcal{P}(\mathcal{A})$, in which $\mathcal{P}(\mathcal{A})$ are simplices with finite dimensions, they are compact. 
    And $\operatorname{BR}(\boldsymbol{\pi}; \pmb{\rho})$ maps to a non-empty set, because the MDP induced by fixed $\boldsymbol{\mu}$ and $\pmb{\rho}$ has an optimal policy. 
    From \cref{closed_graph} and \cref{cov}, the requirements of Kakutani's fixed point theorem holds for $\operatorname{BR}(\boldsymbol{\pi}; \pmb{\rho})$. 
    By Kakutani's fixed point theorem, there exists a fixed point $\boldsymbol{\pi}^* \in \operatorname{BR}(\pmb{\pi}^*; \pmb{\rho})$. And $\forall u\in \mathcal{U}$, $\forall s\in \mathcal{A}$, $\forall t\in \mathcal{T}$,
    \begin{align*}
       \Delta_t(s_t, \mu_t, u; \pmb{\pi}^*, \pmb{\rho}) &=\sum_{z\in\mathcal{Z}}\sum_{a\in\mathcal{A}} \rho_t(z)\pi_t^{*}(a|s, z) \big(Q_t^{\pmb{\pi}^*}(s_t, u(a), \mu_t, z; \pmb{\pi}^*) \\
       &- Q^{\pmb{\pi}^*}(s_t, a, \mu_t, z; \pmb{\pi}^*)\big) \le 0,
    \end{align*}
    where $\mu_{t}=\Phi(\mu_{t-1}, \pi_{t-1}^{*}, z_t)$.
    Then $(\pmb{\pi}^{*}, \pmb{\rho})$ is an AMFCE.
\end{proof}
\subsection{Proof of \cref{corr:relation}}
\relation*
\begin{proof}\label{prfrelation}
    If $(\pmb{\pi}, \pmb{\mu})$ represents a MFNE, the following condition holds \cite{cui2021approximately}:
    $\pi_t(a\mid s, z) > 0$ is a sufficient condition for $$a \in \mathop{\arg\max}_{a' \in \mathcal A} Q_t^{*}(s, a', \mu, z; \pmb{\pi})$$.
    
    If the correlation device $\pmb{\rho}=\{\rho_t\}_{t\in\mathcal{T}}$ satisfies $\rho_t=\delta(z)$ for all $t\in\mathcal{T}$,
    $\sum_{z}\rho_{t}(z)\pi_{t}(a\mid s, z) > 0$ is a sufficient condition for $a \in \mathop{\arg\max}_{a' \in \mathcal A} \mathbb{E}_{z\sim\rho_{t}^{\rm pred}(\cdot|\mathcal{I}_{t})}\left[Q_{t}^{*}(s, a', \mu, z; \pmb{\pi})\right]$.
    
    Additionally, the mean field flow $\pmb{\mu}$ satisfies $\mu_{t}=\Phi(\mu_{t-1}, \pi_{t-1}, z)$.
    Therefore, $(\pmb{\pi}, \pmb{\rho})$ forms an Adaptive Mean Field Correlated Equilibrium (AMFCE).
    \end{proof}
\subsection{Proof of Proposition \ref{tstep}}
\tstep*
\begin{proof}\label{prftstep}
\textbf{(Sufficient Condition)}. Suppose that $(\pmb{\pi}, \pmb{\rho})$ is a solution of AMFCE but the inequality in Proposition \ref{tstep} does not hold.
There exists some $t$ and trajectory such that 
\begin{align*}
    \mathbb{E}_{\pmb{\pi}, \pmb{\pi}, \pmb{\rho}}\left[\sum_{t=0}^{T} \gamma^t r(s_t, a_t, \mu_t)\middle|a_{0:T}\right] > J(\pmb{\pi}, \pmb{\pi}, \pmb{\rho})
\end{align*}
From the definition of AMFCE, 
\begin{align*}
    \sum_{a\in\mathcal{A}}\sum_{z\in\mathcal{Z}}\rho_t(z)\pi_t(a|s, z)\Big[Q_t^{\pmb{\pi}}(s, a, \mu_t, z; \pmb{\pi})-Q_t^{\pmb{\pi}}(s, a', \mu_t, z; \pmb{\pi})\Big]\ge0
\end{align*}
We have that
\begin{align*}
    &\mathbb{E}_{\pmb{\pi}, \pmb{\pi}, \pmb{\rho}}\left[\sum_{t=0}^{T} \gamma^t r(s_t, a_t, \mu_t)\middle|a_{0:T}\right]\\
    =&\mathbb{E}_{\pmb{\pi}, \pmb{\pi}, \pmb{\rho}}\left[\sum_{t=0}^{T-1} \gamma^t r(a_t, s_t, \mu_t) + \gamma^{T}r(s_{T}, a_{T}, \mu_{T})\middle|a_{0:T}\right]\\
    \le&\mathbb{E}_{\pmb{\pi}, \pmb{\pi}, \pmb{\rho}}\left[\sum_{t=0}^{T-1} \gamma^t r(a_t, s_t, \mu_t) + \gamma^{T}\mathbb{E}\left[Q_T^{\pmb{\pi}}(s_T, a, \mu_T, z; \pmb{\pi})\right]\middle|a_{0:T-1}\right]
\end{align*}
The inner expectation is taken with respect to $z\sim\rho_{T}(\cdot)$, $a\sim\pi_T(\cdot|s_T, z)$.
Similarly, we can induce that
\begin{align*}
    &\mathbb{E}_{\pmb{\pi}, \pmb{\pi}, \pmb{\rho}}\left[\sum_{t=0}^{T} \gamma^t r(s_t, a_t, \mu_t)\middle|a_{0:T}\right]\\
    \le&\mathbb{E}_{\pmb{\pi}, \pmb{\pi}, \pmb{\rho}}\Bigg[\sum_{t=0}^{T-2} \gamma^t r(a_t, s_t, \mu_t) + \gamma^{T-1} r(s_{T-1}, a_{T-1}, \mu_{T-1}) \\
    &+ \gamma^{T}\mathbb{E}\left[Q_T^{\pmb{\pi}}(s_T, a, \mu_T, z; \pmb{\pi})\right]\bigg|a_{0:T-1}\Bigg]\\
    \le&\mathbb{E}_{\pmb{\pi}, \pmb{\pi}, \pmb{\rho}}\left[\sum_{t=0}^{T-2} \gamma^t r(a_t, s_t, \mu_t) + \gamma^{T-1}\mathbb{E}[Q_{T-1}^{\pmb{\pi}}(s_{T-1}, a, \mu_{T-1}, z;\pmb{\pi})]\middle|a_{0:T-2}\right]\\
    \le&\mathbb{E}_{\pmb{\pi}, \pmb{\pi}, \pmb{\rho}}\bigg[Q_0^{\pmb{\pi}}(s_{0}, a, \mu_0, z; \pmb{\pi})\bigg]=J(\pmb{\pi}, \pmb{\pi}, \pmb{\rho}),
\end{align*}
where the last expectation is taken with respect to $z\sim\rho_0, s_0\sim\mu_0(\cdot), a\sim\pi_{0}(\cdot|s_0, z)$.
% Then we can induce that
% \begin{align*}
%     &\mathbb{E}\left[\sum_{i=0}^{T} \gamma^i r(s_i, a_i, \mu_i)\right]\\
%     \le&\mathbb{E}\left[\sum_{i=0}^{T-2} \gamma^i r(a_i, s_i, \mu_i) + \gamma^{T-1}\mathbb{E}\left[Q_{T-1}^{\pmb{\pi}}(s_{T-1}, a, \mu_{T-1}, z_{T-1}; \pmb{\pi})|s_{T-2}, z_T\right]\right]
%     \le&\mathbb{E}_{\pmb{\rho}, a_{t}\sim\pi_{t}}\bigg[\sum_{i=0}^{t-1} \gamma^i r(a_i, s_i, \mu_i) + \gamma^{t}Q(s_{t}, a_{t}, \mu_{t}, t)\bigg]\\
%     \le&\mathbb{E}_{z_0\sim\rho_0, a_0\sim\pi}\bigg[Q(s_{0}, a_{0}, \mu_0^{z_0}, 0)\bigg]\\
%     =&\mathbb{E}_{z_0\sim\rho_0}\big[V(s_0, z_0, 0)\big]
% \end{align*}
    % \le&\mathbb{E}_{z_{T-1}\sim\rho_{T-1}, a_{T-1}\sim\pi_{T-1}(\cdot|s_{T-1}, z_{T-1}), s_{T-1}\sim p}\bigg[\sum_{i=0}^{T-3} \gamma^i r(a_{T-2}, s_{T-2}, \mu_{T-2}) + \gamma^t \big(r(a_i, s_i, \mu_i) + \gamma Q(s_{t+1}, a_{t+1}, \mu_{t+1}, t+1)\big)\bigg]\\
    % \le&\mathbb{E}\bigg[\sum_{i=0}^{T-3} \gamma^i r(a_i, s_i, \mu_i) \\&+ \gamma^{T-2} Q(s_{T-2}, a_{T-2}, \mu_{T-2}, T-2)\bigg]\\

It contradicts with the assumption.

\textbf{(Necessary Condition)}. We assume that the inequality holds and $(\pmb{\pi}, \pmb{\rho})$ is not an AMFCE. There exists a time step $t\in \mathcal{T}$ such that $\Delta_{t}(s, \mu, u; \pmb{\pi}, \pmb{\rho})=\mathbb{E}[Q_{t}^{\pmb{\pi}}(s, u(a), \mu, z)-Q_{t}^{\pmb{\pi}}(s, a, \mu, z)]>0$. Then agent can achieve a strictly higher expected return if it chooses action $u(a)$ when it is recommended action $a$ at time step $t$. It implies that there exists an action sequence such that $\mathcal{R}(a_{0:T}, \pmb{\pi}, \pmb{\rho})>0$, which conflicts with the assumption.
% We assume that the inequality holds and $(\boldsymbol{\pi}, \pmb{\rho})$ is not AMFCE. 
% So $\exists a' \in \mathcal{A}$, $\sum_{a\in\mathcal{A}}\sum_{z\in\mathcal{Z}}\rho_t(z)\pi_t(a|s, z)\Big[Q_t^{\pmb{\pi}}(s, a, \mu_t, z; \pmb{\pi})-Q_t^{\pmb{\pi}}(s, a', \mu_t, z; \pmb{\pi})\Big]<0$. 
% Then agent can achieve a strictly higher expected return if it chooses action $a'$ when it is recommended action $a$ at time step $t$. 
% We have
% \begin{align*}
%     J(\pmb{\pi}, \pmb{\rho})=\mathbb{E}\bigg[Q_0^{\pmb{\pi}}(s_{0}, a_{0}, \mu_0, z_0; \pmb{\pi})\bigg]
%     <\mathbb{E}\bigg[\sum_{t=0}^{T}r(s_t, a_t, \mu_t)\bigg]
% \end{align*}
% It implies that there exists an action-signal sequence such that $\mathcal{R}(a_{0:T}, \pmb{\pi}, \pmb{\rho})>0$, which conflicts with the assumption.
\end{proof}
\subsection{Proof of Theorem \ref{thm:dual}}
\dual*
\begin{proof}\label{prfdual}
    % As $\gamma<1$ and $Q(s_t, a_t, \boldsymbol{\mu}, t)$ is bounded, $\lim_{t\to\infty}\gamma^t Q(s_t, a_t, \boldsymbol{\mu}, t)=0$.
    % \begin{align*}
    %     \lim_{t\to\infty}\sum_{\tau_i\in D(t)}\lambda^\star(\tau_i)Q^\star(\tau_i)=&\lim_{t\to\infty}\mathbb{E}_{\boldsymbol{\pi}, \rho}[\sum_{j=0}^{t}\gamma^j r(s_j, a_j, \boldsymbol{\mu})+\gamma^t Q(s_t, a_t, \boldsymbol{\mu}, t)]\\
    %     =&\mathbb{E}_{\boldsymbol{\pi}^\star, \rho^\star}\bigg[\sum_{j=0}^t r(s_j, a_j, \boldsymbol{\mu})\bigg]
    % \end{align*}
    % From the law of large numbers
    We note that
    \begin{align*}
        \sum_{\tau_k\in\mathcal{D}_E}\lambda_{\pmb{\pi}^*}(\tau_i)&\mathbb{E}_{\pmb{\pi}, \pmb{\pi}, \pmb{\rho}}\left[\sum_{t=0}^{T}\gamma^t r(s_t, a_t, \mu_t)\middle|a_{0:T}\right]\\
        =&\mathbb{E}_{\pmb{\pi}^{*}}\left[\mathbb{E}_{\pmb{\pi}, \pmb{\pi}, \pmb{\rho}}\left[\sum_{t=0}^{T}\gamma^t r(s_t, a_t, \mu_t)\middle|a_{0:T}\right]\right]\\
        =&\mathbb{E}_{\pmb{\pi}^*, \pmb{\pi}, \pmb{\rho}}\left[\sum_{t=0}^{T}\gamma^{t}r(s_t, a_t, \mu_t)\right]
    \end{align*}
    % The $\mathbb{E}_{\pmb{\pi}}$ is expectation taken with respect to $z_t\sim\rho_t(\cdot)$, $s_t\sim P(\cdot|s_{t-1}, a_{t-1}, \mu_{t-1})$, $\mu_t=\Phi(\mu_{t-1}, \pi_{t-1}, z_{t-1})$. 
    The $\mathbb{E}_{\pmb{\pi}^{*}}$ is taken with respect to $a_t\sim\pi_t^{*}(\cdot|s_t, z_t)$. 
    % The third expectation is taken with respect to $z_t\sim\rho_t(\cdot)$, $a_t\sim\pi_t^{*}(\cdot|s_t, z_t)$, $s_t\sim P(\cdot|s_{t-1}, a_{t-1}, \mu_{t-1})$, $\mu_t=\Phi(\mu_{t-1}, \pi_{t-1}, z_{t-1})$.
    Then we can derive the conclusion directly.
    % \begin{align*}
    %     L(\boldsymbol{\pi}, \pmb{\rho}, \lambda_{\pmb{\pi}}^*, r)=&\mathbb{E}\Big[\sum_{t=0}^{T}\gamma^j r(s_t, a_t, \mu_t)\Big]-J(\pmb{\pi}, \pmb{\rho})-\alpha \sum_{t=0}^T \mathbb{E}H(\pi_t|s_t, z_t)
    % \end{align*}
    \begin{align*}
        L(\boldsymbol{\pi}, \pmb{\rho}, \lambda_{\pmb{\pi}^*}, r)=&J(\pmb{\pi}^*, \pmb{\pi}, \pmb{\rho})-J(\pmb{\pi}, \pmb{\pi}, \pmb{\rho})
    \end{align*}
\end{proof}
\subsection{Proof of Proposition \ref{GAIL}}
\GAIL*
\begin{proof}\label{prfGAIL}
    We denote $\tilde{r}=\operatorname{AMFCE-IRL}(\boldsymbol{\pi}^E, \pmb{\rho}^E)$.
    The saddle point of $L(\pmb{\pi}, \pmb{\rho}, \lambda, r)$ is $\lambda_{\pmb{\pi}}^{E}(\tau_k)=\prod_{t=0}^{T}\pi_t^E(a_t|s_t, z_t)$ and $\tilde{r}$, where $(\pmb{\pi}^E, \pmb{\rho}^E)\in\operatorname{AMFCE}$.
    % TODO: $(\pmb{\pi}^E, \pmb{\rho}^E)\in\operatorname{MaxEnt-AMFCE}$. 
    So given expert demonstrations sampled from 
    % TODO: $(\pmb{\pi}^E, \pmb{\rho}^E)\in\operatorname{MaxEnt-AMFCE}$
    $(\pmb{\pi}^E, \pmb{\rho}^E)$
    , we can recover $\pmb{\pi}^E$ by \cref{recover}.
    % From the convexity, The policy and $\pmb{\rho}$ learned on $\tilde{r}$ is
    % \begin{align*}
    %     &\pmb{\pi} = \mathop{\arg\min}_{\pmb{\pi}}-J(\pmb{\pi}, \pmb{\rho}) - H\big(\rho(a)\big)\\
    %     % &=\mathop{\arg\min}_{(\pmb{\pi}, \pmb{\rho})}-H\big(\rho(a)\big)-\sum_{s, a}\big(\rho_{(\pmb{\pi}, \pmb{\rho})}^o-\rho_{(\pmb{\pi}_E, \pmb{\rho}_E)}^o\big)\tilde{r}-\psi(\tilde{r})\\
    %     &=\mathop{\arg\min}_{(\pmb{\pi}, \pmb{\rho})}\max_{\tilde{r}}-H\big(\rho(a)\big)-\sum_{s, a}\big(\rho_{(\pmb{\pi}, \pmb{\rho})}^o-\rho_{(\pmb{\pi}_E, \pmb{\rho}_E)}^o\big)\tilde{r}-\psi(\tilde{r})\\
    %     &=\mathop{\arg\min}_{(\pmb{\pi}, \pmb{\rho})} -H\big(\rho(a)\big) + \psi^\star(\rho_{(\boldsymbol{\pi}, \pmb{\rho})}^o - \rho_{(\boldsymbol{\pi}_E, \pmb{\rho}_E)}^o)
    % \end{align*}
    \begin{align}\label{recover}
        \pmb{\pi}&=
        \mathop{\arg\min}_{\pmb{\pi}}J(\pmb{\pi}^E, \pmb{\pi}^E, \pmb{\rho}^E) - \mathbb{E}_{\pmb{\pi}^E, \pmb{\pi}^E, \pmb{\rho}^E}[\sum_{t=0}^{T}\gamma^{t}\tilde{r}(s_t, a_t, \mu_t)]\nonumber\\
        &=\mathop{\arg\min}_{\pmb{\pi}} \max_{r} J(\pmb{\pi}^E, \pmb{\pi}^E, \pmb{\rho}^E) - J(\pmb{\pi}, \pmb{\pi}^E, \pmb{\rho}^E) - \psi_{GA}(r)
    \end{align}
    If we select $\psi_{GA}$ as the regularizer, and 
    % $$\psi_{GA}(r) \triangleq\left\{\begin{array}{ll}\mathbb{E}_{(\pmb{\pi}_{E}, \pmb{\rho}_E)}[g(r(s, a, \mu))] & \text { if } r>0 \\ +\infty & \text { otherwise }\end{array} \quad\right..$$ 
    make the change of variables $r(s, a, \mu) = \log\big(D_{\omega}(s, a, \mu)\big)$, we get
    \begin{align*}
        &\max_{r} J(\pmb{\pi}^E, \pmb{\pi}^E, \pmb{\rho}^E) - J(\pmb{\pi}, \pmb{\pi}^E, \pmb{\rho}^E) - \psi_{GA}(r)\\
        =&\max_{\omega}\mathbb{E}_{\pmb{\pi}_E, \pmb{\pi}_E, \pmb{\rho}_E}\left[\sum_{t=0}^{T}\gamma^{t}\log(D_{\omega}(s, a, \mu))\right] \\
        &- \mathbb{E}_{\pmb{\pi}, \pmb{\pi}^E, \pmb{\rho}_E}\left[\sum_{t=0}^{T}\gamma^{t}\log(D_{\omega}(s, a, \mu))\right] \\
        &- \max_{\omega}\mathbb{E}_{\pmb{\pi}, \pmb{\pi}_E, \pmb{\rho}_E}\left[\sum_{t=0}^T g(r(s_t, a_t, \mu_t))\right]\\
        =&\max_{\omega}\mathbb{E}_{\pmb{\pi}, \pmb{\pi}^E, \pmb{\rho}^E}\left[\sum_{t=0}^{T}\gamma^{t}\log D_\omega(s_t, a_t, \mu_t)\right] \\
        &+ \mathbb{E}_{\pmb{\pi}^E, \pmb{\pi}^E, \pmb{\rho}^E}\left[\sum_{t=0}^{T}\gamma^{t}\log\big(1 - D_\omega(s_t, a_t, \mu_t)\big)\right].
    \end{align*}
    % where the expectation $\mathbb{E}_{\pmb{\pi}^E, \pmb{\rho}^E}$ is taken with respect to $s_{t}\sim P(\cdot|s_{t-1}, a_{t-1}, \mu_{t-1})$, $a_{t}\sim \pi_t^E(\cdot|s_{t}, z_{t})$, $z_t\sim\rho_t^E(\cdot)$, $\mu_{t}=\Phi(\mu_{t-1}, \pi_{t-1}^E, z_{t-1})$ and the expectation $\mathbb{E}_{\pmb{\pi}, \pmb{\rho}^E}$ is taken with respect to $s_{t}\sim P(\cdot|s_{t-1}, a_{t-1}, \mu_{t-1})$, $a_{t}\sim \pi_t(\cdot|s_{t}, z_{t})$, $z_t\sim\rho_t^E(\cdot)$, $\mu_{t}=\Phi(\mu_{t-1}, \pi_{t-1}^E, z_{t-1})$.
    % {\color{blue} [zhiyu will fix it.]}
    % \begin{align*}
    %     &\psi^\star(\occ - \occe)\\
    %     =&\max_{r}\sum_{s, a, \mu, t}(\occe-\occ)r(s, a, \mu)-\sum_{s, a, \mu, t}\occe g(r(s, a, \mu))\\
    %     =&\sum_{s, a, \mu, t}\max_{d\in(0, 1)}\occ\log d(s, a, \mu)+\occe\log(1-d(s, a, \mu))\\
    %     =&\max_{D\in(0, 1)^{\mathcal{S}\times\mathcal{A}\times\mathcal{M}\times\mathcal{T}}}\occ\log(D(s, a, \mu))+\occe\log(1-D(s, a, \mu))
    % \end{align*}

    % It is notable that the occupancy mearsure also take $\mu$ and $t$ into consideration since we focus on non-stationary policy in the mean field game.
\end{proof}
% \subsection{Proof of Theorem \ref{bound}}
% \begin{proof}
%     \begin{align}
%         \mu_t^\rho=&\mathbb{E}_{\pi \sim \rho}[\mu_t^\pi]\nonumber\\
%         =&\sum_{\pi_t}\sum_{s'\in\mathcal{S}\rho(\pi_t)}\sum_{a'\in\mathcal{A}}p(s|s', a')\pi_t(a'|s')\mu^\pi(s')
%     \end{align}
%     We assume that the initial mean field is constant $\mu_0$.
%     From Azuma-Hoeffding inequality
%     \begin{align}
%         e_t^n\le& \sum_{\pi_t}\sum_{s'\in\mathcal{S}\rho(\pi_t)}\sum_{a'\in\mathcal{A}}p(s|s', a')(\hat{\pi}_t(a'|s')\hat{\mu}^\pi(s')-p(s|s', a')\pi_t(a'|s')\mu^\pi(s'))\\
%         =&\sum_{\pi_t}\sum_{s'\in\mathcal{S}\rho(\pi_t)}\sum_{a'\in\mathcal{A}}p(s|s', a')((\hat{\pi}_t(a'|s')\hat{\rho}(\hat{\pi}_t)-\pi_t(a'|s')\rho(\pi_t))\mu^\pi(s')\\
%         &+(\hat{\mu}_t(s')-\mu(s'))\hat{\pi}(a|s')\hat{\rho}(\hat{\pi}_t)\\
%         =&e_{t-1}^n+\sum_{\pi_t}\sum_{s'\in\mathcal{S}\rho(\pi_t)}\sum_{a'\in\mathcal{A}}p(s|s', a')((\hat{\pi}_t(a'|s')\hat{\rho}(\hat{\pi}_t)-\pi_t(a'|s')\rho(\pi_t))\mu^\pi(s')\\
%         \le&e_{0}^n+\vert\Pi\vert\vert\mathcal{A}\vert\vert\mathcal{S}\vert (\hat{\pi}_t(a'|s')\hat{\rho}(\hat{\pi}_t)-\pi_t(a'|s')\rho(\pi_t))\\
%         \le&\vert\Pi\vert\vert\mathcal{A}\vert\vert\mathcal{S}\vert e^{-\epsilon^2/2\sum_{i=0}^t c_i^2}
%     \end{align}
% \end{proof}
\subsection{Proof of Proposition \ref{gradofrho}}
\gradofrho*
\begin{proof}\label{prfgradofrho}
    The gradient of parameterized $\rho^\phi$ is 
    \begin{align*}
        \nabla_\phi\Delta_t&(s, \mu, u; \pmb{\pi}, \pmb{\rho})=\nabla_{\phi}\sum_{z\in\mathcal{Z}}\rho_t^{\phi}(z)\sum_{a\in\mathcal{A}}\pi_t(a|s, z)Q_t^{\pmb{\pi}}(s, a, \mu, z; \pmb{\pi})\\
        =&\sum_{z\in\mathcal{Z}}\nabla_{\phi}\rho_{t}^{\phi}(z)\sum_{a\in\mathcal{A}}\pi_{t}(a|s, z)Q_t^{\pmb{\pi}}(s, a, \mu, z; \pmb{\pi})\\
        =&\mathbb{E}_{z\sim\rho_{t}^{\phi}(\cdot)}\Big[\sum_{a\in\mathcal{A}}\pi_t(a|s, z)Q_t^{\pmb{\pi}}(s, a, \mu, z; \pmb{\pi})\nabla_{\phi}\log\rho_t^{\phi}(z)\Big]\\
        =&\mathbb{E}_{z\sim\rho_{t}^{\phi}(\cdot)}\bigg[\nabla_{\phi}\log\rho_t^{\phi}(z)\mathbb{E}_{a\sim\pi_t(\cdot|s, z)}Q_{t}^{\pmb{\pi}}(s, a, \mu, z; \pmb{\pi})\bigg].
    \end{align*}
\end{proof}
\subsection{Proof of the Theorem \ref{thm: bound}}
\label{prf: bound}
\bound*
\begin{proof}
    When the discriminator achieves its optimum
    \begin{align}
        D_\omega^*(s_t, a_t, \hat{\mu}_t)=\frac{2\eta_t^{\pmb{\pi}}(s_t, a_t, \hat{\mu}_t)}{\eta_t^{\pmb{\pi}}(s_t, a_t, \hat{\mu}_t)+\eta_t^{E}(s_t, a_t, \hat{\mu}_t)},
    \end{align}
    We denote $\hat{\mu}_t=\mathrm{Sig}(\pmb{z}_{0:t})$.
    we can derive that MFCIL is to minimize the state-action distribution discrepancy between the expert policy and the recovered policy with the Jensen-Shannon (JS) divergence (up to a constant):
    \begin{smalleralign}
        D_{\mathrm{JS}}(\eta_t^{E}(s, a, \hat{\mu}),\eta_t^{\pi}(s, a, \hat{\mu}))&\triangleq\frac12\bigg[D_{\mathrm{KL}}\left(\eta_t^{\pi}(s, a, \hat{\mu}),\frac{\eta_t^{\pi}(s, a, \hat{\mu})+\eta_t^{E}(s, a, \hat{\mu})}2\right)\\
        &\quad+D_{\mathrm{KL}}\left(\eta_t^{E}(s, a, \hat{\mu}),\frac{\eta_t^{\pi}+\eta_t^{E}}2\right)\bigg],
    \end{smalleralign}
    where $\eta_t^{\pi}(s, a, \hat{\mu})$ and $\eta_t^{E}(s, a, \hat{\mu})$ is the occupancy measure of the recovered policy at time step $t$.
    We define the occupancy measure of the expert policy as $\eta_t^{E}$ and the state distribution of agents following the recovered policy as $\eta_t^{\pi}$.
    \begin{align}
        \begin{cases}
            \eta_t^{\pi}(s, a, \hat{\mu})=\rho_t^E(z_t)\pi_t(a|s, z_t)\eta_t(s, \hat{\mu}_t)\\
            \eta_t^{\pi}(s, \hat{\mu})=\sum_{a'\in\mathcal{A}}\sum_{s'\in\mathcal{S}}\eta_{t-1}^\pi(s', a')P(s|s', a', \mu_{t-1})\\
            \eta_0^{\pi}(s, \hat{\mu})=\mu_0
        \end{cases}
    \end{align}
    \begin{align}
        \begin{cases}
            \eta_t^{E}(s, a, \hat{\mu})=\rho_t^E(z_t)\pi_t^E(a|s, z_t)\eta_t(s, \hat{\mu}_t)\\
            \eta_t^{E}(s, \hat{\mu})=\sum_{a'\in\mathcal{A}}\sum_{s'\in\mathcal{S}}\eta_{t-1}^E(s', a')P(s|s', a', \mu_{t-1})\\
            \eta_0^{E}(s, \hat{\mu})=\mu_0
        \end{cases}
    \end{align}
    Here, $\hat{\mu}_t=\mathrm{Sig}(\pmb{z}_{0:t})$.
    Under the Assumption \ref{assp:bound},
    \begin{align}
        \max_{\omega}&\mathbb{E}_{\pmb{\pi}, \pmb{\pi}, \pmb{\rho}^E}\bigg[\sum_{t=0}^{T}\gamma^t\log D_\omega^*(s_t, a_t, \hat{\mu}_t)\bigg]\nonumber\\
        &+\mathbb{E}_{\pmb{\pi}^E, \pmb{\pi}^E, \pmb{\rho}^E}\bigg[\sum_{t=0}^{T}\gamma^t\log\big(1 - D_\omega^*(s_t, a_t, \hat{\mu}_t)\big)\bigg]\le\epsilon,
        % \\+\alpha\sum_{t=0}^{T}\mathbb{E}H(\pi_t|s_t, z_t)
    \end{align}
    we can derive that
    \begin{align}
        \sum_{t=0}^{T}\gamma^tD_{\mathrm{JS}}\left(\eta_t^{E}(s, a, \hat{\mu}), \eta_t^{\pi}(s, a, \hat{\mu})\right)\le\epsilon.
    \end{align}
    % Since the Jensen-Shannon divergence is non-negative, we have
    % \begin{align}
    %     D_{\mathrm{JS}}\left(\eta_t^{E}(s, a, \hat{\mu}), \eta_t^{\pi}(s, a, \hat{\mu})\right)\le\epsilon
    % \end{align}
    From Pinsker's inequality, we have
    \begin{smalleralign}
        \frac{1}{2}&\left\|\eta_t^E(s, a, \hat{\mu})-\eta_t^\pi(s, a, \hat{\mu})\right\|_1\le\nonumber\\&\sqrt{2D_{\mathrm{KL}}\left(\eta_t^E(s, a, \hat{\mu}), \frac{\eta_t^E(s, a, \hat{\mu})+\eta_t^\pi(s, a, \hat{\mu})}{2}\right)}
    \end{smalleralign}
    and
    \begin{align*}
        \frac{1}{2}&\left\|\eta_t^E(s, a, \hat{\mu})-\eta_t^\pi(s, a, \hat{\mu})\right\|_1\le\nonumber\\&\sqrt{2D_{\mathrm{KL}}\left(\eta_t^\pi(s, a, \hat{\mu}), \frac{\eta_t^E(s, a, \hat{\mu})+\eta_t^\pi(s, a, \hat{\mu})}{2}\right)}.
    \end{align*}
    From the Jensen inequality, we have that
    \begin{align*}
        \left\|\eta_t^E(s, a, \hat{\mu})-\eta_t^\pi(s, a, \hat{\mu})\right\|_1\le2\sqrt{2D_{\mathrm{JS}}(\eta_t^{E}(s, a, \hat{\mu}), \eta_t^{\pi}(s, a, \hat{\mu}))}
    \end{align*}
    We use again the Jensen inequality
    \begin{align*}
        \frac{1}{T}&\sum_{t=0}^{T}\gamma^t\left\|\eta_t^E(s, a, \hat{\mu})-\eta_t^\pi(s, a, \hat{\mu})\right\|_1\\&\le\frac{1}{T}\sum_{t=0}^{T}2\sqrt{2\gamma^{2t}D_{\mathrm{JS}}(\eta_t^{E}(s, a, \hat{\mu}), \eta_t^{\pi}(s, a, \hat{\mu}))}\\
        &\le2\sqrt{\frac{1}{T}\sum_{t=0}^{T}2\gamma^{t}D_{\mathrm{JS}}(\eta_t^{E}(s, a, \hat{\mu}), \eta_t^{\pi}(s, a, \hat{\mu}))}\\
        &\le2\sqrt{\frac{2\epsilon}{T}}
    \end{align*}
    Therefore, we have
    \begin{align*}
        \sum_{t=0}^{T}\gamma^t\left\|\eta_t^E(s, a, \hat{\mu})-\eta_t^\pi(s, a, \hat{\mu})\right\|_1\le2\sqrt{2\epsilon T}.
    \end{align*}
    We then bound the Jensen-Shannon divergence of state occupancy.
    From Jensen inequality, we have that
    \begin{align*}  
        D_{\mathrm{KL}}&\left(\eta_t^E(s, a, \hat{\mu}), \frac{\eta_t^\pi(s, a, \hat{\mu})+\eta_t^E(s, a, \hat{\mu})}{2}\right)\\
        &=\mathbb{E}_{\eta_t^E(s, a, \hat{\mu})}\log\frac{2\eta_t^E(s, a, \hat{\mu})}{\eta_t^\pi(s, a, \hat{\mu})+\eta_t^E(s, a, \hat{\mu})}\\
        &=\sum_{s, a, \hat{\mu}}\eta_t^E(s, a, \hat{\mu})\log\frac{2\eta_t^E(s, a, \hat{\mu})}{\eta_t^\pi(s, a, \hat{\mu})+\eta_t^E(s, a, \hat{\mu})}\\
        &=\sum_{s\in\mathcal{S}}\eta_t^E(s, \hat{\mu})\sum_{a\in\mathcal{A}}\frac{\eta_t^E(s, a, \hat{\mu})}{\eta_t^E(s, \hat{\mu})}\log\frac{2\eta_t^E(s, a, \hat{\mu})}{\eta_t^\pi(s, a, \hat{\mu})+\eta_t^E(s, a, \hat{\mu})}\\
        &\ge\sum_{s\in\mathcal{S}}\eta_t^E(s, \hat{\mu})\log\frac{2\eta_t^E(s, \hat{\mu})}{\sum_{a\in\mathcal{A}}\eta_t^\pi(s, a, \hat{\mu})+\eta_t^E(s, \hat{\mu})}\\
        &=\sum_{s\in\mathcal{S}}\eta_t^E(s, \hat{\mu})\log\frac{2\eta_t^E(s, \hat{\mu})}{\eta_t^\pi(s, \hat{\mu})+\eta_t^E(s, \hat{\mu})}\\
        &=D_{\mathrm{KL}}\left(\eta_t^E(s, \hat{\mu}), \frac{\eta_t^\pi(s, \hat{\mu})+\eta_t^E(s, \hat{\mu})}{2}\right)
    \end{align*}
    Similarly, we have
    \begin{align*}
        D_{\mathrm{KL}}&\left(\eta_t^\pi(s, a, \hat{\mu}), \frac{\eta_t^\pi(s, a, \hat{\mu})+\eta_t^E(s, a, \hat{\mu})}{2}\right)\nonumber\\\ge& D_{\mathrm{KL}}\left(\eta_t^\pi(s, \hat{\mu}), \frac{\eta_t^\pi(s, \hat{\mu})+\eta_t^E(s, \hat{\mu})}{2}\right).
    \end{align*}
    Therefore, the Jensen-Shannon divergence of state occupancy is bounded by
    \begin{align*}
        \sum_{t=0}^{T}\gamma^tD_{\mathrm{JS}}(\eta_t^E(s, \hat{\mu}), \eta_t^\pi(s, \hat{\mu}))\le \sum_{t=0}^{T}\gamma^tD_{\mathrm{JS}}(\eta_t^E(s, a, \hat{\mu}), \eta_t^\pi(s, a, \hat{\mu}))\le\epsilon.
    \end{align*}
    Similarly, we can derive that
    \begin{align*}
        \sum_{t=0}^{T}\gamma^t\left\|\eta_t^E(s, \hat{\mu})-\eta_t^\pi(s, \hat{\mu})\right\|_1\le2\sqrt{2\epsilon T}.
    \end{align*}
    We define the $\pmb{\mu}'$ as $\mu_t'=\Phi(\mu_{t-1}', \pi_{t-1}^E, z_{t-1})$.
    Therefore, we have
    \begin{align*}
        &\left\|J(\pmb{\pi}, \pmb{\pi}, \pmb{\rho}^E)-J(\pmb{\pi}^E, \pmb{\pi}^E, \pmb{\rho}^E)\right\|_1=\\ &\left\|\mathbb{E}_{\pmb{\pi}, \pmb{\pi}, \pmb{\rho}^E}\left[\sum_{t=0}^T\gamma^t r(s_t, a_t, \mu_t)\right]-\mathbb{E}_{\pmb{\pi}^E, \pmb{\pi}^E, \pmb{\rho}^E}\left[\sum_{t=0}^T\gamma^t r(s_t, a_t, \mu_t')\right]\right\|_1\\
        \le &\left\|\mathbb{E}_{\pmb{\pi}, \pmb{\pi}, \pmb{\rho}^E}\left[\sum_{t=0}^T\gamma^t \left(r(s_t, a_t, \mu_t)-r(s_t, a_t, \mu_t')\right)\right]\right\|_1+\\
        &\left\|\mathbb{E}_{\pmb{\pi}, \pmb{\pi}, \pmb{\rho}^E}\left[\sum_{t=0}^T\gamma^t r(s_t, a_t, \mu_t')\right]-\mathbb{E}_{\pmb{\pi}^E, \pmb{\pi}^E, \pmb{\rho}^E}\left[\sum_{t=0}^T\gamma^t r(s_t, a_t, \mu_t')\right]\right\|_1\\
        \le &2L_R\sqrt{2\epsilon T}+2r_{\max}\sqrt{2\epsilon T}\\
        \le &2(L_R+r_{\max})\sqrt{2\epsilon T}.
    \end{align*}
    From \cref{lemma: simulation}, we have
    \begin{align*}
        &\left\|\mathcal{R}(a_{0:T}, \pmb{\pi}, \pmb{\rho}^E)-\mathcal{R}(a_{0:T}, \pmb{\pi}^E, \pmb{\rho}^E)\right\|_1\\&\le\left\|\mathbb{E}_{\pmb{\pi}, \pmb{\pi}, \pmb{\rho}^E}\Big[\sum_{t=0}^{T} \gamma^t r(s_t, a_t, \mu_{t})\Big|a_{0:T}\Big]\right.\\
        &\quad\left.-\mathbb{E}_{\pmb{\pi}^E, \pmb{\pi}^E, \pmb{\rho}^E}\Big[\sum_{t=0}^{T} \gamma^t r(s_t, a_t, \mu_{t})\Big|a_{0:T}\Big]\right\|_1\\
        &\quad+\left\|J(\pmb{\pi}, \pmb{\pi}, \pmb{\rho}^E)-J(\pmb{\pi}^E, \pmb{\pi}^E, \pmb{\rho}^E)\right\|_1\\
        &\le 2\left(2L_R+r_{\max}+\gamma TL_Pr_{\max}\right)\sqrt{2\epsilon T}.
    \end{align*}
    Since $\mathcal{R}(a_{0:T}, \pmb{\pi}^E, \pmb{\rho}^E)\le0$, we have
    \begin{align*}
        \mathcal{R}(a_{0:T}, \pmb{\pi}, \pmb{\rho}^E)\le&\left\|\mathcal{R}(a_{0:T}, \pmb{\pi}, \pmb{\rho}^E)-\mathcal{R}(a_{0:T}, \pmb{\pi}^E, \pmb{\rho}^E)\right\|_1\\\le& 2\left(2L_R+r_{\max}+\gamma TL_Pr_{\max}\right)\sqrt{2\epsilon T}.
    \end{align*}
\end{proof}
\begin{lemma}\label{lemma: simulation}
    \begin{align*}
        \Bigg\|\mathbb{E}_{\pmb{\pi}, \pmb{\pi}, \pmb{\rho}^E}&\Big[\sum_{t=k}^{T} \gamma^t r(s_t, a_t, \mu_{t})\Big|a_{k:T}\Big]\\-&\mathbb{E}_{\pmb{\pi}^E, \pmb{\pi}^E, \pmb{\rho}^E}\Big[\sum_{t=k}^{T} \gamma^t r(s_t, a_t, \mu_{t})\Big|a_{k:T}\Big]\Bigg\|_1\\\le& (L_R+\gamma TL_Pr_{\max})\sum_{t=k}^T\gamma^{t-k}\|\eta_t^\pi-\eta_t^{E}\|_1.
    \end{align*}
\end{lemma}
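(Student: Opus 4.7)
The plan is to bound the difference term-wise in time and then sum, isolating two sources of discrepancy between the two conditional expectations: the mean field flow appearing as the third argument of $r$, and the conditional state distribution of the representative agent. Write $\mu_t^{\pi}$ and $\mu_t^{E}$ for the mean field flows induced by the populations following $\pmb{\pi}$ and $\pmb{\pi}^E$ along the \emph{same} signal sequence drawn from $\pmb{\rho}^E$; both are deterministic functions of $\hat{\mu}_t$.

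For each $t\in\{k,\ldots,T\}$, I would add and subtract $\mathbb{E}_{\pmb{\pi},\pmb{\pi},\pmb{\rho}^E}[r(s_t,a_t,\mu_t^{E})\mid a_{k:T}]$ to split the $t$-th summand into a reward-Lipschitz part $(A_t)$, where only the $\mu$-argument of $r$ changes, and a distribution-mismatch part $(B_t)$, where the same function $r(\cdot,a_t,\mu_t^{E})$ is averaged against two different conditional state marginals. Part $(A_t)$ is immediately controlled by $L_R\,\mathbb{E}[\|\mu_t^{\pi}-\mu_t^{E}\|_1]$. The key identity is that, by homogeneity, the representative agent's state distribution given the signal signature coincides with the population distribution, so $\eta_t^{\pi}(s,\hat{\mu})=P_{\pmb{\rho}^E}(\hat{\mu})\,\mu_t^{\pi}(s\mid\hat{\mu})$, and likewise for $\eta_t^{E}$. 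Summing over $(s,\hat{\mu})$ collapses the mean-field-averaged total variation into $\|\eta_t^{\pi}-\eta_t^{E}\|_1$, yielding the $L_R$ contribution to the bound once multiplied by $\gamma^{t}\le\gamma^{t-k}$.

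For part $(B_t)$, I would bound its absolute value by $r_{\max}$ times the total variation between the conditional state marginals $p_t^{\pi}(\cdot\mid\hat{\mu},a_{k:T})$ and $p_t^{E}(\cdot\mid\hat{\mu},a_{k:T})$. Since the actions $a_{k:T}$ are fixed by the conditioning, the two marginals evolve via the same action sequence but different Lipschitz-perturbed transition kernels, giving the recursion
\begin{equation*}
\|p_{\tau+1}^{\pi}-p_{\tau+1}^{E}\|_1\le\|p_\tau^{\pi}-p_\tau^{E}\|_1+L_P\|\mu_\tau^{\pi}-\mu_\tau^{E}\|_1,
\end{equation*}
with base case $\|p_k^{\pi}-p_k^{E}\|_1=\|\mu_k^{\pi}-\mu_k^{E}\|_1$. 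Unrolling, taking expectations over the signal signature, invoking the identity from the previous paragraph, and swapping the order of summation (each fixed $\tau$ appearing in at most $T$ outer summands with a geometric $\gamma$-weight) produces the $\gamma T L_P r_{\max}$ term weighted by $\sum_{t=k}^T\gamma^{t-k}\|\eta_t^{\pi}-\eta_t^{E}\|_1$. Adding the contributions of $(A_t)$ and $(B_t)$ gives the claimed inequality.

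The main obstacle will be rigorously justifying that conditioning on the fixed action sequence $a_{k:T}$ preserves the Markov structure used in the state-marginal recursion: specifically, that the distribution of $s_k$ under each expectation equals $\mu_k^{\pi}$ (resp.\ $\mu_k^{E}$) because the fixed actions from time $k$ onward do not alter the already-sampled state at time $k$, and that propagating forward with deterministic actions keeps $L_P$ as the only Lipschitz factor at each step. Once this bookkeeping is in place, the factor $T$ arises naturally from the sum swap rather than from any exponential blow-up, matching the bound in the stated inequality.
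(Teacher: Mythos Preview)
Your proposal is sound and reaches the stated bound, but it is organized differently from the paper. The paper argues by backward induction on $k$: at $k=T$ the claim is declared immediate, and for the inductive step one peels off the time-$k$ contribution, bounding the reward difference by $L_R\|\eta_k^E-\eta_k^\pi\|_1$ and the one-step transition perturbation by $L_P\|\eta_k^E-\eta_k^\pi\|_1$ times a crude $\gamma T r_{\max}$ bound on the tail, then invokes the inductive hypothesis for $k+1,\ldots,T$. Your forward, term-by-term decomposition into $(A_t)$ and $(B_t)$ together with the explicit recursion on the conditional state marginals $p_t^\pi,p_t^E$ is equivalent after the sum-swap you describe; its advantage is that the origin of the factor $T$ is completely transparent (each $\tau$ appears in at most $T$ outer summands), whereas the paper hides it inside the one-line bound $\|\mathbb{E}[\text{tail}]\|\le \gamma T r_{\max}$. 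Both routes gloss over exactly the bookkeeping you flag as the main obstacle, namely that the state marginal at time $k$ already differs by $\|\mu_k^\pi-\mu_k^E\|_1$ before any transition Lipschitz factor enters; the paper's proof is no more careful about this than your sketch.
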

\begin{proof}
    At the step $k = T$, this is clearly true since the two value functions only differ in the reward at the final step.
    For the inductive step, we have
    \begin{align*}
        &\bigg\|\mathbb{E}_{\pmb{\pi}, \pmb{\pi}, \pmb{\rho}^E}\Big[\sum_{t=k}^{T} \gamma^t r(s_t, a_t, \mu_{t})\Big|a_{k:T}\Big]\\
        &\quad\quad\quad-\mathbb{E}_{\pmb{\pi}^E, \pmb{\pi}^E, \pmb{\rho}^E}\Big[\sum_{t=k}^{T} \gamma^t r(s_t, a_t, \mu_{t})\Big|a_{k:T}\Big]\bigg\|_1\\
        =&\Bigg\|\mathbb{E}_{\pmb{\pi}, \pmb{\pi}, \pmb{\rho}^E}\Big[r(s_k, a_k, \mu_{k})\\
        &+\gamma\mathbb{E}_{\pmb{\pi}, \pmb{\pi}, \pmb{\rho}^E}\Big[\sum_{t=k+1}^{T} \gamma^{t-k} r(s_t, a_t, \mu_{t})\Big|a_{k+1:T}\Big]\Big|a_{k:T}\Big]\\
        &-\mathbb{E}_{\pmb{\pi}^E, \pmb{\pi}^E, \pmb{\rho}^E}\Big[r(s_k, a_k, \mu_{k})\\
        &+\gamma\mathbb{E}_{\pmb{\pi}^E, \pmb{\pi}^E, \pmb{\rho}^E}\Big[\sum_{t=k+1}^{T} \gamma^{t-k} r(s_t, a_t, \mu_{t})\Big|a_{k+1:T}\Big]\Big|a_{k:T}\Big]\Bigg\|_1\\
        \le&\left\|r(s_k, a_k, \eta_{t}^E)-r(s_k, a_k, \eta_{t}^\pi)\right\|_1\\
        &+\left\|P(s_{t+1}|s_t, a_t, \eta_{t}^E)-P(s_{t+1}|s_t, a_t, \eta_{t}^\pi)\right\|_1\\
        &\left\|\mathbb{E}_{\pmb{\pi}^E, \pmb{\pi}^E, \pmb{\rho}^E}\Big[\sum_{t=k+1}^{T} \gamma^{t-k} r(s_t, a_t, \mu_{t})\Big|a_{k+1:T}\Big]\right\|_1\\
        &+\gamma\Bigg\|\mathbb{E}_{\pmb{\pi}, \pmb{\pi}, \pmb{\rho}^E}\Big[\sum_{t=k+1}^{T} \gamma^{t-k} r(s_t, a_t, \mu_{t})\Big|a_{k+1:T}\Big]\\
        &-\mathbb{E}_{\pmb{\pi}^E, \pmb{\pi}^E, \pmb{\rho}^E}\Big[\sum_{t=k+1}^{T} \gamma^{t-k} r(s_t, a_t, \mu_{t})\Big|a_{k+1:T}\Big]\Bigg\|_1\\
        \le& L_R\|\eta_k^E-\eta_k^\pi\|_1+\gamma Tr_{\max}L_P\|\eta_k^E-\eta_k^\pi\|_1\\
        &\quad+\gamma\Bigg\|\mathbb{E}_{\pmb{\pi}, \pmb{\pi}, \pmb{\rho}^E}\Big[\sum_{t=k+1}^{T} \gamma^{t-k} r(s_t, a_t, \mu_{t})\Big|a_{k+1:T}\Big]\\
        &\quad-\mathbb{E}_{\pmb{\pi}^E, \pmb{\pi}^E, \pmb{\rho}^E}\Big[\sum_{t=k+1}^{T} \gamma^{t-k} r(s_t, a_t, \mu_{t})\Big|a_{k+1:T}\Big]\Bigg\|_1\\
        \le& L_R\|\eta_k^E-\eta_k^\pi\|_1\\
        &+\gamma Tr_{\max}L_P\|\eta_k^E-\eta_k^\pi\|_k+(L_R+\gamma L_PTr_{\max})\sum_{t=k+1}^T\gamma^{t-k}\|\eta_t^\pi-\eta_t^{E}\|_1\\
        \le&(L_R+\gamma TL_Pr_{\max})\sum_{t=k}^T\gamma^{t-k}\|\eta_t^\pi-\eta_t^{E}\|_1
    \end{align*}
\end{proof}
\subsection{Proof of the \cref{corr: imitation-gap}}\label{prf: imitation-gap}
{\imitationgap*
\begin{proof}
    We denote the optimal policy $\pmb{\pi}' = \arg\max_{\hat{\pmb{\pi}}}J(\hat{\pmb{\pi}}, \pmb{\pi}, \pmb{\rho}^E)$ when the population follows the recovered policy $\pmb{\pi}$ and correlation device $\pmb{\rho}^E$.
    
    We have
    \begin{align*}
        &J(\pmb{\pi}', \pmb{\pi}, \pmb{\rho}^E) - J(\pmb{\pi}, \pmb{\pi}, \pmb{\rho}^E)\\&\le \left\|\mathbb{E}_{\pmb{\pi}', \pmb{\pi}, \pmb{\rho}^E}\bigg[\sum_{t=0}^{T}\gamma^tr(s_t, a_t, \mu_t)\bigg]-\mathbb{E}_{\pmb{\pi}', \pmb{\pi}, \pmb{\rho}^E}\bigg[\sum_{t=0}^{T}\gamma^tr(s_t, a_t, \mu_t')\bigg]\right\|_1\\
        &\quad+\left\|\mathbb{E}_{\pmb{\pi}', \pmb{\pi}, \pmb{\rho}^E}\bigg[\sum_{t=0}^{T}\gamma^tr(s_t, a_t, \mu_t')\bigg]-\mathbb{E}_{\pmb{\pi}', \pmb{\pi}^E, \pmb{\rho}^E}\bigg[\sum_{t=0}^{T}\gamma^tr(s_t, a_t, \mu_t')\bigg]\right\|_1\\
        &\quad+\mathbb{E}_{\pmb{\pi}', \pmb{\pi}^E, \pmb{\rho}^E}\bigg[\sum_{t=0}^{T}\gamma^tr(s_t, a_t, \mu_t')\bigg]-J(\pmb{\pi}^E, \pmb{\pi}^E, \pmb{\rho}^E)\\
        &\quad+J(\pmb{\pi}^E, \pmb{\pi}^E, \pmb{\rho}^E)-J(\pmb{\pi}, \pmb{\pi}, \pmb{\rho}^E)\\
        &\le 2L_R\sqrt{2\epsilon T}+2(L_R+\gamma TL_Pr_{\max})\sqrt{2\epsilon T}+0+2(L_R+r_{\max})\sqrt{2\epsilon T}\\
        &=2(3L_R+\gamma TL_Pr_{\max}+r_{\max})\sqrt{2\epsilon T},
    \end{align*}
    where $\mu_t'=\Phi(\mu_{t-1}', \pi_{t-1}^E, z_{t-1})$.
\end{proof}}
\section{The camparison between AMFCE and common noise}\label{sec:common_noise}
In this subsection, we compare the AMFCE with the common noise equilibrium.
In the context of MFG with common noise, the optimal policy aims to maximize the expected return under the {\bf prior} distribution of common noise, such that
\begin{equation*}
    Q_{n}^{\pi,\mu}(x,u(a)|\Xi_{n})-Q_{n}^{\pi,\mu}(x,a|\Xi_{n})\leq 0,
\end{equation*}
where $u(a)\in\mathcal{A}$ is the modified action and the Q function is defined as following
\begin{equation*}
    \begin{aligned}
        Q_N^{\pi,\mu}(x,a|\Xi_N)=&r(x,a,\mu_{N|\Xi_N},\xi_N),    
    \end{aligned}
\end{equation*}
\begin{equation*}
    \begin{aligned}
        &Q_{n-1}^{\pi,\mu}(x,a|\Xi_{n-1})=\underbrace{\sum_{\xi}P(\xi_{n-1}=\xi|\Xi_{n-1})}\Big[ r(x,a,\mu_{n-1,\Xi_{n-1}},\xi)\\&+\sum_{x^{\prime}\in\mathcal{X}}p(x^{\prime}|x,a,\xi)E_{\pmb{b}\sim\pi_n(\cdot|x^{\prime},\Xi_{n-1}\cdot\xi)}\left[Q_n^{\pi,\mu}(x^{\prime},b|\Xi_{n-1}.\xi)\right]\Big],
    \end{aligned}
\end{equation*}

The equilibrium of common noise is under the framework of Nash equilibrium.

In contrast, the AMFCE framework aims to maximize the expectation under the {\bf posterior} distribution of correlated signal $z$ of the Q-function corresponding to the recommended action $a$, as expressed by:

$$\underbrace{\sum_{z}\frac{\rho_t(z)\pi_t(a|s, z)}{\sum_{a}\rho_t(z)\pi_t(a|s, z)}}[Q_t^{\pmb{\pi}}( s, u(a) , \mu, z; \pmb{\pi})-Q_t^{\pmb{\pi}}(s,a,\mu,z;\pmb{\pi})]\leq0.$$

To illustrate the difference between AMFCE and MFNE with common noise, consider a mean field game $\mathcal{G}$. In $\mathcal{G}$, the state space $S=\{C,L,R\}$, and the action space $A=\{L,R\}$. The initial mean field $\mu_0(C)=1$, and the reward function is defined as $r(s,a,\mu) = 1_{\{ s= L\} }\mu( L) + 1_{\{ s= R\} }\mu( R)$. The environment dynamics are deterministic: $P(s_{t+1}=R|s_t=\cdot,a=R)=1$ and $P(s_{t+1}=L|s_t=\cdot,a=L)=1$. Correlated signals are sampled from the space $\mathcal{Z}=\{0, 1\}$ with equal probability $\rho(z=0)=\rho(z=1)=0.5$.
In this scenario, the policy $\pi(a=L|s=\cdot, z=0)=\frac23$ and $\pi(a=L|s=\cdot, z=1)=\frac13$ constitute an AMFCE but not a MFNE with common noise.
Specifically, the policies $\pi(a=L|s=\cdot, z=0)=1, 0, \frac12$ and $\pi(a=L|s=\cdot, z=1)=1, 0, \frac12$ constitute MFNE with common noise, while all of them are also AMFCE.

\section{Experiment detail}\label{Detail}
The experiments were run on the server with AMD EPYC 7742 64-Core Processor and NVIDIA A100 40GB.

Due to the instability nature of generative adversarial networks (GANs) \cite{DBLP:conf/iclr/ArjovskyB17, DBLP:conf/icml/MeschederGN18}, the convergence of Algorithm \ref{algo} may not be not guaranteed. To address this issue, we integrated the gradient penalty into the objective function of MFCIL to stabilize the training of policy $\boldsymbol{\pi}$. It has been proven that GAN training with zero-centered will enhance the training stability \cite{DBLP:conf/icml/MeschederGN18}.
To provide a fair comparison, we used Actor-Critic (AC) algorithm for both MFCIL, MFAIRL, and MFIRL.
The input of AC is an extended state, a concatenation of state, action, time step, and signature.
The input of the discriminator is the extended state and the action.
We did not use signature in the Sequential Squeeze with $\mathcal{T}=\{0, 1\}$ and RPS because signature requires the length of sequential data is larger than 1.
For games with the sequential setting, the depth of truncated signature is 3.
For actor and critic networks of AC, we adopt two-layer perceptrons with the Adam optimizer and the ReLU activation function.
For the network of the discriminator, we adopt three-layer perceptrons with Adam optimizer.
The activation functions between layers are Leaky ReLU, while the activation function of output is the sigmoid activation function. The setting of main hyperparameters is shown in \tableautorefname~\ref{hyper}. 
% Further details can be found in the code provided in the supplement.
\begin{table}[t]
    \centering
    \caption{The hyperparameters in the experiment}
    \begin{tabular}{ll}
    \toprule
    hyperparameters                         & value  \\ \midrule
    hidden size of actor network            & 256    \\
    hidden size of critic network           & 256    \\
    hidden size of discriminator network    & 128 \\ \bottomrule
    \end{tabular}
    \label{hyper}
\end{table}
\subsection{Tasks}
\paragraph{Squeeze} We present a discrete version of this problem. The state space is $\mathcal{S}=\{0, 1, 2\}$. Let $\mathcal{A}=\{0, 1\}$ denote the action space. The horizon of the environment is 3. The initial population state distribution is $\mu_0(s=2)=1$. The dynamic of the environment is given by:
\begin{align*}
    &P(s_{t+1}=1\mid s_t = \cdot, a=1) = \frac{3}{4},
    &P(s_{t+1}=0\mid s_t = \cdot, a=1) = \frac{1}{4}, \\
    &P(s_{t+1}=1\mid s_t = \cdot, a=0) = \frac{1}{4},
    &P(s_{t+1}=0\mid s_t=\cdot, a=0) = \frac{3}{4}
\end{align*}
The reward function is $$r(s, a, \mu)=\mathds{1}_{\{s=L\}}\mu(L)+\mathds{1}_{\{s=R\}}\mu(R).$$
\paragraph{RPS} The dynamic of RPS is deterministic:
\begin{align}
    P(s_{t+1}\mid s_t, a_t, \mu_t) = \mathds{1}_{s_{t+1}=a_t}
\end{align}
The state space $\mathcal{S}=\{C, R, P, S\}$ and the action space $\mathcal{A}=\{R, P, S\}$.
At the beginning of the game, all the agents are in the state $C$.
The reward function is shown in the following
\begin{align*}
    r(R, a, \mu_t) = 2\cdot\mu_t(S) - 1\cdot\mu_t(P)\\
    r(P, a, \mu_t) = 4\cdot\mu_t(R) - 2\cdot\mu_t(S)\\
    r(S, a, \mu_t) = 2\cdot\mu_t(P) - 1\cdot\mu_t(R)
\end{align*}
\paragraph{Flock} In nature, fish spontaneously align their velocity according to the overall movement of the fish school, resulting in a stable movement velocity for the entire school. We simplify this setting by defining a new dynamic as follows:
\begin{align*}
    x_{t+1}=x_{t}+v_{t} \Delta t
\end{align*} 
The action space $\mathcal{A}=\{0, 1, 2, 3\}$ corresponding to four directions of velocity with unit speed.
The reward is 
\begin{align*}
    f_{\beta}^{\mathrm{flock}}(x, v, u, \mu)=-\left\|\int_{\mathbb{R}^{2 d}} \left(v-v^{\prime}\right)\mathrm{d} \mu\left(x^{\prime}, v^{\prime}\right)\right\|^{2}
\end{align*}
\section{Comparison with MFCE derived by Muller et al.}\label{sec:comparison to muller}
In this section, We use the absent-minded driver game \cite{DBLP:conf/tark/PiccioneR96} to show the difference between AMFCE and the MFCE framework proposed by Muller et al. \cite{DBLP:journals/corr/abs-2208-10138}. Their notion of MFCE assumes that the mediator selects a mixed policy for the population and then sample a deterministic policy from the mixed policy and recommends to every agent, while our AMFCE framework assumes that the mediator selects a behavioral policy for the population at every time step and samples an action for every agent as recommendation. If agents are of bounded rationality, the mixed policy is not equivalent to the behavioral policy. 
\begin{example}\label{eg:example2}
    Suppose that the absent-minded driver game has two time steps. At the initial time, all the agents stay in state $s_1$. 
    The agent will stay in the state $s_1$ if action $B$ is chosen and the current population state distribution $\mu(s_1)=1$. If action $E$ is chosen, the agent will move to state $s_2$. If the agent enter the state $s_2$, the agent will stay in $s_2$ until the ending of the game. The reward function is
    \begin{align*}
        r(s, a, \mu)=\left\{\begin{array}{cc}
            3(1-\mu(s_1)), & a=E, s=s_1 \\
            \frac{1}{2}, & a=B, s=s_1, \mu=\cdot \\
            0, & otherwise
        \end{array}\right..
    \end{align*}

    Consider the case where the agents cannot remember the time step and the history. 
    % {[then is this still a Markov game?]}
    The agent does not choose to take the deterministic policy of action $E$ at $s'$ because the policy makes the final payoff 0. 
    % {[what do you mean by ``acceptable?'' admissible?]}
    So the only MFCE policy in the game is the  deterministic policy to take action $B$ in any state, which has a final payoff of 1.

    On the other hand, we can find a possible AMFCE shown in the \tableautorefname~\ref{adaptive}. The agents will choose action $E$ if it is recommended.
\begin{table}[t]
\centering

\begin{tabular}{ccccc}
\toprule
Equilirbrium & \multicolumn{4}{c}{MFCE}                                              \\ \hline
Distribution & \multicolumn{2}{c}{$\pi(B|s',z=0)$} & \multicolumn{2}{c}{$\rho(z=0)$} \\ \hline
Value        & \multicolumn{2}{c}{1}               & \multicolumn{2}{c}{1}           \\ \hline
Equilirbrium & \multicolumn{4}{c}{AMFCE}                                             \\ \hline
Distribution & $\pi(B|s',z=0)$  & $\pi(B|s',z=1)$  & $\rho(z=0)$    & $\rho(z=1)$    \\ \hline
Value        & 1/2              & 1                & 1/2            & 1/2            \\ \bottomrule
\end{tabular}

\caption{The only MFCE and a possible AMFCE in the absent-minded driver game.}
\label{adaptive}
\end{table}
\end{example}

Example \ref{eg:example2} suggests that AMFCE has larger policy space than the MFCE proposed by Muller et al. \cite{DBLP:journals/corr/abs-2208-10138} because AMFCE assumes that the correlated signal sampled by the mediator corresponds to a behavioral policy.
\section{The Policy Set of MFCE Derived by Campi and Fisher.}\label{sec: policy_set}
Under the MFCE setting, each agent observes the full correlated signal sequence $\mathbf{z}$ and the recommended policy $\pmb{\pi}$ for the population.

Given $\pi$ and $\mathbf{z}$, each agent can compute the population distribution $\pmb{\mu}$ over the population's states and actions at time.

With knowledge of $\pmb{\mu}$, the agent can determine the expected utility for any action $a_t$ they might take at time $t$, given their state $s_t$.

The agent's goal is to maximize their expected cumulative return, which depends on both their own actions and the population distribution.

Therefore, the policy taken by the agent satisfies that $V_t\left(s,\pmb{\pi},{\color{black}\pmb{\mu}}\right)\geq V_t\left(s,\pmb{\pi}',\pmb{\mu}\right).$ for any policy $\pmb{\pi}'$, any time index $t\in \mathcal{T}$, and any initial state $s\sim \mu_0$.

Therefore, the policy set of MFCE is equivalent to MFNE policy set.
% \begin{equation}
%     \pmb{\pi} \in \mathop{\arg\max}_{\pmb{\pi}'}\mathbb{E}_{\pmb{\pi}', \pmb{\pi}}\bigg[\sum_{t=0}^{T}\gamma^tr(s_t, a_t, \mu_t)\bigg],
% \end{equation}
% where the expectation is taken with respect to $s_t\sim P(\cdot|s_{t-1}, a_{t-1}, \mu_{t-1})$, $a_t\sim \pi_t'(\cdot|s_t, \mathbf{z})$ and $\mu_t=\Phi(\mu_{t-1}, \pi_{t-1}, \mathbf{z})$.
\section{Characterization of population distribution}\label{sec: def-sig}
\subsection{Introduction of signature}
\begin{definition}[Signature]
    Let $\mathbf{x}=\{x_1, \dots, x_L\}$ with $x_i\in\mathbb{R}^d$, for all $i$ and $L\ge2$. Denote $f:[0, 1]\to\mathbb{R}^{d}$ to be the continuous piecewise affine function such that $f(\frac{i-1}{L-1})=x_i$, $\forall i\in\{1, 2, \dots, L\}$.

    \begin{align}\label{eq:signature}
        \mathrm{Sig}(f)_{0, 1}=\left(1, M_1, \cdots, M_n, \ldots\right),
    \end{align}
where $M_n=\int_{s<s_{1}<\cdots<s_{n}<t}  \frac{\mathrm{d}f}{\mathrm{d}t}(s_1) \otimes \cdots \otimes \frac{\mathrm{d}f}{\mathrm{d}t}(s_n)\mathrm{d}t_1\cdots \mathrm{d}t_n$.

The signature of the path $\mathbf{x}$ is defined to be $\mathrm{Sig}(f)_{0,1}$, denoted as $\mathrm{Sig}(\mathbf{x})$.
\end{definition}
Signature of sequential data includes infinite terms as shown in the \cref{eq:signature}, but fortunately, terms $M_{n}$ enjoy factorial decay. In practice we select the first $n$ terms of the signature without losing crucial information of the data \cite{DBLP:conf/nips/KidgerBASL19}.
\subsection{Characterize population distribution using the signature}

In our framework, the mean field flow adheres to the McKean-Vlasov equation,
$$ \mu_{t+1}(s)=\sum_{a\in\mathcal{A}}\sum_{s'\in\mathcal{S}}\mu_t(s')P(s|s', a, \mu_t)\pi_t^E(a|s', z_t).$$
Given the environment and expert demonstrations, the transition probability $P(s|s', a, \mu_t)$, initial population state distribution $\mu_t$ and expert policy $\pi_t^E(a|s', z_t)$ are fixed. Therefore, $\mu_{t}(s)$ is determined by $z_{0:t}$ or $\hat{\mu_t}$ . We denote it as $\mu_t=\phi(z_{0:t})=\phi'({\rm Sig}(z_{0:t}))$.

Thus, we utilize ${\rm Sig}(z_{0:t})$ as the feature of $\mu_t$ input into the discriminator, thereby bypassing the need to estimate $\mu_t$ from the samples and avoiding additional estimation errors \cite{ramponi2023on}.

The signature offers an efficient encoding method for the sequence $z_{0:t}$ into a vector with fixed length. It's important to note that this choice is primarily for convenience. We could substitute the signature with alternative techniques for handling sequential data, such as Recurrent Neural Networks.
% \section{Experiments results}
% \subsection{The results of predicted traffic flow for traffic network}
% \subsection{Learning curves of the correlation device}\label{sec:learning_curve}

%%%%%%%%%%%%%%%%%%%%%%%%%%%%%%%%%%%%%%%%%%%%%%%%%%%%%%%%%%%%%%%%%%%%%%%%

\end{document}